\algnewcommand\algorithmicinput{\textbf{Input:}}
\algnewcommand\algorithmicoutput{\textbf{Output:}}
\algnewcommand\Input{\item[\algorithmicinput]}
\algnewcommand\Output{\item[\algorithmicoutput]}
\newtheorem{hypothesis}[theorem]{Hypothesis}
\newcommand{\threefield}[3]{$#1\mid#2\mid#3$}
\newcommand{\commentout}[1]{}
\DeclarePairedDelimiterX{\abs}[1]{\lvert}{\rvert}{#1}
\newcommand{\Prob}{\threefield{1}{}{\sum w_jU_j}\xspace}
\newcommand{\SDProb}{\threefield{1}{d_j=d}{\sum w_jU_j}\xspace}
\newcommand{\UWProb}{\threefield{1}{}{\sum U_j}\xspace}
\newcommand{\SPProb}{\threefield{1}{p_j=p}{\sum w_jU_j}\xspace}
\newcommand{\PProb}{\threefield{1}{}{\sum p_jU_j}\xspace}
\newcommand{\appref}[1]{{\hyperref[proof:#1]{\appsymb}}}
\newcommand{\appLink}[1]{{\hyperref[#1]{\appsymb}}}
\title{Minimizing the Weighted Number of Tardy Jobs via (max,+)-Convolutions}
\author{Danny~Hermelin}{Department of Industrial Engineering and Management, Ben-Gurion~University~of~the~Negev, 
Beer-Sheva, 
Israel}{hermelin@bgu.ac.il}{}{}
\author{Hendrik~Molter}{Department of Industrial Engineering and Management, Ben-Gurion~University~of~the~Negev, 
Beer-Sheva, 
Israel}{molterh@post.bgu.ac.il}{}{}
\author{Dvir~Shabtay}{Department of Industrial Engineering and Management, Ben-Gurion~University~of~the~Negev, 
Beer-Sheva, 
Israel}{dvirs@bgu.ac.il}{}{}
\authorrunning{Danny Hermelin, Hendrik Molter, and Dvir Shabtay} 
\keywords{Weighted Number of Tardy Jobs, Single Machine Scheduling, Pseudo-polynomial Algorithms, Conditional Lower Bounds}
\begin{document}
\maketitle

\begin{abstract}
The \Prob problem asks to determine -- given $n$ jobs each with its own processing time, weight, and due date --  the minimum weighted number of tardy jobs in any single machine non-preemptive schedule for these jobs. This is a classical scheduling problem that generalizes both Knapsack, and Subset Sum. The best known pseudo-polynomial algorithm for \Prob, due to Lawler and Moore [Management Science'69], dates back to the late 60s and has a running time of $O(d_{\max}n)$, where $n$ is the number of jobs and $d_{\max}$ is their maximal due date. A recent lower bound by Cygan \emph{et al.}~[ICALP'19] for Knapsack shows that \Prob cannot be solved in $\widetilde{O}((n+d_{\max})^{2-\varepsilon})$ time, for any $\varepsilon > 0$, under a plausible conjecture. This still leaves a gap between the best known lower bound and upper bound for the problem. 

In this paper we design a new simple algorithm for \Prob that uses $(\max,+)$-convolutions as its main tool, and outperforms the Lawler and Moore algorithm under several parameter ranges. In particular, depending on the specific method of computing $(\max,+)$-convolutions, its running time can be bounded by
\begin{multicols}{2}
\begin{itemize}
\item $\widetilde{O}(n+d_{\#}d_{\max}^2)$.
\item $\widetilde{O}(d_{\#}n +d^2_{\#}d_{\max}w_{\max})$.
\item $\widetilde{O}(d_{\#}n +d_{\#}d_{\max}p_{\max})$.
\item $\widetilde{O}(n^2 +d_{\max}w^2_{\max})$.
\item $\widetilde{O}(n^2 + d_{\#}(d_{\max}w_{\max})^{1.5})$.
\item[\vspace{\fill}]
\end{itemize}
\end{multicols}
\noindent Here, $d_{\#}$ denotes the number of \emph{different} due dates in the instance, $p_{\max}$ denotes the maximum processing time of any job, and $w_{\max}$ denotes the maximum weight of any job. 

To obtain these running  times we adapt previously known $(\max,+)$-convolution algorithms to our setting, most notably the prediction technique by Bateni \emph{et al.}~[STOC'19], the techniques for concave vectors by Axiotis and Tzamos~[ICALP'19], and the techniques for bounded monotone vectors by Chi \emph{et al.}~[STOC'22]. Moreover, to complement the $\widetilde{O}((n+d_{\max})^{2-\varepsilon})$ lower bound of Cygan \emph{et al.}, we show that \Prob is also unlikely to admit an algorithm with running time $\widetilde{O}(n^{O(1)}+d_{\max}^{1-\varepsilon})$, for any $\varepsilon > 0$. 
\end{abstract}

\clearpage

\section{Introduction}
\label{sec:intro}%

One of the most fundamental problems in the area of scheduling is the problem of non-preemptively scheduling a set of jobs on a single machine so as to minimize the weighted number of tardy jobs. In this problem, we are given a set of $n$ jobs $J=\{1,\ldots,n\}$, where each job~$j$ has a \emph{processing time} $p_j \in \mathbb{N}$, a \emph{weight} $w_j \in \mathbb{N}$, and a \emph{due date} $d_j \in \mathbb{N}$. A \emph{schedule}~$\sigma$ for $J$ is a permutation $\sigma: \{1,\ldots,n\} \to \{1,\ldots,n\}$ specifying the processing order of  the jobs. In a given schedule~$\sigma$, the \emph{completion time} $C_j$ of a job $j$ under $\sigma$ is  $C_j = \sum_{\sigma(i) \leq \sigma(j)} p_i$; that is, the total processing time of jobs preceding $j$ in $\sigma$ (including $j$ itself). Job $j$ is \emph{tardy} in $\sigma$ if $C_j > d_j$, and \emph{early} otherwise. Our goal is to find a schedule where the total weight of tardy jobs is minimized. Following the standard three field notation of Graham~\cite{Graham1969}, we use \Prob to denote this problem throughout the paper.

The \Prob problem models a very basic and natural scheduling scenario, and is thus very important in practice. However, it also plays a prominent theoretical role, most notably in the theory of scheduling algorithms, being one of the first scheduling problems shown to be NP-hard~\cite{Karp72} and have a fully polynomial time approximation scheme (FPTAS)~\cite{Sahni76}. Moreover, the special case of \Prob where jobs have a single common due date is essentially equivalent to the Knapsack problem, and if furthermore the processing time of each job equals its weight, then the problem becomes Subset Sum~\cite{Karp72}. Thus, \Prob generalizes two of the most basic problems in combinatorial optimization, and so practically any standard textbook on scheduling (\emph{e.g.}~\cite{Brucker2006,Pinedo2008}) devotes some sections to this problem and its variants. We refer the reader also to a relatively recent survey by Adamu and Adewumi~\cite{AdamuAdewumi2014} for further background on the problem.


As \Prob generalizes both Knapsack and Subset Sum, it is NP-hard, but only \emph{weakly} NP-hard. This means that it admits pseudo-polynomial time algorithms, algorithms with polynomial running time when the input numbers are encoded in unary. The famous Lawler and Moore algorithm~\cite{LawlerMoore} for \Prob dates way back to the late 60s, extending Bellman's classical Knapsack algorithm~\cite{bellman1957dynamic} from the late 50s, is a prime example of a pseudo-polynomial time scheduling algorithm. It has a running time of $O(d_{\max} \cdot n)$, where $d_{\max}=\max_j d_j$ is the maximum due date among all jobs in the instance. 
\begin{theorem}[\cite{LawlerMoore}]
\label{thm:LawlerMoore}%
\Prob can be solved in $O(d_{\max} \cdot n)$.
\end{theorem}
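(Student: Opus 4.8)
The plan is to prove \cref{thm:LawlerMoore} by reducing it, via a classical exchange argument, to a subset-selection problem, and then solving that problem with a dynamic program whose table has $O(n \cdot d_{\max})$ cells.

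\textbf{Structural reduction.} First I would establish that some optimal schedule has a canonical shape: all early jobs precede all tardy jobs, and the early jobs are ordered by nondecreasing due date (EDD). This follows from two exchanges. Given any optimal schedule, push every tardy job to the end; then completion times of early jobs only decrease (so they stay early) and completion times of tardy jobs only increase (so they stay tardy), hence the objective does not increase. Next, whenever two \emph{adjacent} early jobs $i,j$ with $i$ immediately before $j$ satisfy $d_i > d_j$, swap them: all other completion times are unchanged, job $j$ finishes earlier than before (still early), and job $i$ now finishes at the old completion time of $j$, which is at most $d_j < d_i$ (still early). Iterating this bubble-sort step reorders the early jobs into EDD order without changing the tardy set. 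Consequently it suffices to pick a subset $E \subseteq J$ to be scheduled early in EDD order so that each job of $E$ meets its due date, and to minimize $\sum_{j \notin E} w_j$.

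\textbf{Dynamic program.} Relabel jobs so that $d_1 \le \cdots \le d_n$. For $0 \le j \le n$ and $0 \le t \le d_{\max}$, define $F_j(t)$ as the minimum of $\sum_{i \le j,\, i \notin E} w_i$ over subsets $E \subseteq \{1,\ldots,j\}$ with $\sum_{i \in E} p_i = t$ that are feasible when scheduled in index order, and $F_j(t) = \infty$ if no such $E$ exists. Then $F_0(0)=0$, $F_0(t)=\infty$ for $t>0$, and for $j \ge 1$ I would use
\[
F_j(t) = \min\Bigl\{\, F_{j-1}(t) + w_j,\ \ F_{j-1}(t - p_j) \text{ if } t \le d_j \,\Bigr\}.
\]
The first option makes $j$ tardy; the second makes $j$ early, in which case (since $d_j$ is maximal among $d_1,\ldots,d_j$) we may place $j$ last among the selected jobs, so it completes exactly at time $t$ and we need $t \le d_j$, while the earlier selected jobs, unaffected by appending $j$, contribute $F_{j-1}(t-p_j)$. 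Correctness is by a straightforward induction on $j$, using the reduction above. The optimum equals $\min_{0 \le t \le d_{\max}} F_n(t)$; restricting to $t \le d_{\max}$ is valid because the last early job completes at $\sum_{i \in E} p_i$ and must meet a due date of at most $d_{\max}$. Filling the $(n+1)(d_{\max}+1)$ cells costs $O(1)$ time each, and sorting the due dates (which lie in $\{0,\ldots,d_{\max}\}$) by radix sort costs $O(n + d_{\max})$, so the total running time is $O(n \cdot d_{\max})$.

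I expect the only delicate point to be the exchange argument: one must check carefully that neither moving tardy jobs to the back nor swapping out-of-order adjacent early jobs can turn an early job tardy or increase the total weight of tardy jobs. Everything after that — the recurrence, its inductive verification, and the running-time count — is routine.
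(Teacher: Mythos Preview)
The paper does not prove \cref{thm:LawlerMoore}; it is stated there as a cited result of Lawler and Moore~\cite{LawlerMoore}, with no proof given. Your proposal supplies exactly the classical argument behind that result: the EDD structural lemma (which the paper also cites without proof as \cref{lem:EDD}) followed by the standard $O(n\,d_{\max})$ dynamic program over jobs sorted by due date. Both the exchange argument and the recurrence are correct as written (modulo the harmless convention that $F_{j-1}(t-p_j)=\infty$ when $t<p_j$), so there is nothing to compare against and nothing to fix.
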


Despite the simplicity of the Lawler and Moore algorithm, and despite the fact that its generic nature also allows solving a multitude of other problems~\cite{LawlerMoore}, it is still the fastest algorithm known for \Prob in the realm of pseudo-polynomial time algorithms. This begs the following natural question:
\begin{quote}
``Can we obtain faster pseudo-polynomial time algorithms for \Prob, or alternatively, prove that these do not exist under some plausible conjecture?"
\end{quote}
As an illustrative example, consider the case where $d_{\max}=O(\sqrt{n})$. Can we obtain a running time better than the $O(n^{1.5})$ given by Lawler and Moore's algorithm? Can we perhaps get linear or near-linear time? Such questions are the motivation behind this paper.  \\



\subsection{The single due date case}


The special case of \Prob where all jobs have a common single due date, is denoted by \SDProb in Graham's notation. As mentioned above, \SDProb translates directly to a ``dual version'' of Knapsack: If all due dates equal to $d$, our goal is to minimize the total weight of jobs that complete after~$d$, where in Knapsack we wish to maximize the total weight of jobs that complete before~$d$. (Here $d$ corresponds to the Knapsack size, the processing times correspond to item sizes, and the weights correspond to item values.) Thus, the two problems are equivalent under exact algorithms. 

In a recent breakthrough result, using an elegant layering technique introduced by Bringmann for Subset Sum~\cite{Bring17}, Cygan~\emph{et al.}~\cite{CyganMWW19} devised a randomized algorithm that solves the \SDProb problem in $\widetilde{O}(n+d_{\max}^2)$ time\footnote{We use $\widetilde{O}$ to suppress polylogarithmic factors.}. (Here, $d_{\max}$ is simply the common due date of all jobs.) 
\begin{theorem}[\cite{CyganMWW19}]
\label{thm:CyganUpperBound}%
There is a randomized algorithm solving \SDProb in $\widetilde{O}(n+d_{\max}^2)$ time.
\end{theorem}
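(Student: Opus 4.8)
The plan is to use the stated equivalence with Knapsack, and then to show that Knapsack with a single capacity $t$ admits a \emph{deterministic} $\widetilde{O}(n+t^2)$-time algorithm, which in particular yields the randomized bound claimed. Concretely: in the single-due-date case a subset $S$ of jobs can all be scheduled early if and only if $\sum_{j\in S}p_j\le d$ (process $S$ first in any order, then the rest, so every job of $S$ completes by time $\sum_{j\in S}p_j\le d=d_j$), and the tardy weight of such a schedule is $\sum_j w_j-\sum_{j\in S}w_j$; hence minimizing the weighted number of tardy jobs is exactly maximizing $\sum_{j\in S}w_j$ subject to $\sum_{j\in S}p_j\le d$. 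Writing $t:=d_{\max}=d$, this is Knapsack with capacity $t$, and it suffices to solve it in $\widetilde{O}(n+t^2)$ time.

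The heart of the argument is a cheap pruning of the job set down to $O(t\log t)$ jobs. First remove, in $O(n)$ time, all jobs with $p_j=0$ (always early, so take all of them) and all jobs with $p_j>t$ (never early). Now $1\le p_j\le t$. Fix a processing-time value $p\in\{1,\dots,t\}$: if an optimal early set uses $k$ jobs of processing time $p$ then $kp\le t$, so $k\le\lfloor t/p\rfloor$; and by a one-line exchange argument (swap a selected job for a heavier unselected job of the same processing time — feasibility is preserved and the objective does not decrease) we may assume those $k$ jobs are the $k$ heaviest of processing time $p$. Therefore discarding all but the $\lfloor t/p\rfloor$ heaviest jobs of each processing-time value preserves the optimum. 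This is implemented by bucketing the jobs on their processing time (counting sort, $O(n+t)$ time) and then extracting the $\min(n_p,\lfloor t/p\rfloor)$ heaviest jobs inside each bucket by linear-time selection ($O(n)$ time overall), and the number of surviving jobs is at most $\sum_{p=1}^{t}\lfloor t/p\rfloor=O(t\log t)$.

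Finally, run the Lawler--Moore algorithm (\cref{thm:LawlerMoore}) on the pruned instance, which has $O(t\log t)$ jobs and common due date $t$; this costs $O(t\log t\cdot t)=O(t^2\log t)$ time. The total is $O(n+t)+O(t^2\log t)=\widetilde{O}(n+d_{\max}^2)$, and the algorithm is in fact deterministic. The only places needing a little care are the degenerate-job corner cases and the harmonic estimate $\sum_{p\le t}\lfloor t/p\rfloor=\Theta(t\log t)$ bounding the number of survivors, neither of which is a genuine obstacle — so for this particular statement I do not expect a hard step. (An alternative proof, and the one that actually generalizes in the later sections of this paper, is that of Cygan \emph{et al.}~\cite{CyganMWW19}: transport Bringmann's layering/color-coding scheme for Subset Sum~\cite{Bring17} to the weighted setting, splitting jobs into $O(\log t)$ size classes, color-coding each class into $\mathrm{poly}$-many buckets so that an optimal solution meets every bucket at most once, and merging the resulting sparse profiles via $(\max,+)$-convolutions; there the technical work is the amortized cost of the merges.)
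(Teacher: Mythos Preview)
Your argument is correct: the pruning by processing time keeps, for each $p\in\{1,\dots,t\}$, only the $\lfloor t/p\rfloor$ heaviest jobs (by the capacity constraint together with the exchange argument), the harmonic bound gives $O(t\log t)$ survivors, and Lawler--Moore on the reduced instance runs in $O(t^2\log t)$ time. This even yields a deterministic algorithm, stronger than the randomized claim in the statement, and it also produces the full solution vector (the DP table of Lawler--Moore), so it could be used as a drop-in replacement in \cref{alg:Main}.

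That said, the paper does not give its own proof of this theorem at all; it is simply cited from~\cite{CyganMWW19}. The approach of Cygan \emph{et al.}, which you correctly describe in your parenthetical, proceeds via Bringmann's randomized layering and color-coding~\cite{Bring17} combined with $(\max,+)$-convolutions. The difference in what each route buys: your pruning-plus-Lawler--Moore argument is elementary, deterministic, and self-contained, but it is specific to the structure of Knapsack (keeping top-$\lfloor t/p\rfloor$ items per size class). The Cygan \emph{et al.} machinery is heavier and randomized, but it ties Knapsack and $(\max,+)$-convolution together in both directions, which is what drives the matching lower bound of \cref{thm:CyganLowerBound}. Your remark that the Cygan \emph{et al.} approach ``actually generalizes in the later sections of this paper'' slightly overstates the dependence: \cref{alg:Main} uses the $\widetilde{O}(n+d_{\max}^2)$ algorithm only as a black box producing a solution vector for each due-date class, and your argument would serve there equally well.
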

Thus, in the above example where $d_{\max} = O(\sqrt{n})$ the algorithm above gives near-linear time, and in general improves upon the Lawler and Moore algorithm whenever $d_{\max} = \widetilde{o}(n)$.
Can we get a similar result for the multiple due date case?

\subsection{(max,+)-convolutions}

Following Cygan~\emph{et al.}~\cite{CyganMWW19}, there has been a surge in fast pseudo-polynomial time algorithms for Knapsack~\cite{AxiotisTzamos19,BateniHSS18,BringmannC22,DChiDX022,PolakEtAl21}, each outperforming the other under certain conditions. However, a common thread to all of these, including the Cygan~\emph{et al.} paper, is the heavy reliance on $(\max,+)$-convolutions. These are also central to this paper as well. 
\begin{definition}
\label{def:MaxPlusConv}%
Let $A = (A[k])^m_{k=0}$ and $B = (B[\ell])^n_{\ell=0}$ be two integer vectors with $m \leq n$. The $(\max,+)$-convolution of $A$ and $B$, denoted $A \oplus B$, is defined as the vector $C= (C[\ell])^{n}_{\ell=0}$ where 
$$
C[\ell] =\max_{0 \leq k \leq \ell} \big( A[k]+B[\ell - k] \big) \text{ for each } \ell \in \{0,\ldots,n\}.
$$
\end{definition}

For general integer vectors of length at most $n$, it is widely believed that one cannot compute their $(\max,+)$-convolution in $O(n^{2-\varepsilon})$-time. This is known as the \emph{$(\min,+)$-conjecture}~\cite{CyganMWW19}. However, this barrier can be broken when the vectors have certain structure. Bateni \emph{et al.}~\cite{BateniHSS18} showed that when one can compute so-called ``range intervals'' for the two vectors, then one can compute their $(\max,+)$-convolution in $\widetilde{o}(n^2)$ time using their \emph{prediction technique}. Axiotis and Tzamos~\cite{AxiotisTzamos19} showed that when one vector is so called \emph{$s$-step concave}, then the $(\max.+)$-convolution can be computed in $O(n)$ time. Finally, Chi \emph{et al.}~\cite{DChiDX022} presented an $\widetilde{O}(n^{1.5})$-time algorithm for the case where both vectors are monotone non-decreasing and have values bounded by $O(n)$.

As previously mentioned, these fast $(\max,+)$-convolution algorithms have been used to devise fast psuedo-polynomial time algorithms for \SDProb (or rather Knapsack). Bateni \emph{et al.}~\cite{BateniHSS18} used their prediction technique to devise an algorithm running in time $\widetilde{O}(n+d_{\max}w_{\max})$, where $w_{\max} = \max_j w_j$ is the maximum weight of the instance. 
Axiotis and Tzamos~\cite{AxiotisTzamos19} used their technique to obtain $\widetilde{O}(n+d_{\max}p_{\max})$ and $\widetilde{O}(nw^2_{\max})$ time algorithms. Bringmann and Cassis~\cite{BringmannC22} used the techniques of Chi \emph{et al.}~\cite{DChiDX022} to provide an $\widetilde{O}((nw_{\max}+d_{\max})^{1.5})$-time algorithm. Lastly, Polak \emph{et al.}~\cite{PolakEtAl21} used the techniques of Axiotis and Tzamos~\cite{AxiotisTzamos19}, along with proximity techniques similar to those found in~\cite{EisenbrandWeismantel20}, to obtain algorithms with running-time $\widetilde{O}(n+w_{\max}^3)$ and $\widetilde{O}(n+p_{\max}^3)$.

All algorithms mentioned above improve upon the algorithm of Cygan~\emph{et al.}~\cite{CyganMWW19} for \SDProb in certain parameter ranges. However, Cygan~\emph{et al.} did not only use $(\max,+)$-convolutions to solve \SDProb, but they also showed a reduction in the other direction. Namely, a fast algorithm for \SDProb would refute the $(\max,+)$-conjecture mentioned above. 
\begin{theorem}[\cite{CyganMWW19}]
\label{thm:CyganLowerBound}%
There is no $\widetilde{O}((n+d_{\max})^{2-\varepsilon})$-time algorithm for \SDProb, for any $\varepsilon > 0$, assuming the $(\max,+)$-conjecture holds. 
\end{theorem}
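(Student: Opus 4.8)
The plan is to prove this by reducing $(\max,+)$-convolution to \SDProb, exploiting the correspondence between \SDProb and Knapsack described above; in scheduling language this is just the known $(\min,+)$-convolution hardness of Knapsack. Since computing $A \oplus B$ for length-$m$ vectors is sub-quadratically equivalent to its decision version --- given $A$, $B$, and a third vector $C$, decide whether $(A \oplus B)[\ell] \le C[\ell]$ for every $\ell$, that is, whether $A[i] + B[j] \le C[i+j]$ for all admissible $i,j$ --- it is enough to solve this decision problem with the help of a hypothetical fast \SDProb algorithm. By the standard self-reductions one may also assume all vector entries are bounded by $\mathrm{poly}(m)$.

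Given such an instance $(A,B,C)$, I would construct an \SDProb instance, thinking of it throughout in the Knapsack view: an early set is any subset of jobs whose total processing time is at most the common due date $d$, and minimizing the weighted number of tardy jobs amounts to maximizing the total weight of the early set (up to the additive constant $\sum_j w_j$). I introduce three groups of jobs --- one job $\alpha_i$ per index $i$ of $A$, one job $\beta_j$ per index $j$ of $B$, and one job $\gamma_k$ per index $k$ --- with processing times placed in three widely separated ``bands'', obtained by adding offsets of very different magnitudes to the respective index, so that no two jobs from the same group can be early simultaneously, and I set $d$ so that a set $\{\alpha_i,\beta_j,\gamma_k\}$ consisting of one job per group is early precisely when $i+j \le k$. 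The weights encode the entries $A[i]$, $B[j]$, $C[k]$ together with a large index-dependent term engineered to cancel exactly when $i+j=k$, plus a dominant bonus on every $\gamma$-job. If the parameters are chosen correctly, the weight of the early set of an optimal schedule equals $\max_{i+j=k}\bigl(A[i] + B[j] - C[k]\bigr)$ up to a fixed shift, which is positive if and only if the decision answer is ``no''. The key point for the claimed bound is that only the \emph{indices} of $A,B,C$, which are $O(m)$, enter the processing times --- and hence $d_{\max} = O(m)$ --- while the vector \emph{entries} appear only in the weights, whose magnitude is irrelevant to a running time of the form $\widetilde O((n+d_{\max})^{2-\varepsilon})$; the instance also has only $O(m)$ jobs.

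The main obstacle is to show that an optimal schedule really has this ``canonical'' shape. Since an early set may be an \emph{arbitrary} subset fitting within $d$, one must rule out degenerate early sets --- for example several $\alpha$-jobs and no $\beta$-job, which could otherwise exploit the index-dependent weight terms --- from dominating the intended one-per-group triples. This is precisely what the careful choice of the three band magnitudes, of $d$, and of the weight bonuses is for: the $\gamma$-bonus must dwarf everything so that exactly one $\gamma$-job is early; the band separation then leaves room for at most one $\alpha$- and at most one $\beta$-job; and the index-cancellation term forces $i+j=k$. Getting this bookkeeping right (possibly after passing to a convenient monotone variant of the decision problem) is the technical heart of the argument, and is carried out in~\cite{CyganMWW19} and the Knapsack-hardness results it relies on. Granting it, the theorem follows at once: an $\widetilde O((n+d_{\max})^{2-\varepsilon})$-time algorithm for \SDProb would decide the $(\max,+)$-convolution decision problem, and therefore compute $(\max,+)$-convolutions, in $\widetilde O(m^{2-\varepsilon})$ time, contradicting the $(\max,+)$-conjecture.
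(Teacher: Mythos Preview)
The paper does not prove this theorem; it is stated purely as a cited result from~\cite{CyganMWW19} with no accompanying argument. Your sketch correctly captures the reduction used in that reference (from the decision version of $(\max,+)$-convolution to Knapsack via three ``banded'' groups of jobs whose weights encode the vector entries), so there is nothing in the present paper to compare it against.
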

Observe that the lower bound above does not exclude a linear or even sublinear running time dependency in $d_{\max}$. In particular, the existence of an $\widetilde{O}(n^2+d^{1/100}_{\max})$ time algorithm is still open. Nevertheless, if we insist on a subquadratic dependency on $n$, the algorithm of Theorem~\ref{thm:CyganUpperBound} is optimal\footnote{As we neglect polylogarithmic factors throughout the paper, we use the term optimal to mean up to polylogarithmic factors, and assuming some plausible conjecture.}. 

\subsection{Our results}

Inspired by the recent pseudo-polynomial algorithms for Knapsack, we present in Section~\ref{sec:AlgOverview} a simple algorithm for \Prob that uses $(\max,+)$-convolutions in a direct fashion. In particular, our algorithm first partitions the set of jobs into $d_{\#}$ subsets, where all jobs within the same subset have the same due date. Each subset of jobs can be viewed as a single \SDProb (or Knapsack) instance, so the algorithms can use known algorithms to obtain solution vectors for each of these instances. It then combines all solution vectors into a single vector for the entire \Prob instance using $(\max,+)$-convolutions. The running time of the entire algorithm is $\widetilde{O}(n+d_{\#}d_{\max}^2)$ time.
\begin{theorem}
\label{thm:MainAlgorithm}
\Prob can be solved in $\widetilde{O}(n+d_{\#}d_{\max}^2)$ time.
\end{theorem}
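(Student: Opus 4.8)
The plan is to realize the reduction sketched just above the theorem: split the jobs by due date, solve each resulting single-due-date (Knapsack-type) subproblem to obtain a ``profile'' vector, and stitch the profiles together with $(\max,+)$-convolutions. First I would sort the jobs and let $\delta_1 < \delta_2 < \cdots < \delta_{d_{\#}} = d_{\max}$ be the distinct due dates, writing $J_i$ for the set of jobs with due date $\delta_i$ and $n_i = |J_i|$. The underpinning of the whole argument is the classical earliest-due-date characterization of schedulable job sets: a set $E \subseteq J$ is the set of early jobs of \emph{some} schedule if and only if
\[
\sum_{i' \le i} p\bigl(E \cap J_{i'}\bigr) \;\le\; \delta_i \qquad \text{for every } i \in \{1, \ldots, d_{\#}\}.
\]
For the ``if'' direction one schedules $E$ first in earliest-due-date order and then the remaining jobs arbitrarily, so a job of $E$ with due date $\delta_i$ completes by time $\sum_{i' \le i} p(E \cap J_{i'}) \le \delta_i$; for the ``only if'' direction one looks at the latest-completing job of $E$ among those with due date at most $\delta_i$ and notes that all such jobs occupy earlier or equal positions, so their total processing time is at most its completion time, hence at most $\delta_i$. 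Minimizing the weighted number of tardy jobs then amounts to maximizing $w(E)$ over all $E$ satisfying the displayed nested constraints and returning $\sum_{j} w_j - w(E)$.

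Next I would set up the group-by-group dynamic program. For each $i$ let $V_i$ be the profile of the group $J_i$, that is, the length-$(d_{\max}+1)$ vector with $V_i[u] = \max\{w(S) : S \subseteq J_i,\ p(S) = u\}$ for $0 \le u \le \delta_i$ (in particular $V_i[0]=0$) and $V_i[u] = -\infty$ otherwise; this $V_i$ is exactly the all-capacities answer vector of the \SDProb instance on $J_i$ with common due date $\delta_i$, so by a profile-returning version of the algorithm of \cref{thm:CyganUpperBound} it can be computed in $\widetilde{O}(n_i + \delta_i^2)$ time. Define $P_i[t]$, for $0 \le t \le \delta_i$, to be the maximum of $\sum_{i' \le i} w(S_{i'})$ over selections $S_{i'} \subseteq J_{i'}$ that satisfy $\sum_{i' \le i''} p(S_{i'}) \le \delta_{i''}$ for all $i'' \le i$ and $\sum_{i' \le i} p(S_{i'}) = t$. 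Then $P_1 = V_1$ and, for $i \ge 2$,
\[
P_i \;=\; \bigl(P_{i-1} \oplus V_i\bigr) \text{ restricted to the indices } \{0,1,\ldots,\delta_i\},
\]
where $P_{i-1}$ is read as $-\infty$ on indices exceeding $\delta_{i-1}$. By induction on $i$ this computes the quantity just defined: any finite entry $P_{i-1}[s]$ with $s \le \delta_{i-1}$ already witnesses a selection meeting the first $i-1$ constraints, adjoining a subset of $J_i$ of total processing time $t-s$ keeps $\sum_{i'\le i} p(S_{i'}) = t$, and truncating at $\delta_i$ is precisely what adds the $i$-th constraint while discarding exactly the combinations that violate it. Finally the optimum value of the instance is $\sum_j w_j - \max_t P_{d_{\#}}[t]$, and an optimal schedule can be recovered by backtracking through the convolutions.

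For the running time, sorting and grouping take $O(n\log n)$; computing all the $V_i$ costs $\sum_i \widetilde{O}(n_i + \delta_i^2) = \widetilde{O}(n + d_{\#} d_{\max}^2)$ since every $\delta_i \le d_{\max}$; and computing $P_2, \ldots, P_{d_{\#}}$ uses $d_{\#}-1$ $(\max,+)$-convolutions of integer vectors of length $O(d_{\max})$, each done naively in $O(d_{\max}^2)$ time, for $O(d_{\#} d_{\max}^2)$ in total. Summing gives $\widetilde{O}(n + d_{\#} d_{\max}^2)$. Since it invokes \cref{thm:CyganUpperBound}, the resulting algorithm is randomized.

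The step I expect to be the real work is the correctness of the stitching dynamic program: making the earliest-due-date characterization precise and then checking carefully that truncating each intermediate $(\max,+)$-convolution at $\delta_i$ is \emph{exactly} equivalent to enforcing the nested prefix constraints — losing no feasible early-job set and admitting no infeasible one. The remaining pieces should be straightforward: that the single-due-date subroutine can be arranged to output the full profile vector within its stated time bound (note that computing the $V_i$ with the Lawler--Moore algorithm of \cref{thm:LawlerMoore} instead would cost $O(d_{\max} n)$ overall and would not give the claimed bound, so the faster subroutine is genuinely needed here), and the routine running-time arithmetic.
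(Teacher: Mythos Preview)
Your proposal is correct and mirrors the paper's approach: partition by due date, compute per-group profiles via \cref{thm:CyganUpperBound}, and stitch them together with $(\max,+)$-convolutions, justifying correctness through the EDD characterization. The only cosmetic differences are that the paper indexes its solution vectors by ``total processing time at most $k$'' rather than ``exactly $t$'' (so the truncation at $\delta_i$ is implicit in the vector lengths, and the profile output of Cygan \emph{et al.}\ is used as-is), and it cites the EDD lemma rather than reproving it.
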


By the lower bound of Theorem~\ref{thm:CyganLowerBound}, the algorithm above is optimal when the job instance has~$\widetilde{O}(1)$ different due dates. Moreover, as $d_{\#} \leq d_{\max}$, its running time can also be bounded by $\widetilde{O}(n+d_{\max}^3)$. Thus, in our running example with $d_{\max}=O(\sqrt{n})$, the running time of this algorithm matches the one by Lawler and Moore, and improves upon it whenever $d_{\#}=O(n^{1/2-\varepsilon})$ for any $\varepsilon > 0$. 

We next explore improvements on our algorithm using recent technique for fast $(\max,+)$-convolutions. In our first result of this flavor, we adapt the prediction framework of Bateni \emph{et al.}~\cite{BateniHSS18} to the multiple due date case. We show that their algorithm for Knapsack can be extended to handle $d_{\#}$ different due dates, at a cost of an increase of a factor of $d^2_{\#}$ to its running time. More specifically, in Section~\ref{sec:prediction} we prove that:
\begin{theorem}
\label{thm:MainSpeedup}%
\Prob can be solved in $\widetilde{O}(d_{\#}n +d^2_{\#}d_{\max}w_{\max})$ time. 
\end{theorem}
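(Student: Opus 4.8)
The plan is to keep the high-level structure of the algorithm behind \cref{thm:MainAlgorithm} and replace its naive $(\max,+)$-convolutions by the prediction technique of Bateni \emph{et al.}~\cite{BateniHSS18}. We partition the jobs into $d_{\#}$ groups $G_1,\ldots,G_{d_{\#}}$ by due date, indexed so that the common due date $\delta_i$ of $G_i$ satisfies $\delta_1 < \cdots < \delta_{d_{\#}} = d_{\max}$. A standard exchange argument lets us assume the early jobs are scheduled first, in earliest-due-date order, so a set $S$ of early jobs is feasible if and only if $p(S \cap (G_1 \cup \cdots \cup G_k)) \le \delta_k$ for every $k$. Writing $\hat{V}_i[t]$ for the maximum weight of a feasible $S \subseteq G_1 \cup \cdots \cup G_i$ with $p(S) \le t$, and $\hat{W}_i[t] = \max\{w(S') : S' \subseteq G_i,\ p(S') \le t\}$ for the monotone profile of the single knapsack instance $G_i$, the feasibility characterization together with elementary properties of $\oplus$ on monotone vectors gives $\hat{V}_0 \equiv 0$ and $\hat{V}_i = (\hat{V}_{i-1}\!\mid_{\delta_{i-1}}\, \oplus\, \hat{W}_i)\!\mid_{\delta_i}$, where $\mid_\delta$ caps the index range at $\delta$; the optimum of \Prob equals $\sum_j w_j - \hat{V}_{d_{\#}}[d_{\max}]$. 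Computing all the $\hat{W}_i$ with the knapsack algorithm of~\cite{BateniHSS18} costs $\widetilde{O}(\sum_i (n_i + \delta_i w_{\max})) = \widetilde{O}(n + d_{\#} d_{\max} w_{\max})$, so the task reduces to carrying out the $d_{\#}$ capped convolutions fast.

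For the $i$-th convolution we provide a \emph{prediction} for its output, i.e.\ for every index $t$ an interval $[a_i(t), b_i(t)] \ni \hat{V}_i[t]$, and then invoke the prediction lemma of~\cite{BateniHSS18}, which computes the convolution exactly in time near-linear in the output length plus the total width $\sum_t (b_i(t) - a_i(t))$ of the prediction. We build the predictions by the same recursion, driven by the fractional-knapsack bounds of each group: let $\underline{W}_i$ be the fractional profile of $G_i$ minus $w_{\max}$ and $\overline{W}_i$ the fractional profile itself, so $\underline{W}_i \le \hat{W}_i \le \overline{W}_i$, $\overline{W}_i - \underline{W}_i \equiv w_{\max}$, and both are concave; set $a_0 = b_0 \equiv 0$ and $a_i = (a_{i-1}\!\mid_{\delta_{i-1}}\, \oplus\, \underline{W}_i)\!\mid_{\delta_i}$, $b_i = (b_{i-1}\!\mid_{\delta_{i-1}}\, \oplus\, \overline{W}_i)\!\mid_{\delta_i}$. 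Monotonicity of $\oplus$ yields $a_i \le \hat{V}_i \le b_i$, and if $b_i[\ell]$ is attained by splitting $\ell$ at $k^*$ then $b_i[\ell] - a_i[\ell] \le (b_{i-1}[k^*] - a_{i-1}[k^*]) + w_{\max}$, so inductively $\max_\ell (b_i[\ell] - a_i[\ell]) \le i\,w_{\max} \le d_{\#} w_{\max}$. (Equivalently, this reflects that an optimal vertex of the fractional relaxation of the length-$i$ prefix instance has at most $i$ fractional jobs.) Hence the total prediction width at round $i$ is $O(i\,\delta_i w_{\max}) \le O(i\,d_{\max} w_{\max})$, the $i$-th predicted convolution runs in $\widetilde{O}(i\,d_{\max} w_{\max})$ time, and summing over $i$ gives $\widetilde{O}(d_{\#}^2 d_{\max} w_{\max})$ in total.

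It remains to produce the predictions themselves efficiently. Since $\underline{W}_i, \overline{W}_i$ are concave, each $a_i$ and $b_i$ stays concave and piecewise linear, so the convolutions defining them are convolutions of concave vectors and cost only $\widetilde{O}(\delta_i)$ apiece by the technique of Axiotis and Tzamos~\cite{AxiotisTzamos19}, while the fractional bounds of all groups follow from sorting within each group in $\widetilde{O}(n)$ total. The $\widetilde{O}(d_{\#} n)$ term accommodates re-processing the $O(n)$ jobs in each of the $d_{\#}$ rounds (e.g.\ when predictions or profiles are recomputed per round rather than maintained incrementally). Adding everything — $\widetilde{O}(n + d_{\#} d_{\max} w_{\max})$ for the group profiles, $\widetilde{O}(d_{\#} n)$ for the remaining per-round processing, and $\widetilde{O}(d_{\#}^2 d_{\max} w_{\max})$ for the $d_{\#}$ predicted capped convolutions — yields the claimed $\widetilde{O}(d_{\#} n + d_{\#}^2 d_{\max} w_{\max})$.

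The main obstacle I expect is not the arithmetic but verifying that the prediction machinery of~\cite{BateniHSS18} really plugs in: one must check that capping a vector to a prefix is compatible with both the $\oplus$-recursion and the prediction lemma (a prefix of a valid prediction is a valid prediction for the capped convolution), that the width bound $O(i\,w_{\max})$ is genuinely correct — via the monotonicity argument above or via LP-vertex sparsity — and that the only approximation in the pipeline is in the predictions, since the convolutions themselves use the \emph{exact} monotone profiles $\hat{W}_i$, which is precisely what the prediction lemma is designed to tolerate. A minor but necessary technical point is treating indices with no feasible set as $-\infty$ consistently across the $\hat{V}_i$, the bounds $a_i, b_i$, and the group profiles.
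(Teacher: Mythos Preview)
Your high-level plan—run Algorithm~\ref{alg:Main} and replace each naive convolution by a prediction-based one—is exactly the paper's. Where you diverge is in how the prediction is produced. At iteration~$i$ the paper recomputes, from scratch, the \emph{fractional} solution vectors $A',B',C'$ for the multi-due-date instances $J_1\cup\cdots\cup J_{i-1}$, $J_i$, and their union, via a dedicated WSPT greedy (Algorithm~\ref{alg:fracsolvec}, Lemma~\ref{lem:fracsolvec}), and then turns these into \emph{range intervals} in the precise sense of Theorem~\ref{thm:Bateni} (Algorithm~\ref{alg:ranges}, Lemmas~\ref{lem:rangeintervals}--\ref{lem:ranges}). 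You instead maintain concave value envelopes $a_i,b_i$ on the output recursively, by convolving the previous envelopes with the single-group fractional profiles; this is arguably neater, it avoids re-running a greedy on the whole prefix instance at every round, and your width bound $b_i-a_i\le i\,w_{\max}$ is correct.

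The gap is in what you call ``the prediction lemma.'' What Bateni \emph{et al.}\ actually provide—and what the paper quotes as Theorem~\ref{thm:Bateni}—is not a black box that takes a pointwise value interval $[a_i(t),b_i(t)]$ on the output and runs in time proportional to $\sum_t(b_i(t)-a_i(t))$; it requires \emph{range intervals} $[x_k,y_k]$ on the input indices, together with the monotonicity $x_k\le x_{k+1}$, $y_k\le y_{k+1}$. Most of Section~\ref{sec:prediction} in the paper is spent on precisely this step: defining $\Delta_k(\ell)=C'[k{+}\ell]-A'[k]-B'[\ell]$, proving it is unimodal using the specific mechanics of Algorithm~\ref{alg:fracsolvec}, and extracting monotone range intervals from that. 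Your output envelopes alone do not supply this structure. They \emph{can} be made to—since $b_{i-1}$ and $\overline W_i$ are concave with $b_i=b_{i-1}\oplus\overline W_i$, the optimal split in $b_i$ is monotone and one can derive monotone range intervals with error $O(i\,w_{\max})$ from the triple $(b_{i-1},\overline W_i,b_i)$—but you would have to carry out that argument, and it replaces rather than bypasses the work the paper does in Lemmas~\ref{lem:rangeintervals} and~\ref{lem:ranges}.
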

Note that the running time above improves upon Theorem~\ref{thm:MainAlgorithm} whenever $d_{\#}w_{\max}=\widetilde{o}(d_{\max})$. In particular, if both $d_{\#}=\widetilde{O}(1)$ and $w_{\max}=\widetilde{O}(1)$, this running time becomes $\widetilde{O}(n +d_{\max})$.  

We then consider two other recent techniques for fast $(\max,+)$-convolutions, namely the technique by Axiotis and Tzamos~\cite{AxiotisTzamos19} and Chi \emph{et al.}~\cite{DChiDX022}
for $s$-step concave and bounded monotone vectors respectively. As opposed to the prediction technique by Bateni \emph{et al.}, these are relatively straightforward to adapt to our setting. Nevertheless, they still provide us with improved running times in certain ranges of parameters $d_{\#}$, $d_{\max}$, $w_{\max}$, and $p_{\max}$. Theorem~\ref{thm:SecondSpeedup} below summarizes all the running times we can obtain this way:
\begin{theorem}
\label{thm:SecondSpeedup}%
\Prob can be solved in either 
\begin{itemize}
\item $\widetilde{O}(d_{\#}n +d_{\#}d_{\max}p_{\max})$ time, 
\item $\widetilde{O}(n^2 +d_{\max}w^2_{\max})$ time, or
\item $\widetilde{O}(n^2 + d_{\#}(d_{\max}w_{\max})^{1.5})$ time.
\end{itemize}
\end{theorem}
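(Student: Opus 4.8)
The plan is to keep the algorithm of Theorem~\ref{thm:MainAlgorithm} unchanged and only swap its $(\max,+)$-convolution subroutine. Recall the shape of that algorithm: order the distinct due dates $\delta_1 < \cdots < \delta_{d_{\#}}$, let $S_t$ be the jobs with due date $\delta_t$, and for each $t$ build a Knapsack-type solution vector $K_t$ for $S_t$ (one entry per value of total processing time, storing the maximum total weight realizable with that processing time). The vectors are then folded together from left to right, $F_0 = (0)$ and $F_t = \mathrm{trunc}_{\le\delta_t}\!\big(F_{t-1}\oplus K_t\big)$, where $\mathrm{trunc}_{\le\delta_t}$ discards entries at indices above $\delta_t$; the largest entry of $F_{d_{\#}}$ is the maximum weight of an on-time set, by the standard fact that a job set can be scheduled entirely on time iff, taken in order of non-decreasing due date, every prefix has total processing time at most the due date of its last job. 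Hence the whole computation is $d_{\#}$ truncated $(\max,+)$-convolutions of vectors of length at most $d_{\max}$, and each of the three claimed running times comes from a different fast implementation of these convolutions.

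For the $\widetilde{O}(d_{\#}n + d_{\#}d_{\max}p_{\max})$ bound I would plug in the concave-decomposition of Axiotis and Tzamos~\cite{AxiotisTzamos19} behind their $\widetilde{O}(n+d_{\max}p_{\max})$ Knapsack algorithm: inside each class $S_t$, group jobs by processing time; for a fixed processing time $p$ the map $i \mapsto (\text{largest total weight of }i\text{ such jobs})$ is concave, so it contributes a concave vector supported on multiples of $p$, and folding a concave vector into the running vector costs time linear in the vector lengths via SMAWK, after which we re-truncate at $\delta_t$. Summing over the at most $p_{\max}$ processing-time groups in each of the $d_{\#}$ classes, plus the per-class preprocessing, gives the stated bound. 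For the other two bounds I would instead run the dynamic program on the \emph{weight} axis rather than the processing-time axis; this is the standard flip that produces $w_{\max}$-dependent running times, and it is affordable here because any on-time set uses total processing time at most $d_{\max}$ and therefore (as $p_j\ge 1$) at most $d_{\max}$ jobs, so the weight axis has length at most $d_{\max}w_{\max}$. With the weight-axis formulation, the $\widetilde{O}(n^2+d_{\max}w^2_{\max})$ bound follows from grouping jobs by weight — now \emph{globally}, across all due-date classes, since there are only $w_{\max}$ distinct weights — where for a fixed weight $w$ the map $c\mapsto(\text{minimum total processing time of an on-time set of }c\text{ jobs of weight }w)$ is convex and hence gives a single concave profit vector, and folding the $w_{\max}$ such vectors into a length-$d_{\max}w_{\max}$ vector costs $\widetilde{O}(d_{\max}w_{\max}^2)$, the $n^2$ term absorbing the cost of preparing these per-weight vectors and all truncations. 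The $\widetilde{O}(n^2 + d_{\#}(d_{\max}w_{\max})^{1.5})$ bound is the most direct: each $F_t$ on the weight axis is monotone non-decreasing with entries in $\{0,\ldots,d_{\max}w_{\max}\}$, so the bounded-monotone $(\max,+)$-convolution of Chi \emph{et al.}~\cite{DChiDX022} with size parameter $N=d_{\max}w_{\max}$ performs each of the $d_{\#}$ folds in $\widetilde{O}((d_{\max}w_{\max})^{1.5})$ time, the $d_{\#}$ per-class vectors being computed within an overall $\widetilde{O}(n^2)$ budget (e.g.\ by the Knapsack algorithm of Bringmann and Cassis~\cite{BringmannC22} per class).

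The step I expect to be the real obstacle is reconciling the structural property that each subroutine needs with the per-class truncation $\mathrm{trunc}_{\le\delta_t}$ forced by having several due dates. For the first bound one must check that truncating a concave vector at an index threshold keeps it concave on its (shorter) support — which is all SMAWK requires — so the folds stay fast; for the third, that $\oplus$ followed by truncation preserves both monotonicity and the value bound $d_{\max}w_{\max}$ — which is immediate once that bound has been justified via the ``at most $d_{\max}$ on-time jobs'' observation. The subtlest point is the second bound: to keep the $d_{\#}$ factor off the $w^2_{\max}$ term the weight-grouping must be done once for the whole instance rather than once per due-date class, and this needs a short scheduling exchange argument showing that, for a fixed weight, ``minimum total processing time of an on-time set of $c$ such jobs'' is well-defined, computable in (near-)linear time, and convex in $c$, so that it yields a single concave vector that can be folded into the running weight-axis vector exactly as in the single-due-date case. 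The remaining work is routine: verifying the length/value bounds are genuinely $d_{\max}$ and $d_{\max}w_{\max}$ rather than $n$ and $nw_{\max}$, and charging the $d_{\#}$ overhead so that it multiplies only the $d_{\max}$-dependent terms (or is swallowed by $n^2$).
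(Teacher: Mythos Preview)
Your treatment of the first and third bounds is essentially the paper's: for $\widetilde{O}(d_{\#}n+d_{\#}d_{\max}p_{\max})$ the paper also splits each due-date class by processing time and folds in the resulting $p$-step concave vectors via the Axiotis--Tzamos SMAWK-based convolution; for $\widetilde{O}(n^2+d_{\#}(d_{\max}w_{\max})^{1.5})$ the paper likewise invokes Chi \emph{et al.}\ on the monotone solution vectors (it stays on the processing-time axis, with length~$d_{\max}$ and values bounded by $nw_{\max}\le d_{\max}w_{\max}$ after the case split~$n\le d_{\max}$, but this is equivalent to what you wrote). Your description of the axis for the third bound is a little tangled, but the arithmetic is right.

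The second bound, however, has a genuine gap. You propose to group jobs by weight \emph{globally} across all due-date classes, compute for each weight~$w$ the function $c\mapsto$ ``minimum processing time of an on-time set of $c$ weight-$w$ jobs'', and then fold these $w_{\max}$ vectors together by $(\min,+)$-convolution. This is incorrect, independently of whether the per-weight function is convex: feasibility is not separable across weight classes. Take two jobs with $p_a=p_b=5$, $d_a=d_b=5$, $w_a=1$, $w_b=2$. Each is individually on time, so your weight-$1$ vector reports processing~$5$ for one job and your weight-$2$ vector reports processing~$5$ for one job; the fold then declares total weight~$3$ achievable with processing~$10$. But $\{a,b\}$ cannot be scheduled on time, so the folded vector is wrong. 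The fold implicitly assumes that the union of per-class on-time sets is on time, which fails as soon as jobs of different weights share a due date.

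The paper does \emph{not} try to avoid the per-due-date-class iteration here. It keeps the outer loop over the $d_{\#}$ classes but switches to \emph{inverse} solution vectors (indexed by total weight, storing minimum processing time) and, \emph{within} each class $J_i$, partitions by weight so that each $B_w$ is $w$-step concave; the fold of these into the running vector is the iteration-$i$ convolution. Summing the per-iteration cost over~$i$ already yields $\widetilde{O}(nw_{\max}^2)$ without any $d_{\#}$ factor---so the concern that motivated your global grouping does not arise. The $n^2$ term in the final bound is not a preprocessing cost as you suggest; it comes from a case split: if $n\ge d_{\max}$ one simply runs Lawler--Moore in $O(d_{\max}n)\le O(n^2)$, and otherwise $nw_{\max}^2\le d_{\max}w_{\max}^2$.
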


Finally, recall that the lower bound of Theorem~\ref{thm:CyganLowerBound} does not exclude algorithms with running times of the form $\widetilde{O}(n^{O(1)}+d_{\max}^{2-\varepsilon})$, for any $0 < \varepsilon < 2$. By exploiting a construction previously used in~\cite{AbboudBHS20}, we show that \Prob is unlikely to admit an $\widetilde{O}(n^{O(1)}+d_{\max}^{1-\varepsilon})$ time algorithm, for any $\varepsilon > 0$. Our lower bound is based on completely different hypothesis than the $(\max,+)$-conjecture, namely the \emph{$\forall \exists$ strong exponential time hypothesis ($\forall \exists$-SETH)}, which is akin to the well known strong exponential time hypothesis of Impagliazzo, Paturi, and Zane~\cite{IP2001,IPZ2001}. 
\begin{theorem}
\label{thm:LowerBound}%
Assuming $\forall \exists$-SETH, there is no algorithm for \Prob running in time $\widetilde{O}(n^{O(1)}+d_{\max}^{1-\varepsilon})$, for any $\varepsilon > 0$.
\end{theorem}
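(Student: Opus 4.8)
The plan is to give a fine-grained reduction from $\forall\exists$-$k$-SAT to \Prob, adapting the construction of Abboud et al.~\cite{AbboudBHS20}. Recall that $\forall\exists$-SETH asserts that for every $\varepsilon>0$ there is a constant $k=k(\varepsilon)$ such that no algorithm decides in time $O(2^{(1-\varepsilon)N})$, for a given $k$-CNF formula $\phi$ whose $N$ variables are partitioned into a universal block $X$ and an existential block $Y$, whether $\forall\,\alpha\in\{0,1\}^{X}\ \exists\,\beta\in\{0,1\}^{Y}\colon\ \phi(\alpha,\beta)$ holds. Fix $\varepsilon>0$ and assume \Prob admits an $\widetilde{O}(n^{O(1)}+d_{\max}^{1-\varepsilon})$-time algorithm. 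We build from $\phi$ a \Prob instance (and a constant number of variants of it) with $\mathrm{poly}(N)$ jobs and $d_{\max}=2^{(1+o(1))N}$ whose optimum reveals the $\forall\exists$-SAT answer; running the assumed algorithm then decides $\forall\exists$-$k$-SAT in time $\widetilde{O}\big(\mathrm{poly}(N)+(2^{(1+o(1))N})^{1-\varepsilon}\big)=2^{(1-\varepsilon+o(1))N}\le 2^{(1-\varepsilon')N}$ for a fixed $\varepsilon'>0$ and all large $N$, contradicting $\forall\exists$-SETH. The reduction works verbatim for any constant $k$, so it suffices to pick $k=k(\varepsilon')$.

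For the encoding I would use an \emph{enumerator gadget} for the universal block: $a=|X|$ jobs whose processing times are scaled powers of two sitting in dedicated, disjoint digit positions, so that the total processing time of an ``early'' subset of these jobs runs over an entire interval of $2^{a}$ consecutive values with no wasted room. The existential block is encoded by the schedule's own choices -- a ``true'' job and a ``false'' job per variable $y\in Y$, with due dates and processing times forcing exactly one of the two to be early and recording which one in a further block of $b=|Y|$ bits. Crucially, clause satisfaction is \emph{not} recorded by extra digits of $d_{\max}$ (which would blow the encoding up to $2^{\Theta(N+m)}$) but through the scheduling structure itself, following the approach of~\cite{AbboudBHS20}: each clause is given its own distinct due date plus $O_{k}(1)$ helper jobs, arranged so that the chosen assignment satisfies the clause if and only if the EDD-feasibility constraint at that due date is met. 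Since \Prob tolerates arbitrarily many distinct due dates, this adds only $\mathrm{poly}(N)$ jobs and keeps $d_{\max}=2^{a+b+o(N)}=2^{(1+o(1))N}$; as $m\le\binom{N}{k}2^{k}=\mathrm{poly}(N)$, no analogue of the sparsification lemma is needed.

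The main obstacle is simulating the quantifier alternation inside a single minimization. A \Prob instance minimizes the weighted number of tardy jobs over one schedule, which naturally captures an $\exists$ statement, whereas $\forall\alpha\,\exists\beta\colon\phi$ demands that \emph{every} one of the $2^{a}$ enumerator values admit a -- possibly value-dependent -- existential completion: a ``for all over the interval'', i.e.\ a $\max$, rather than the $\min$ that \Prob reports. Following~\cite{AbboudBHS20} I would recast this via the covering statement ``$\forall\alpha\,\exists\beta\colon\phi$ iff the set of attainable early-processing times contains the whole length-$2^{a}$ block of sub-targets'' and then attach weights (and perform minor post-processing on a constant number of threshold-shifted copies) so that the reported optimum certifies coverage of the \emph{entire} block, not merely of one of its values. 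Pushing this through while keeping $d_{\max}=2^{(1+o(1))N}$ -- which is exactly what makes the resulting bound exclude $d_{\max}^{1-\varepsilon}$ for \emph{every} $\varepsilon>0$ rather than only for $\varepsilon$ close to $1$ -- is the delicate point: a constant-factor loss in the exponent of $d_{\max}$ would already weaken the theorem to a statement about $d_{\max}^{1/2-\delta}$ or worse.

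It then remains to verify correctness (that $\phi\in\forall\exists$-$k$-SAT if and only if the constructed optimum lies below the designed threshold) and to read off the running time, which is immediate from the parameter choices above. We remark that, in contrast to Theorem~\ref{thm:CyganLowerBound}, this lower bound forbids a sublinear-in-$d_{\max}$ running time even when the job count is allowed to be an arbitrarily large polynomial, and that it rests on a hypothesis incomparable to the $(\min,+)$-conjecture.
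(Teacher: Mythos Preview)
Your plan diverges substantially from the paper's proof, and at the point you yourself flag as ``delicate'' it has a genuine gap.

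The paper does not reduce from $\forall\exists$-$k$-SAT directly. It goes through the intermediate \emph{AND Subset Sum} problem: Theorem~\ref{thm:ANDSubsetSum} (from~\cite{AbboudBHS20}) says that under $\forall\exists$-SETH one cannot solve $N$ Subset Sum instances of size $O(N^\varepsilon)$ and target $O(N^\delta)$ simultaneously in time $\widetilde O(N^{1+\delta-\varepsilon})$; a short calculation (Corollary~\ref{cor:LowerBounds}) upgrades this to rule out time $\widetilde O((Ns)^c+(tN)^{1-\varepsilon})$ for any constant $c$; and Lemma~\ref{lem:LowerBound} (again from~\cite{AbboudBHS20}) packs the $N$ instances into a single \Prob instance with $n=O(Ns)$ jobs, $d_{\#}=N$ due dates and $d_{\max}=O(Nt)$. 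Composing these three black boxes is the entire proof.

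Your direct reduction breaks precisely at the quantifier alternation. With an enumerator gadget of $a=|X|$ jobs, each individual schedule fixes \emph{one} subset of those jobs, hence one universal assignment $\alpha$; the optimum of the instance therefore witnesses $\exists\alpha\,\exists\beta\colon\phi(\alpha,\beta)$, not $\forall\alpha\,\exists\beta$. The proposed fix---``attach weights and post-process a constant number of threshold-shifted copies so that the reported optimum certifies coverage of the entire block''---does not work: a constant number of \Prob calls, each returning one optimal value, cannot certify that $2^{a}$ distinct feasibility conditions all hold. In~\cite{AbboudBHS20} (and in the paper) the $\forall$ is handled not by weights but by creating $N$ separate due-date slots, one per universal assignment, so that a single optimal schedule must simultaneously meet all $N$ deadlines. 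That is exactly the AND-Subset-Sum step, and it forces the number of jobs to be $\Theta(N\cdot s)$, not $\mathrm{poly}$ in the number of SAT variables. Your reading of~\cite{AbboudBHS20} as ``encoding clauses through the scheduling structure with $O_k(1)$ helper jobs per clause'' is a misrecollection: clauses are encoded inside each Subset Sum target via the standard SETH-to-Subset-Sum reduction, not via extra due dates.

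Finally, note why the paper's route still yields the stated bound despite $n$ not being polynomial in the SAT variable count: Theorem~\ref{thm:ANDSubsetSum} lets one take $\delta$ arbitrarily large, so $d_{\max}=O(N^{1+\delta})$ can be any fixed polynomial larger than $n=O(N^{1+\varepsilon})$; this is what absorbs the $n^{O(1)}$ term. Your parameter choice $n=\mathrm{poly}(N_{\mathrm{SAT}})$, $d_{\max}=2^{(1+o(1))N_{\mathrm{SAT}}}$ would indeed be stronger if it could be achieved, but the construction you sketch does not achieve it.
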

Note that as $d_{\#} \leq n$, we could also replace the first term in the lower bound above with the term~$(d_{\#}n)^{O(1)}$. 

\subsection{Further related work}

As mentioned above, \Prob is one of the earliest studied problems in the field of combinatorial optimization in general, and in scheduling theory in particular~\cite{LawlerMoore}. Karp placed the problem in the pantheon of combinatorial optimization problems by listing it in his landmark 1972 paper~\cite{Karp72}. The problem is known to be polynomial-time solvable in a few special
cases: Moore~\cite{Moore68} provided an $O(n \log n)$ time algorithm for solving the unit
weight \UWProb problem, and Peha~\cite{Peha95} presented an $O(n \log n)$ time algorithm
for \SPProb (the variant where all jobs have equal processing time), see also~\cite{BruckerK06}. Sahni showed that \Prob admits an FPTAS, in yet another landmark paper in the area of scheduling~\cite{Sahni76}. 


Exact algorithms for the problem based on a branch-and-bound procedure can be found in~\cite{MHallahBulfin,Tang,VillarrealBulfin}. The problem is known to be polynomial-time solvable when either the number of different processing times $p_{\#}$, or the number of different weights $w_{\#}$, is bounded by a constant, and fixed parameter tractable when parameterized by either $p_{\#}+w_{\#}$, $p_{\#}+d_{\#}$, or $w_{\#}+d_{\#}$~\cite{HermelinKPS21}. Finally, the special case where $w_j=p_j$ for all jobs $j$, the \PProb problem, was recently shown to be solvable in $\widetilde{O}(p^{7/4})$, $\widetilde{O}(nd_{\#}p_{\max})$, and $\widetilde{O}(np_{\max}+d_{\#}d_{\max})$ time~\cite{BringmannFHSW20}.

\section{Algorithm Overview}
\label{sec:AlgOverview}%

In this section we present an overview of a generic algorithm for solving \Prob using $(\max,+)$-convolutions. In particular, we provide a complete proof of Theorem~\ref{thm:MainAlgorithm}. Later on, this algorithm will also be used for proving Theorems~\ref{thm:MainSpeedup} and \ref{thm:SecondSpeedup}.

A key observation, used already in the Lawler and Moore~\cite{LawlerMoore} algorithm, is that any instance of \Prob always has an optimal schedule which is an Earliest Due Date schedule. An Earliest Due Date (EDD) schedule is a schedule $\pi: J \to \{1,\ldots,n\}$ such that all early jobs are scheduled before all tardy jobs, and all early jobs are scheduled in non-decreasing order of due dates.
\begin{lemma}[\cite{LawlerMoore}]
\label{lem:EDD}%
Any instance of \Prob has an optimal schedule which is EDD.
\end{lemma}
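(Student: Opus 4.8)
The plan is to prove this by a standard exchange (interchange) argument: start from an \emph{arbitrary} optimal schedule~$\sigma$, and massage it in two steps into an optimal schedule that is EDD, never increasing the objective along the way. Let $E\subseteq J$ be the set of jobs that are early in~$\sigma$ and $T=J\setminus E$ the set of tardy jobs, and write $w(T)=\sum_{j\in T}w_j$; since $\sigma$ is optimal, $w(T)$ is exactly the optimum value. The two transformations are: (i) move all tardy jobs to the very end of the schedule, and (ii) reorder the early jobs into non-decreasing order of due dates; I will argue that after each transformation every job of~$E$ is still early.

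For step~(i), take $\sigma$ and move each job of~$T$ to the end (keeping the jobs of~$T$ in their relative $\sigma$-order), so that the jobs of~$E$ occupy the first $|E|$ positions in their $\sigma$-order. For a job $j\in E$, the set of jobs preceding~$j$ in the new schedule is a subset of the set of jobs preceding~$j$ in~$\sigma$ (we only removed some tardy jobs from before it), so its completion time does not increase and $j$ stays early. Hence the early set of the new schedule still contains~$E$, and all jobs of~$T$ now come after all jobs of~$E$.

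For step~(ii), repeatedly pick two jobs $i,j\in E$ that are adjacent in the current schedule with $i$ immediately before~$j$ but $d_i>d_j$, and swap them. I would check three things. First, after the swap $j$ completes exactly when $i$ used to complete, which is strictly earlier than $i$'s old completion time, hence at most the old completion time of~$j$, which is $\le d_j$; so $j$ stays early. Second, after the swap $i$ completes exactly when $j$ used to complete, which is $\le d_j\le d_i$; so $i$ stays early. Third, every job other than $i$ and~$j$ has exactly the same set of predecessors as before the swap (the jobs before the block $\{i,j\}$ are untouched, and the jobs after it still have the same total processing time of predecessors), so its completion time, and thus its early/tardy status, is unchanged. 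Therefore all of~$E$ remains early, and since each such swap strictly decreases the number of due-date inversions among the jobs of~$E$, the process terminates; in the resulting schedule~$\pi$ the jobs of~$E$ appear in non-decreasing order of due dates, all of them early, and all jobs of~$T$ come after them.

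It remains to observe that $\pi$ is optimal and EDD. Since every job of~$E$ is early in~$\pi$, the total weight of tardy jobs in~$\pi$ is at most $w(T)$, hence equals $w(T)$ by optimality of~$\sigma$; thus $\pi$ is optimal. Assuming $w_j\ge 1$ for all~$j$ (jobs of weight~$0$ can be appended at the end and declared tardy with no effect on the objective), this in fact forces the tardy set of~$\pi$ to be exactly~$T$, so in~$\pi$ all early jobs precede all tardy jobs and the early jobs are in EDD order; that is, $\pi$ is an EDD schedule, as claimed. The only slightly delicate part of the argument is the bookkeeping in the adjacent-swap step — verifying that both swapped jobs remain early and that no other completion time changes — while the rest is immediate.
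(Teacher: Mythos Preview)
The paper does not actually prove this lemma; it is stated with a citation to Lawler and Moore and used as a black box. Your exchange argument is the standard textbook proof of this classical fact and is essentially correct.

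One small slip to fix: in step~(ii) you write ``after the swap $j$ completes exactly when $i$ used to complete, which is strictly earlier than $i$'s old completion time'', which is self-contradictory as written. The correct bookkeeping is that if $S$ denotes the total processing time before the block $\{i,j\}$, then after the swap $j$ completes at $S+p_j$, which is at most its old completion time $S+p_i+p_j\le d_j$; your argument for~$i$ (new completion time $=S+p_i+p_j=$ old completion time of~$j\le d_j<d_i$) is stated correctly. With that corrected, the proof goes through.
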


The $d_\#$-many due dates in our instance partition the input set of jobs $J$ in a natural manner: Let $d^{(1)} < \cdots < d^{(d_{\#})}$ denote the $d_{\#}$ different due dates of the instances, and define $J_i=\{j : d_j = d^{(i)}\}$ for each $i \in \{1,\ldots,d_{\#}\}$. The first step of our algorithm is to treat each of the $J_i$'s as a separate \SDProb instance, and use the Cygan \emph{et al.}~\cite{CyganMWW19} algorithm stated in Theorem~\ref{thm:CyganUpperBound} to solve each of these instances. Consider the instance corresponding to some $J_i$. The Cygan \emph{et al.} algorithm has a useful property that it produces an integer vector $A = (A[k])^{d^{(i)}}_{k=0}$ where the $k$th entry $A[k]$ is equal to the maximum weight of any subset of early jobs in $J_i$ with total processing time at most $k$. 
We call such an integer vector a \emph{solution vector} for the \SDProb instance $J_i$. We can extend this notion naturally to a solution vector for a general \Prob instance. 
\begin{definition}
A solution vector for a \Prob instance $J$ is an integer vector $A = (A[k])^{d_{\max}}_{k=0}$ where the $k$'th entry $A[k]$ is equal to the maximum weight of any subset of early jobs in $J$ with total processing time at most $k$.
\end{definition}

Algorithm~\ref{alg:Main} provides a full description of our algorithm. It successively combines the solution vectors $A_1,\ldots,A_{d_{\#}}$ corresponding to $J_1,\ldots,J_{d_{\#}}$ into a single solution vector $A$ for $J$ using $(\max,+)$-convolutions. The last entry $A[d_{\max}]$ of $A$ will contain the maximum weight of early jobs in the instance, so the algorithm returns $\sum_j w_j -A[d_{\max}]$. 
 
\begin{algorithm}
\caption{}
\label{alg:Main}
\begin{algorithmic}[1]
\Input{A \Prob instance.}
\Output{The minimum weighted number of tardy jobs.}
\State Compute $J_1,\ldots,J_{d_\#}$.
\State Compute solution vectors $A_1,\ldots,A_{d_{\#}}$ corresponding to $J_1,\ldots,J_{d_\#}$. 
\State $A=A_1$.
\State \textbf{for} {$i=2,\ldots,d_\#$} \textbf{do} $A=A \oplus A_i$.
\State \textbf{return} $\sum_j w_j-A[d_{\max}]$.
\end{algorithmic}
\end{algorithm}

\begin{lemma}
\label{lem:Correctness}%
Algorithm~\ref{alg:Main} correctly returns the minimum total weight of tardy jobs.
\end{lemma}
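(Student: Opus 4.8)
The plan is to verify a single invariant about the for-loop: after the loop variable reaches value $i$ (taking ``value $1$'' to mean the line $A = A_1$), the vector $A$ equals $B_i := A_1 \oplus \cdots \oplus A_i$, and this $B_i$ is a solution vector for the ``prefix in EDD order'', meaning that for every $0 \le k \le d^{(i)}$,
\[
B_i[k] \;=\; \max\Big\{ \textstyle\sum_{\ell \le i} w(S_\ell) \;:\; S_\ell \subseteq J_\ell,\ \textstyle\sum_{\ell \le t} p(S_\ell) \le d^{(t)} \text{ for all } t \le i,\ \textstyle\sum_{\ell \le i} p(S_\ell) \le k \Big\},
\]
where $p(S) = \sum_{j\in S} p_j$ and $w(S) = \sum_{j\in S} w_j$. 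I would prove this by induction on $i$. The base case $i = 1$ is exactly the defining property of the solution vector $A_1$ returned by the Cygan \emph{et al.} algorithm for the \SDProb instance $J_1$: the only prefix constraint is $p(S_1) \le d^{(1)}$, which is implied by $p(S_1) \le k \le d^{(1)}$.

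For the inductive step I would unfold \cref{def:MaxPlusConv}: since $B_{i-1}$ has length $d^{(i-1)}+1$ and $A_i$ has length $d^{(i)}+1$ with $d^{(i-1)} < d^{(i)}$, we get $B_i[k] = \max_{0 \le j \le \min(k,\, d^{(i-1)})}\big(B_{i-1}[j] + A_i[k-j]\big)$ for $0 \le k \le d^{(i)}$. I would establish the two inequalities separately. For ``$\ge$'', given any feasible family $S_1, \dots, S_i$ for budget $k$, put $j := \sum_{\ell \le i-1} p(S_\ell)$; feasibility gives $j \le d^{(i-1)}$ and $j \le k$, the induction hypothesis gives $B_{i-1}[j] \ge \sum_{\ell\le i-1} w(S_\ell)$, and $p(S_i) \le k - j$ gives $A_i[k-j] \ge w(S_i)$. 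For ``$\le$'', given an optimal split $j$, take the sets realizing $B_{i-1}[j]$ (yielding $S_1,\dots,S_{i-1}$ with all prefix constraints up to $t = i-1$ and $\sum_{\ell\le i-1} p(S_\ell) \le j$) and the set realizing $A_i[k-j]$ (yielding $S_i \subseteq J_i$ with $p(S_i) \le k-j$); then $\sum_{\ell \le i} p(S_\ell) \le j + (k-j) = k \le d^{(i)}$, so the $t=i$ constraint holds too. Here monotonicity of the solution vectors (a larger budget never decreases the value) is what lets me pass between ``total processing time at most $k$'' and the exact split $k = j + (k-j)$ the convolution forces, and the fact that each $A_i$ has length only $d^{(i)}+1$ — so the convolution never reads $B_{i-1}$ at indices above $d^{(i-1)}$ — is exactly what encodes $\sum_{\ell\le t} p(S_\ell) \le d^{(t)}$ rather than merely $\le d^{(d_{\#})}$.

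With the invariant in hand, I would instantiate it at $i = d_{\#}$ and $k = d_{\max} = d^{(d_{\#})}$: the returned quantity $\sum_j w_j - A[d_{\max}]$ uses $A[d_{\max}] = B_{d_{\#}}[d_{\max}]$, which is the maximum weight of a family $S_1 \subseteq J_1, \dots, S_{d_{\#}} \subseteq J_{d_{\#}}$ with $\sum_{\ell \le t} p(S_\ell) \le d^{(t)}$ for all $t$. I would then note that such families are in bijection with the sets of early jobs of EDD schedules of $J$ (group an early set by due date; the last job of due date $d^{(t)}$ completes at time $\sum_{\ell \le t} p(S_\ell)$, so it is early iff that sum is $\le d^{(t)}$, and conversely append the remaining jobs as tardy). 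Hence $A[d_{\max}]$ is the maximum total weight of early jobs over all EDD schedules, which by \cref{lem:EDD} equals that maximum over all schedules; equivalently $\sum_j w_j - A[d_{\max}]$ is the minimum total weight of tardy jobs.

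The step I expect to be the main obstacle is the bookkeeping in the inductive step — specifically, arguing cleanly that truncating each $A_i$ at index $d^{(i)}$ is equivalent to imposing \emph{all} the prefix inequalities $\sum_{\ell\le t} p(S_\ell) \le d^{(t)}$ simultaneously, and reconciling the ``at most $k$'' semantics of solution vectors with the exact additive split in the definition of $(\max,+)$-convolution, which is precisely where monotonicity of the solution vectors is invoked.
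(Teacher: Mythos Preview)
Your proof is correct and follows essentially the same approach as the paper's: both argue by induction on the iteration index that the running vector $A$ is a solution vector for the EDD-prefix $J_1 \cup \cdots \cup J_i$, then instantiate at $i=d_{\#}$, $k=d_{\max}$ and invoke \cref{lem:EDD}. If anything, you are more explicit than the paper about why truncating each $A_i$ to length $d^{(i)}+1$ is exactly what enforces the intermediate due-date constraints $\sum_{\ell\le t} p(S_\ell)\le d^{(t)}$, which the paper leaves implicit in the phrase ``EDD schedule for $J_1\cup\cdots\cup J_i$''.
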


\begin{proof}
We prove the lemma by induction on the $d_{\#}$ iterations preformed in line 4 of the algorithm (where the first iteration is line 3). For an iteration $i \in \{1,\ldots,d_{\#}\}$, we claim that any entry $A[k]$, $k \in \{0,\ldots,d^{(i)}\}$, at iteration $i$ contains the maximum total weight of early jobs with total processing time at most $k$ in an EDD schedule for $J_1 \cup \cdots \cup J_i$. According to Lemma~\ref{lem:EDD}, such a schedule is optimal for $J_1 \cup \cdots \cup J_i$ (under the constraint that the total processing time is at most $k$), and so proving this claim shows that $w-A[d_{\max}]$ is indeed the minimum total weight of tardy jobs at the end of the algorithm.

For $i=1$ the claim is true due to the correctness of the specific \SDProb algorithm chosen (\emph{e.g.}, the Cygan \emph{et al.}~\cite{CyganMWW19} algorithm). Assume therefore that $i>1$, and choose some arbitrary $k \in \{1,\ldots,d^{(i)}\}$. Let $E_1,\ldots,E_i$ denote the set of early jobs in an optimal EDD schedule for $J_1 \cup \cdots \cup J_i$ with total processing time at most $k$, and let $a$ denote the total weight of these jobs. Furthermore, let $k'$ and $a'$ denote the total processing time and weighted number of the jobs in $E_1,\ldots,E_{i-1}$ respectively. Then by induction we have $A'[k'] \geq a'$, where~$A'$ is the solution vector for iteration $i-1$, as $E_1,\ldots,E_{i-1}$ is a set of early jobs in $J_1\cup \cdots \cup J_{i-1}$ with total processing time $k'$. Similarly, by correctness of the algorithm in~\cite{BateniHSS18}, we know that $A_i[k-k'] \geq a-a'$. Hence, $A[k] \geq A'[k'] + A_i[k-k']=a' + (a-a') = a$. 

Conversely, suppose that $A[k]=A'[k'']+A_i[k-k'']$ for some $k'' \in \{0,\ldots,k\}$. Then there is a set of early jobs in $J_1\cup \cdots \cup J_{i-1}$ with $A'[k'']$ weighted number of tardy jobs and total processing time $k'' \leq d^{(i-1)}$, and a set of early jobs in $J_i$ with $A_i[k-k'']$ weighted number of tardy jobs and total processing time $k-k'' \leq d^{(i)}-k''$. Combining these two jobs together yields a set of early jobs in $J_1 \cup \cdots \cup J_i$ with $A[k]$ weighted number of early jobs. Hence, by optimality of $E_1,\ldots,E_i$, we have $A[k] \leq a$. All together this shows that $A[k]=a$, and so the lemma holds.
\end{proof}

\begin{lemma}
\label{lem:RunningTime}%
Algorithm~\ref{alg:Main} runs in $\widetilde{O}(n+d_{\#}d_{\max}^2)$ time.
\end{lemma}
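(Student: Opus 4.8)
The plan is to account for the cost of each line in Algorithm~\ref{alg:Main} separately and show that the total is dominated by the two claimed terms $n$ (up to polylogarithmic factors) and $d_{\#}d_{\max}^2$. Lines 1 and 5 are cheap: partitioning $J$ into $J_1,\ldots,J_{d_{\#}}$ by due date takes $\widetilde{O}(n)$ time (e.g.\ by sorting), and reading off $A[d_{\max}]$ together with computing $\sum_j w_j$ takes $O(n+d_{\max})$ time. So the two interesting lines are line~2 (computing the solution vectors for the single-due-date subinstances) and line~4 (the $d_{\#}-1$ successive $(\max,+)$-convolutions).

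For line~2, I would invoke Theorem~\ref{thm:CyganUpperBound}: each $J_i$ is an instance of \SDProb with common due date $d^{(i)} \leq d_{\max}$, so computing its solution vector $A_i$ (of length $d^{(i)}+1 \le d_{\max}+1$) costs $\widetilde{O}(|J_i| + (d^{(i)})^2) = \widetilde{O}(|J_i| + d_{\max}^2)$. Summing over all $i \in \{1,\ldots,d_{\#}\}$ and using $\sum_i |J_i| = n$ gives a total of $\widetilde{O}(n + d_{\#}d_{\max}^2)$ for this step, which already matches the claimed bound. For line~4, each convolution $A \oplus A_i$ is a $(\max,+)$-convolution of two vectors of length at most $d_{\max}+1$; using the trivial quadratic algorithm this costs $O(d_{\max}^2)$ time, and there are $d_{\#}-1$ such convolutions, for a total of $O(d_{\#}d_{\max}^2)$. (I should also note that the intermediate vector $A$ never grows beyond length $d_{\max}+1$, so truncating at index $d_{\max}$ keeps every convolution within this bound.) Adding up the contributions of all lines yields the claimed $\widetilde{O}(n + d_{\#}d_{\max}^2)$ running time.

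I do not anticipate a genuine obstacle here; the lemma is essentially bookkeeping. The one point that deserves a sentence of care is the bound on the length of the running vector $A$ in line~4 — a priori a $(\max,+)$-convolution of two length-$\ell$ vectors produces a length-$(2\ell)$ vector, so one must explicitly observe that we only ever need entries up to index $d_{\max}$ (since early jobs have total processing time at most $d_{\max}$, and in fact at most $d^{(i)}$ after processing $J_1,\ldots,J_i$) and discard the rest. With that observation in place, the per-convolution cost is $O(d_{\max}^2)$ and the analysis goes through cleanly.
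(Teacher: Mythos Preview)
Your proposal is correct and follows essentially the same approach as the paper: bound line~2 via Theorem~\ref{thm:CyganUpperBound} to get $\widetilde{O}(n_i + d_{\max}^2)$ per subinstance and sum, and bound each convolution in line~4 by the trivial $O(d_{\max}^2)$, for a total of $\widetilde{O}(n + d_{\#}d_{\max}^2)$. Your extra remark about truncating $A$ at index $d_{\max}$ is fine but in fact unnecessary under the paper's Definition~\ref{def:MaxPlusConv}, where $A \oplus B$ has the length of the longer input vector, so $|A|$ after iteration $i$ is exactly $d^{(i)}+1 \le d_{\max}+1$.
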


\begin{proof}
The first line of the algorithm requires $O(n)$ time. Computing the solution vector $A_i$ in line 2 requires $\widetilde{O}(n_i+(d^{(i)})^2)=\widetilde{O}(n_i+d_{\max}^2)$ time, $n_i=|J_i|$, by Theorem~\ref{thm:CyganUpperBound}. Finally, computing the convolution $A \oplus A_i$ in each iteration of line 4 can be done straightforwardly in $O(d^{(i-1)} \cdot d^{(i)})= O(d_{\max}^2)$ time. Altogether, this gives us a bound of $\sum_i \widetilde{O}(n_i+d_{\max}^2)=\widetilde{O}(n+d_{\#}d_{\max}^2)$ for the entire algorithm. 
\end{proof}

Combining Lemma~\ref{lem:Correctness} and Lemma~\ref{lem:RunningTime} completes the proof of Theorem~\ref{thm:MainAlgorithm}. In the next sections we show how to improve the running time bound given in Lemma~\ref{lem:RunningTime} using different techniques to compute $(\max,+)$-convolutions.

\section{Speedup via the prediction technique}
\label{sec:prediction} 

In this section we show how to improve the running time of our algorithm, under certain parameter ranges, by using the prediction technique of Bateni \emph{et al.}~\cite{BateniHSS18}. In particular, we provide a proof for Theorem~\ref{thm:MainSpeedup}. The key ingredient is that we can speed the $(\max,+)$-convolutions of the algorithm by extending the ideas of Bateni \emph{et al.} to the multiple due date setting. 
\begin{lemma}
\label{lem:PredictionMain}%
One can compute the $(\max,+)$-convolution at any iteration of Algorithm~\ref{alg:Main}  in $\widetilde{O}(n+d_{\#}d_{\max}w_{\max})$ time.
\end{lemma}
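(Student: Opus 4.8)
The plan is to accelerate line~4 of Algorithm~\ref{alg:Main} by equipping the running vector~$A$ with a cheap and accurate \emph{prediction}, in the sense of Bateni \emph{et al.}~\cite{BateniHSS18}, and then calling their prediction-based $(\max,+)$-convolution routine as a black box. Recall from the proof of Lemma~\ref{lem:Correctness} that at the start of iteration~$i$ the vector $A$ is exactly the solution vector of $J_{<i}:=J_1\cup\dots\cup J_{i-1}$, that is $A=A_1\oplus\dots\oplus A_{i-1}$, and it has length $d^{(i-1)}\le d_{\max}$. For each due-date class $J_\ell$, let $L_\ell$ be the concave, piecewise-linear ``LP curve'' obtained by sorting the jobs of $J_\ell$ in non-increasing order of density $w_j/p_j$ and letting $L_\ell(k)$ be the value of the greedy fractional fill of a capacity-$k$ knapsack (made constant beyond $k=d^{(\ell)}$, matching that $A_\ell$ plateaus there). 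Each $L_\ell$ has $O(|J_\ell|)$ breakpoints, is computable in $\widetilde{O}(|J_\ell|)$ time, and, by the textbook integrality gap of Knapsack, satisfies $0\le L_\ell(k)-A_\ell[k]\le w_{\max}$ for all~$k$; so (a suitable rounding of) $L_\ell$ is a prediction for $A_\ell$ with additive error at most $w_{\max}$.

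First I would check that predictions compose under $(\max,+)$-convolution with \emph{additive} error. Put $\widehat{A}^{(i)}:=L_1\oplus\dots\oplus L_i$; then a one-line induction gives $A^{(i)}[k]\le\widehat{A}^{(i)}[k]\le A^{(i)}[k]+i\,w_{\max}$ for all~$k$, where $A^{(i)}:=A_1\oplus\dots\oplus A_i$. The lower bound is immediate from $L_\ell\ge A_\ell$ pointwise; for the upper bound, evaluate $\widehat{A}^{(i)}[k]$ at its optimal split $k=k_1+\dots+k_i$ and use $L_\ell(k_\ell)\le A_\ell[k_\ell]+w_{\max}$ together with the fact that this same split is feasible, hence not better than optimal, for the convolution $A^{(i)}[k]$. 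Since the $L_\ell$ are concave piecewise-linear, $\widehat{A}^{(i)}$ is simply their ``slope merge'': concave, piecewise-linear, with $O(n)$ breakpoints (as $i\le d_{\#}\le n$), evaluable in $O(\log n)$ time, and updatable from $\widehat{A}^{(i-1)}$ to $\widehat{A}^{(i)}$ in $\widetilde{O}(|J_i|)$ time by inserting the slopes of $L_i$. Thus, entering iteration~$i$, I have in hand a prediction $\widehat{A}^{(i-1)}$ for $A$ of error at most $(i-1)w_{\max}\le d_{\#}w_{\max}$, and the prediction $L_i$ for $A_i$ of error at most $w_{\max}$; moreover the prediction for the output $A\oplus A_i$ is just $\widehat{A}^{(i)}=\widehat{A}^{(i-1)}\oplus L_i$, which I obtain by one more slope merge rather than from the convolution routine.

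Next I would feed $A$, $A_i$ and these predictions into the prediction-based convolution procedure of Bateni \emph{et al.}~\cite{BateniHSS18}: given two monotone vectors of length at most~$m$ together with monotone predictions of additive error at most~$\Delta$, each evaluable in polylogarithmic time, their procedure returns the \emph{exact} $(\max,+)$-convolution in $\widetilde{O}(m+m\Delta)$ time. With $m=d_{\max}$ and $\Delta=d_{\#}w_{\max}$ this costs $\widetilde{O}(d_{\#}d_{\max}w_{\max})$, and the remaining per-iteration work --- sorting $J_i$ to build $L_i$, the two slope merges, reading the input --- is $\widetilde{O}(n)$. This yields the claimed $\widetilde{O}(n+d_{\#}d_{\max}w_{\max})$ bound for the convolution of a single iteration. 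Summing over the $d_{\#}$ iterations of line~4, and computing each $A_\ell$ in $\widetilde{O}(|J_\ell|+d_{\max}w_{\max})$ time via the Bateni \emph{et al.} algorithm (or via Theorem~\ref{thm:CyganUpperBound}), then gives Theorem~\ref{thm:MainSpeedup}.

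I expect the main obstacle to be making the interface with the Bateni \emph{et al.} machinery fully precise: their notion of a prediction is phrased in terms of ``range intervals'', and one has to verify that the concave piecewise-linear curves $L_\ell$ and $\widehat{A}^{(i)}$ fit this notion, that the number of range intervals (and not merely the error width) stays near-linear through the $d_{\#}$-fold composition, and --- crucially --- that a single call to their convolution routine pays only a \emph{multiplicative} factor $d_{\#}$ through the enlarged error $\Delta=d_{\#}w_{\max}$, rather than a $d_{\#}^2$ blow-up. A secondary subtlety is that the running vector $A$ is never ``atomic'': its prediction is a merge of up to $d_{\#}$ concave curves, so the error accounting of the composition, the incremental updates, and the evaluation queries all have to be carried out on this compound representation, and the Knapsack integrality-gap bound has to be applied class by class and then summed --- which is precisely where the extra $d_{\#}$ factor originates.
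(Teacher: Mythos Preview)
Your high-level strategy coincides with the paper's: approximate the solution vectors by their LP relaxations, show the additive error is $O(d_{\#}w_{\max})$, and invoke the prediction technique of Bateni \emph{et al.} Your choice of prediction, however, differs. You take $\widehat{A}^{(i)}=L_1\oplus\cdots\oplus L_i$, the slope merge of the per-class fractional Knapsack curves; the paper instead computes, via a dedicated greedy procedure (Algorithm~\ref{alg:fracsolvec}), the fractional solution vector of the full multi-due-date instance $J_1\cup\cdots\cup J_i$, and proves the same $d_{\#}w_{\max}$ error bound (Lemma~\ref{lem:goodapprox}) by arguing that an optimal fractional solution has at most $d_{\#}$ non-integral coordinates. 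Your composition argument for the error is correct and arguably cleaner, and since all your curves are concave, your prediction for $C$ comes for free as a slope merge rather than requiring a fresh run of Algorithm~\ref{alg:fracsolvec} on $J_A\cup J_B$.

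The gap you flag in your last paragraph is real and is exactly where the paper spends its effort. Theorem~\ref{thm:Bateni} does not take ``predictions'' as input; it takes \emph{range intervals} $[x_k,y_k]$ that must be monotone in $k$ and must contain, for every output index, an optimal split. The paper constructs these intervals explicitly (Lemma~\ref{lem:rangeintervals} and Algorithm~\ref{alg:ranges}) by proving that $\Delta_k(\ell)=C'[k{+}\ell]-A'[k]-B'[\ell]$ is unimodal in $\ell$ and that the resulting thresholds $x_k,y_k$ are monotone in $k$; both facts are derived from the WSPT structure of the greedy fractional schedule. With your concave predictions the analogous unimodality and monotonicity also hold (they follow from the slope-merge description of the sup-convolution of concave curves), but you have not proved them, and you cannot sidestep them by treating Bateni \emph{et al.} as a black box that accepts predictions alone. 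Supplying that argument would complete a valid, and slightly more elementary, alternative to the paper's proof.
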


\begin{proof}[Proof of Theorem~\ref{thm:MainSpeedup}]
Lemma~\ref{lem:PredictionMain} proves that Algorithm~\ref{alg:Main} correctly computes the minimum weighted number of tardy jobs in any given \Prob instance. To bound its running time, note that lines~1 and~2 of the algorithm can be performed in $\widetilde{O}(n+d_{\#}d_{\max}w_{\max})$ time by using the Knapsack algorithm of Bateni \emph{et al.}~\cite{BateniHSS18}. Moreover, each iteration in line 4 requires $\widetilde{O}(n+d_{\#}d_{\max}w_{\max})$ time according to Lemma~\ref{lem:PredictionMain}. This gives us a total of $\widetilde{O}(d_{\#}n + d^2_{\#}d_{\max}w_{\max})$
for all iterations all together. As this running time dominates the time required to perform lines~1 and~2, this gives us the overall running time of our algorithm. 
\end{proof}

The remainder of the section is devoted to proving Lemma~\ref{lem:PredictionMain}. The key tool used in the prediction technique by Bateni \emph{et al.} is encapsulated in the following theorem: 
\begin{theorem}[\cite{BateniHSS18}]
\label{thm:Bateni}
Let $A = (A[k])^m_{k=0}$ and $B = (B[\ell])^n_{\ell=0}$ be two integer vectors with $m \leq n$, and let~$e$ be some positive integer. Assume we have $m$ intervals $[x_k,y_k] \subseteq [0,n]$, $1 \leq k \leq m$, such that 
\begin{itemize}
\item $A[k]+B[\ell] \geq (A \oplus B)[k+\ell] - e$ for all $k \in \{1,\ldots,m\}$ and $\ell \in [x_k,y_k]$,
\item for all $\ell \in \{1,\ldots,n\}$ there exists an $k \in \{1,\ldots,m\}$ such that $A[k]+B[\ell-k]=(A \oplus B)[\ell]$ and $\ell-k \in [x_k,y_k]$.
\item $x_k \leq x_{k+1}$ and $y_k \leq y_{k+1}$ for all $0 \leq k < m$.
\end{itemize}
Then $A \oplus B$ can be computed in $\widetilde{O}(ne)$ time.
\end{theorem}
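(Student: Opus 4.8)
The plan is to reduce the computation of $A\oplus B$ to a small collection of \emph{bounded} $(\max,+)$-convolutions—convolutions in which all competing entries lie in an integer window of width $O(e)$—and to solve each of these by fast Fourier transform. The workhorse is the standard fact that if $A'$ and $B'$ are integer vectors of length $O(n)$ whose entries lie in $\{0,\ldots,W\}$, then $A'\oplus B'$ can be computed in $\widetilde{O}(nW)$ time. I would prove this by encoding each entry simultaneously by its position and by its value: fixing a base $D$ larger than every attainable output position and multiplying the univariate polynomials $\sum_k t^{\,k+D\cdot A'[k]}$ and $\sum_\ell t^{\,\ell+D\cdot B'[\ell]}$, each product monomial $t^{\,p+Ds}$ certifies a pair $(k,\ell)$ with $k+\ell=p$ and $A'[k]+B'[\ell]=s$. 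Because $D$ keeps the position part and the value part in disjoint digit ranges, $(A'\oplus B')[p]$ is read off as the largest $s$ whose monomial has a nonzero coefficient. The product has degree $O(nW)$, so a single FFT multiplication yields the $\widetilde{O}(nW)$ bound, which for $W=O(e)$ is exactly the target $\widetilde{O}(ne)$.

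Second, I would turn the three hypotheses into such a bounded instance. By the second hypothesis every output value is witnessed inside its prediction window, so restricting the convolution to admissible pairs loses nothing: $(A\oplus B)[p]=\max\{A[k]+B[p-k] : p-k\in[x_k,y_k]\}$. By the first hypothesis each admissible contribution satisfies $A[k]+B[p-k]\ge (A\oplus B)[p]-e$, so all contenders for a fixed output position $p$ already sit in the width-$e$ window $[(A\oplus B)[p]-e,\,(A\oplus B)[p]]$. Using the monotonicity of the endpoints I would then attach to each $p$ a single \emph{witness} index $k(p)$ with $p-k(p)\in[x_{k(p)},y_{k(p)}]$, found by binary search over the non-decreasing quantities $k+x_k$ and $k+y_k$. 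Monotonicity forces $k(p)$ to be non-decreasing, so the output coordinates sharing a common witness $k$ form a contiguous block $S_k$, and on that block the natural baseline $P[p]:=A[k]+B[p-k]$ satisfies $P[p]\le (A\oplus B)[p]\le P[p]+e$.

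The crux—and the step I expect to be the main obstacle—is that this baseline depends on the \emph{output} coordinate $p=k+\ell$ and so does not split into a $k$-part plus an $\ell$-part, which is what the FFT encoding above demands. Here the blocking by witnesses pays off: on $S_k$ the baseline is literally $A[k]+B[p-k]$, i.e.\ a shift of $B$, so after writing $p=k+j$ the task on the block becomes the single convolution $(A^{(k)}\oplus B)[j]$ with $A^{(k)}[i]:=A[k+i]$, whose value is now pinned to within $e$ of $A[k]+B[j]$. I would argue, again from monotonicity, that the set of summation indices $i$ and the range of $B$ relevant to $S_k$ are both contiguous and that each input index is relevant to only $\widetilde{O}(1)$ blocks, so the input sizes of the per-block convolutions sum to $\widetilde{O}(n)$. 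Combined with the width-$e$ value control this makes each per-block instance a genuinely bounded convolution solvable by the workhorse, and summing the per-block costs gives the claimed $\widetilde{O}(ne)$ bound.

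The delicate part, where something beyond routine bookkeeping is needed, is precisely this last reduction: one must simultaneously certify (a) from the additive-error hypothesis that, after subtracting the witness baseline, every admissible contribution on a block lands in an $O(e)$-width integer window, and (b) from the monotone-covering hypotheses that the relevant $A$- and $B$-indices form contiguous ranges of total size $\widetilde{O}(n)$ with bounded overlap across blocks. Establishing (a) and (b) together—so that each block truly reduces to a bounded FFT convolution of the right size—is the heart of the argument; everything else is assembling the per-block answers and verifying, via the first two hypotheses, that the restricted maxima coincide with $A\oplus B$.
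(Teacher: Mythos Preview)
This theorem is cited from Bateni \emph{et al.}~\cite{BateniHSS18}; the present paper imports it as a black box and gives no proof of its own, so there is no in-paper argument to compare your attempt against.

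Evaluating the sketch on its own merits: your architecture---reduce to $(\max,+)$-convolutions of bounded-entry vectors and dispatch those with the polynomial-encoding FFT trick---is the right shape, but the step you yourself flag as ``delicate'' contains a genuine gap. Knowing that every admissible contribution $A[k']+B[\ell']$ with $k'+\ell'=p$ lies within $e$ of your baseline $P[p]=A[k]+B[p-k]$ is a statement about the \emph{sums}, not about the two summands separately. Your FFT workhorse, however, requires the \emph{entries} of the two input vectors to sit in $\{0,\ldots,O(e)\}$. The only baseline subtraction that preserves the convolution structure is one that splits as $f(k')+g(\ell')$ with $f,g$ affine, and nothing in the hypotheses forces $A$ or $B$ to be close to an affine function on a block. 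Concretely: on your block $S_k$ you want $\max_i\bigl(A[k{+}i]+B[j{-}i]\bigr)$ for each $j$, and you know this max lies in $[A[k]+B[j],\,A[k]+B[j]+e]$; but $A[k{+}i]-A[k]$ and $B[j{-}i]-B[j]$ can each be arbitrarily large in absolute value even while their sum is at most $e$, so the per-block instance is \emph{not} a bounded-entry convolution and your workhorse does not apply as stated.

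Put differently, your condition (a) bounds the anti-diagonal sums and condition (b) controls the index ranges, but neither yields bounded individual entries after any normalization compatible with convolution. Closing this gap is precisely where the substance of the Bateni \emph{et al.} argument lies; your outline stops one idea short of it.
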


For two integer vectors $A$ and $B$, we refer to a set of $|A|$ intervals that satisfy all requirements of Theorem~\ref{thm:Bateni} with parameter~$e$ as \emph{range intervals of $A$ in $B$ with error $e$}. By Theorem~\ref{thm:Bateni} above, to prove Lemma~\ref{lem:PredictionMain} it is enough to prove the following:
\begin{lemma}
\label{lem:rangeinterval}
Consider some iteration $i \in \{1,\ldots,d_{\#}\}$ in Algorithm~\ref{alg:Main}, and let $A$ and $B=A_i$ denote the two vectors to be convoluted at this iteration. There is an algorithm running in $\widetilde{O}(n+d_{\#} d_{\max})$ time that computes range intervals of $A$ in $B$ with error $e = O(d_{\#}w_{\max})$.
\end{lemma}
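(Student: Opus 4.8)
The plan is to mimic the construction of range intervals that Bateni \emph{et al.}~\cite{BateniHSS18} use for a single Knapsack instance, but to track the extra slack introduced by the fact that $A$ is itself the $(\max,+)$-convolution of up to $d_{\#}-1$ solution vectors. Recall the basic idea from the single-due-date case: if $B = A_i$ is the solution vector of a Knapsack instance with maximum weight $w_{\max}$, then the ``marginal gain'' $B[\ell]-B[\ell-1]$ is bounded by $w_{\max}$, so $B$ is close to a step function with few large jumps; this is what lets one predict, for each index $k$ of $A$, a short interval $[x_k,y_k]$ of indices $\ell$ for which $A[k]+B[\ell]$ is within an additive error of $(A\oplus B)[k+\ell]$. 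In our setting the roles are not symmetric: $A$ is an accumulated solution vector over $J_1\cup\cdots\cup J_{i-1}$ and $B=A_i$ is a single-due-date solution vector. The key structural observation I would establish first is that the entries of $A$ still have bounded marginal increments: since $A$ is a solution vector for $J_1\cup\cdots\cup J_{i-1}$ (by \cref{lem:Correctness}, this is exactly what the vector holds after iteration $i-1$), we have $0 \le A[k]-A[k-1] \le w_{\max}$ for every $k$, because extending the processing-time budget by one unit can only admit weight from a single additional job — no, more carefully, I would argue $A[k] \le A[k-1] + w_{\max}$ by a local exchange argument on an optimal early-job set, and symmetrically for $B$.

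Next I would fix an iteration $i$ and, for each index $k$ in the range of $A$, define the candidate interval $[x_k,y_k]$ as the set of offsets $\ell$ realizing (up to the target error $e$) the maximum in $(A\oplus B)[k+\ell]$, and verify the three bullet conditions of \cref{thm:Bateni}. Monotonicity of the interval endpoints ($x_k\le x_{k+1}$, $y_k\le y_{k+1}$) should follow from the same kind of exchange/ rearrangement argument that Bateni \emph{et al.} use, exploiting that both $A$ and $B$ are non-decreasing with bounded increments; I would carry over their argument essentially verbatim, since it only uses the increment bound, which we now have on \emph{both} vectors. The first bullet — that $A[k]+B[\ell] \ge (A\oplus B)[k+\ell]-e$ on the interval — is where the error parameter $e$ is spent. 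In the single-instance case the error is $O(w_{\max})$; here, because $A$ aggregates $i-1 \le d_{\#}-1$ constituent solution vectors, the ``rounding'' performed in the prediction step accumulates once per constituent, giving $e = O(d_{\#} w_{\max})$. Making this precise is the crux: I would decompose the optimal early-job set achieving $(A\oplus B)[k+\ell]$ into its parts in $J_1,\dots,J_{i-1},J_i$, and show that shifting the processing-time budget within each $J_t$ to the nearest ``predicted breakpoint'' loses at most $O(w_{\max})$ weight per part, for a total loss of $O(d_{\#}w_{\max})$.

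For the running time, computing the $J_i$'s and the needed auxiliary data (breakpoints of each constituent solution vector, prefix information for $A$) costs $\widetilde O(n)$ for the partition plus $\widetilde O(d_{\#} d_{\max})$ to scan the vectors, which have length $O(d_{\max})$, once per due date; this matches the claimed $\widetilde O(n + d_{\#} d_{\max})$ bound. I expect the main obstacle to be the second bullet of \cref{thm:Bateni}: one must show that for \emph{every} output index $\ell$ there is some $k$ with $A[k]+B[\ell-k]=(A\oplus B)[\ell]$ \emph{and} $\ell-k$ lying in the predicted interval $[x_k,y_k]$ — i.e.\ the intervals not only give a good additive approximation but actually \emph{cover} an exact optimum. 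In the Bateni \emph{et al.} framework this is handled by defining the intervals directly in terms of exact optimizers; I would need to check that their definition still yields a valid covering when $A$ is a convolution of several structured vectors rather than a single Knapsack vector, and that the monotonicity of endpoints is preserved under this more general definition. If a direct adaptation fails, the fallback is to widen each interval slightly (absorbing the extra width into the $\widetilde O(\cdot)$ of the running time) so that covering becomes automatic, at no asymptotic cost.
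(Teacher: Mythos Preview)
Your proposal has a genuine gap: you never explain how to \emph{compute} the range intervals without already knowing $C = A \oplus B$. You propose to ``define the candidate interval $[x_k,y_k]$ as the set of offsets $\ell$ realizing (up to the target error $e$) the maximum in $(A\oplus B)[k+\ell]$,'' but this is circular --- $(A\oplus B)$ is precisely the object we are trying to compute, and finding its exact optimizers is no cheaper. The bounded-increment observation $0 \le A[k]-A[k-1]\le w_{\max}$ is correct, but by itself it does not tell you, for a given $k$, which offsets $\ell$ are near-optimal; you need an efficiently computable proxy for $C$ against which to measure $A[k]+B[\ell]$. Your running-time accounting (``scan the vectors once per due date'') presupposes such a proxy without saying what it is.

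This proxy is exactly what the paper supplies, and it is the idea your plan is missing: one computes \emph{fractional} solution vectors $A'$, $B'$, and crucially $C'$ (for the combined instance $J_1\cup\cdots\cup J_i$) via a WSPT-based greedy algorithm in $\widetilde O(n + d_{\#} d_{\max})$ time, and defines the intervals through the thresholded set $\{\ell : C'[k+\ell] - A'[k] - B'[\ell] \le 2i\cdot w_{\max}\}$. The error bound $e = O(d_{\#} w_{\max})$ then arises not from ``rounding once per constituent convolution'' as you suggest, but from the fact that an optimal fractional solution for an instance with $d_{\#}$ due dates has at most $d_{\#}$ non-integral variables (at most one per due date), so the integrality gap between $C'$ and $C$ (and between $A'$ and $A$) is at most $d_{\#} w_{\max}$. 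Both the monotonicity of the interval endpoints and the coverage of an exact optimizer (the second bullet of Theorem~\ref{thm:Bateni}) are derived from the unimodality of $\ell \mapsto C'[k+\ell] - A'[k] - B'[\ell]$, which in turn is a consequence of the greedy/WSPT structure of the fractional computation, not of exchange arguments on the integer vectors. Your recollection that Bateni \emph{et al.} ``define the intervals directly in terms of exact optimizers'' is inaccurate --- they also work through the LP relaxation --- and this misremembering is what sends your plan in a direction that cannot close.
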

We provide a complete proof of Lemma~\ref{lem:rangeinterval} below.

\subsection{Fractional Solution Vectors}
\label{sec:rangewmax}%

Roughly speaking, the range intervals for $A$ in $B$ tell us which entries $A[k]$ and $B[\ell]$ sum up to a value close to $C[k+\ell]$. To be able to determine this without computing $C$, we need a good approximation for $C$. Similar to Bateni \emph{et al.}~\cite{BateniHSS18}, we make use of the solution to the \emph{fractional} version of our problem to obtain an approximation for the (non-fractional) solution.
\begin{definition}
In the fractional \Prob problem we are given a set of jobs $J=\{1,\ldots,n\}$, with processing times $(p_j)_{j=1}^n$, weights $(w_j)_{j=1}^n$, and due dates $(d_j)_{j=1}^n$, and our goal is to compute~$n$ real values $0 \leq x_1, \ldots, x_n \leq 1$ such that 
\begin{itemize}
\item $\sum_{d_k \leq d_j } p_kx_k\le d_j$ for all $j \in J$, and
\item $\sum_{j \in J} w_jx_j$ is maximized. 
\end{itemize}
\end{definition}
\noindent Similar to the 0/1 version of \Prob, we define a \emph{fractional solution vector} for a fractional \Prob instance $J$ as a vector $A'$ of size $d_{\max} = \max\{d_j : j \in J\}$, where $A'[k]$ equals the maximum value $\sum w_jx_j$ over all feasible solutions that satisfy the additional constraint $\sum p_jx_j\le k$, for each $k \in \{0,\ldots,d_{\max}\}$. 

Algorithm~\ref{alg:fracsolvec} below computes a solution vector for a given fractional \Prob instance. It is a modified version of the standard greedy algorithm for computing fractional solutions for Knapsack~\cite{dantzig1957discrete} used by Batani \emph{et al.}~\cite{BateniHSS18}, that is able to deal with the presence of more than one due date.
The algorithm assumes the jobs are sorted according to the \emph{Weighted Shortest Processing Time (WSPT)} rule, \emph{i.e} in non-increasing values of $w_j/p_j$. This is a standard technique in many scheduling algorithms, and can be performed by any $O(n \log n)$ sorting algorithm. Moreover, in the description of the algorithm, we assume that $w_j/p_j = 0$ whenever $j > n$.

In the algorithm we keep track of the total processing times $p^{(i)}$ of (fractional) jobs that are scheduled early from $J_1 \cup \cdots \cup J_i$. The algorithm then iterates over the WSPT sorted list of jobs, and repeatedly adds a slice of unit processing from the current job to the next entry in the solution vector. Before adding the slice, the algorithm checks that no due dates will be violated by adding the slice using the current values of the $p^{(i)}$'s. If adding the current slice violates some due date, the algorithm skips to the next job in the list. 

\begin{algorithm}
\caption{}
\label{alg:fracsolvec}
\begin{algorithmic}[1]
\Input{WSPT sorted set of jobs $J=\{1,\ldots,n\}$ with $d_{\char"0023}$ different due dates.}
\Output{A fractional solution vector $A'$ for $J$.}
\State Let $\{d^{(1)},\ldots,d^{(d_{\#})}\}=\{d_j : j \in J\}$ be the different due dates in $J$. 
\State Set $p^{(i)}=0$ for all $1\le i\le d_{\#}$.
\State Set $A'[0]=0$, $j=1$, and $p^\star_j=0$.
\For{$k=1,\ldots,d^{(d_{\#})}$}\label{line:firstfor}
\State Let $i \in \{1,\ldots,d_{\#}\}$ be such that $d_j=d^{(i)}$.
\If{$\min\{p_j-p^\star_j,d^{(i)}-p^{(i)}, d^{(i+1)}-p^{(i+1)}, \ldots, d^{(d_{\#})}-p^{(d_{\#})}\} > 0$}  \label{line:check}
\State $A'[k]=A'[k-1]+w_j/p_j$.
\State $p^\star_j=p^\star_j+1$.\label{line:pincrease}
\State \textbf{for} $i' = i,\ldots,d_{\#}$ \textbf{do} 
$p^{(i')}=p^{(i')}+1$.\label{line:secondfor}
\EndIf
\State \textbf{otherwise} $j=j+1$ and $p^\star_j=0$.
\EndFor
\State \textbf{return} $A'$.
\end{algorithmic}
\end{algorithm}


\begin{lemma}
\label{lem:fracsolvec}
Given a fractional \Prob instance with $n$ jobs, $d_{\#}$ many different due dates, and maximum due date $d_{\max}$, sorted according to the WSPT rule, Algorithm~\ref{alg:fracsolvec} correctly computes a solution vector for this instance in $\widetilde{O}(n+d_{\#}d_{\max})$ time. 
\end{lemma}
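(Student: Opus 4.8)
The plan is to recognize Algorithm~\ref{alg:fracsolvec} as the classical greedy algorithm on a laminar matroid, so that correctness of \emph{every} entry $A'[k]$ drops out at once from the prefix‑optimality of greedy. Split each job~$j$ into $p_j$ identical ``unit slots'' of value $w_j/p_j$, write $U_j$ for this $p_j$‑element set, and for each due date $d^{(i')}$ put $C_{i'}=\bigcup_{d_k\le d^{(i')}}U_k$. The sets $C_1\subsetneq\cdots\subsetneq C_{d_{\#}}$ form a chain, hence a laminar family, so for any capacity $c\le d_{\max}$ the chain $C_1,\ldots,C_{d_{\#}-1},C_{d_{\#}}$ with capacities $d^{(1)},\ldots,d^{(d_{\#}-1)},c$ defines a laminar matroid $M_c$ on $\bigcup_j U_j$: a slot set $S$ is independent iff $|S\cap C_{i'}|\le d^{(i')}$ for $i'<d_{\#}$ and $|S|\le c$. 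I would then record the routine translation: a slot set $S$ corresponds to $y_j:=|S\cap U_j|\in\{0,\ldots,p_j\}$ and to the fractional point $x_j=y_j/p_j$; under it $|S\cap C_{i'}|\le d^{(i')}$ becomes $\sum_{d_k\le d^{(i')}}p_kx_k\le d^{(i')}$ and $|S|\le k$ becomes $\sum_j p_jx_j\le k$. Since in the variables $y_j$ the constraint system defining $A'[k]$ (chain inequalities plus the box $0\le y_j\le p_j$) has interval form and is therefore totally unimodular with integral right‑hand sides, its LP optimum is attained at an integral~$y$, so $A'[k]=\max\{\,w(S):S\text{ independent in }M_k\,\}$, where $w(S)$ sums the values of the slots in $S$.

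Next I would check that Algorithm~\ref{alg:fracsolvec} is exactly greedy for $M:=M_{d_{\max}}$. Scanning jobs in WSPT order, and within a job its slots one at a time, processes the slots in non‑increasing value order (ties broken arbitrarily, which is harmless for greedy). The test in line~\ref{line:check} accepts the next slot of job~$j$, lying in class $i$, precisely when (i)~$j$ still has an unused slot, $p_j-p^\star_j>0$, and (ii)~no capacity $d^{(i')}$ with $i'\ge i$ is tight, i.e.\ $d^{(i')}-p^{(i')}>0$ for all such $i'$ — and these are exactly the independence constraints of $M$ that adding this slot could violate, while $|S|\le d_{\max}$ holds automatically because the loop appends at most one slot per index $k\le d_{\max}$. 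The update in line~\ref{line:secondfor} bumps $p^{(i')}$ for exactly those $C_{i'}$ containing a class‑$i$ slot, namely $i'\ge i$. When the test fails, every remaining slot of~$j$ is in the same class and is blocked for the same reason (a tight capacity never becomes slack again, as any slot that would increase it is rejected), so skipping the rest of~$j$ and advancing to $j+1$ is exactly what greedy does; the convention $w_j/p_j=0$ for $j>n$ halts the process once nothing more can be added. Writing $s_1,\ldots,s_r$ for the accepted slots in order, with $r=\operatorname{rank}(M)$, we obtain $A'[k]=w(\{s_1,\ldots,s_{\min(k,r)}\})$.

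The crux is then prefix‑optimality: $w(\{s_1,\ldots,s_k\})=\max\{w(S):S\text{ independent in }M_k\}$ for every $k\le r$. Here $\{s_1,\ldots,s_k\}$ is independent in $M_k$ (a size‑$k$ subset of an $M$‑independent set, and $k\le d_{\max}$), while the truncation $M|_k$ of $M$ to rank~$k$ is again a matroid whose independent sets are precisely the $M$‑independent sets of size $\le k$, which coincide with those of $M_k$ since $k\le d_{\max}$. Running greedy on $M|_k$ makes the same accept/reject decisions as greedy on $M$ until $k$ slots have been chosen — the two tests differ only once the current set already has size $k$ — so greedy on $M|_k$ returns exactly $\{s_1,\ldots,s_k\}$; by the classical optimality of greedy on a matroid this is a maximum‑weight independent set of $M|_k$, hence of $M_k$. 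For $k>r$ the loop has stopped adding and $A'[k]=A'[r]$, which is correct because the budget $k$ is then not the binding constraint. Together with the first paragraph this gives $A'[k]=\max\{\sum_j w_jx_j:\sum_j p_jx_j\le k\text{ and all due‑date constraints hold}\}$, so $A'$ is a correct fractional solution vector. I expect this truncation/prefix argument — together with pinning down the exact bookkeeping by which the algorithm advances past blocked jobs within a single value of~$k$ — to be the main obstacle; a more self‑contained alternative is a direct induction on~$k$ with a one‑slot exchange argument, which is just the specialization of the greedy optimality proof to this laminar matroid.

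For the running time, sorting the jobs by the WSPT rule costs $O(n\log n)$. The body of the for‑loop in line~\ref{line:firstfor} is executed once per accepted slot and once per skip, hence at most $d_{\max}+n$ times (each skip advances the monotone pointer~$j$). The acceptance test of line~\ref{line:check} can be done in $O(1)$ time if, alongside the $p^{(i')}$, one maintains $g:=\max\{i':p^{(i')}=d^{(i')}\}$, with $g=0$ when no capacity is tight: since tight capacities stay tight, $g$ is non‑decreasing and can be updated for free inside the loop of line~\ref{line:secondfor}, and the test then reads simply ``$p^\star_j<p_j$ and $i>g$''. The only super‑constant per‑slot cost is the update loop of line~\ref{line:secondfor}, which is $O(d_{\#})$ per accepted slot and hence $O(d_{\#}d_{\max})$ in total. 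Summing, the whole algorithm runs in $O(n\log n+n+d_{\#}d_{\max})=\widetilde O(n+d_{\#}d_{\max})$, as claimed.
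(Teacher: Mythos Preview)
Your argument is correct and takes a genuinely different route from the paper. The paper proves correctness by a direct induction on $k$: it maintains the invariant that each $p^{(i)}$ equals the cumulative class-$i$ processing time of an explicit optimal fractional solution of length $k$, and for the inductive step uses a one-step WSPT exchange argument to show that adding one more time unit of the current job keeps the solution optimal for length $k$. You instead recognise the problem structurally as maximum-weight independent set over a laminar (indeed chain) matroid on the unit slots, invoke total unimodularity of the interval constraint matrix to equate the fractional and slot-integral optima, and then obtain correctness for \emph{every} $k$ at once from the prefix-optimality of greedy via matroid truncation. Your approach is more conceptual and reduces the ``why greedy works'' part to standard matroid facts; the paper's approach is fully self-contained and avoids importing that machinery, at the cost of re-deriving the exchange argument by hand. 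Your running-time analysis is also more explicit than the paper's ``straightforward to verify'': the observation that tight capacities stay tight and can be summarised by a single non-decreasing pointer $g$ is what makes each skip cost $O(1)$ and cleanly yields $\widetilde O(n+d_{\#}d_{\max})$; a naive $O(d_{\#})$ evaluation of the test in line~\ref{line:check} on each of the at most $n$ skips would contribute an $O(nd_{\#})$ term that is not in general dominated by $n+d_{\#}d_{\max}$.
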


\begin{proof}
It is straightforward to verify that Algorithm~\ref{alg:fracsolvec} runs in $\widetilde{O}(n+d_{\#}d_{\max})$ time. To prove its correctness, we argue that $A'$ is a correct solution vector for the fractional \Prob instance~$J$. For a given fractional solution $0 \leq x_1,\ldots,x_n \leq 1$, we define the \emph{length} of this solution to be the value $\sum_j p_jx_j$. An \emph{optimal fractional solution of length $k$} is a feasible fractional solution with maximum value $\sum w_jx_j$ among all solutions of length $k$. 

Consider some iteration $k \in \{0,\ldots,d_{\max}\}$ of the algorithm, and let $j$ be the current job under consideration. We argue by induction on $k$ that there exists an optimal fractional solution $x_1,\ldots,x_n$ of length $k$, with value $A'[k]$, such that at iteration~$k$ we have 
\begin{equation}
\label{eqn:pi}%
p^{(i)} \quad=\sum_{\ell \leq j, \,d_\ell \leq d^{(i)}} p_\ell x_\ell \quad \leq \quad d^{(i)}   
\end{equation}
for each $i \in \{1,\ldots,d_{\#}\}$. For $k = 0$ this is clearly the case for the solution $x_1 = \cdots = x_n = 0$, as $A'[0]=0$. So assume $k > 0$, and let $i \in \{1,\ldots,d_{\#}\}$ be such that $d_{j}=d^{(i)}$. Moreover, let $x_1,\ldots,x_n$ be the fractional solution of length at most $k -1$ which is guaranteed by induction. Observe that line~\ref{line:check} ensures that we can add extra unit of processing time unit to all $p^{(i')}$'s with $i' \geq i$ without violating~(\ref{eqn:pi}). It also ensures that $x_j+1/p_j \leq 1$, since $p_j-p^\star_j=p_j - p_j\cdot x_j >0$. Thus, by setting $x_j=x_j+1/p_j$, we obtain a fractional solution that satisfies (\ref{eqn:pi}) after the $p^{(i)}$'s are updated in line~\ref{line:secondfor} at the end of the iteration. It follows that our new fractional solution remains feasible, and it clearly has length at most $k$. Therefore, what remains to show is that this solution is optimal among all feasible solutions of length $k$.

First observe that by the description of the algorithm, increasing any value of some $x_\ell$ with $\ell < j$ makes the solution $x_1,\ldots,x_n$ infeasible. Thus, to improve our solution, we can only increase the value of some $x_\ell$ with $\ell \geq j$. Since the jobs are ordered according to the WSPT rule, the current job $j$ has the best weight per processing time ratio among all jobs in $\{j,\ldots,n\}$. Thus, scheduling a processing time unit of a combination of different jobs in $\{j+1,\ldots,n\}$ cannot increase the total value of the solution by more than $w_j/p_j$. Moreover, increasing $x_j$ by more than $1/p_j$ results in a solution of length greater than $k$, and so $x_\ell>0$ for some other job $\ell < j$ must be decreased. However, this cannot increase $\sum_{j \in J} w_jx_j$ either, because all jobs $\ell < j$  have at least the same weight per processing time as job $j$. It follows that $x_1,\ldots,x_n$ has the maximum value $\sum w_jx_j$ among all feasible solutions of length~$k$.
\end{proof}

Each entry of the fractional solution vector clearly upper bounds the (integral) solution vector component-wise. That is, if $A$ is a solution vector of some \Prob instance, and $A'$ is the corresponding fractional solution vector which is computed by Algorithm~\ref{alg:fracsolvec} on this instance, then $A'[k] \geq A[k]$ for each entry $k$. In the next lemma we show that each $A'[k]$ is not too far away from~$A[k]$.



\begin{lemma}
\label{lem:goodapprox}%
Let $J$ be a given \Prob instance with $d_{\#}$ many different due dates, and maximum weight $w_{\max}$. Furthermore, let $A$ a solution vector for this instance, and let $A'$ be the fractional solution vector for $J$. Then 
$$
0 \leq A'[k]-A[k]  \leq d_{\#} w_{\max}
$$
for every entry $k$ in $A$ and $A'$. 
\end{lemma}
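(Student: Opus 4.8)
The plan is to work with the explicit fractional solution produced by Algorithm~\ref{alg:fracsolvec} and to show it has at most $d_{\#}$ fractional coordinates; zeroing those out then gives an integral solution whose weight is within $d_{\#}w_{\max}$ of $A'[k]$. The lower bound $0 \le A'[k]-A[k]$ requires no work: as already remarked before the lemma, the indicator vector of any set of early jobs of total processing time at most $k$ is a feasible fractional solution of length at most $k$ of the same weight, so $A'[k] \ge A[k]$.

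For the upper bound, fix an entry $k$ and let $x=(x_1,\dots,x_n)$ be the optimal fractional solution of length at most $k$, with $\sum_\ell w_\ell x_\ell = A'[k]$, that Algorithm~\ref{alg:fracsolvec} maintains at step $k$ (this is the solution built in the induction in the proof of Lemma~\ref{lem:fracsolvec}). Let $F=\{\ell : 0<x_\ell<1\}$ be its set of fractional jobs, and write $\delta(\ell)$ for the index $i$ with $d_\ell=d^{(i)}$. The crux --- and the step I expect to be the main obstacle --- is the claim $|F|\le d_{\#}$. The idea is as follows. Every $\ell\in F$ is among the jobs $1,\dots,j$ already visited at step $k$ (later jobs have $x_\ell=0$), and every such $\ell$ other than the current job $j$ was left fractional exactly when the algorithm executed its \textbf{otherwise} branch for $\ell$; since $p_\ell-p^\star_\ell=p_\ell(1-x_\ell)>0$ there, the failure of the test in line~\ref{line:check} must have been caused by some index $i'\ge\delta(\ell)$ with $d^{(i')}-p^{(i')}=0$. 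As the quantities $p^{(i')}$ only increase during the run and line~\ref{line:check} never lets $p^{(i')}$ exceed $d^{(i')}$, once such an index $i'$ is saturated it stays saturated, and from then on every job whose due date has index at most $i'$ is rejected outright and so never joins $F$. Hence, listing $F=\{\ell_1<\cdots<\ell_t\}$ in processing order and letting $i'_r$ (for $r<t$) be the saturated index that witnesses the abandonment of $\ell_r$, we obtain
\[
\delta(\ell_1) \;\le\; i'_1 \;<\; \delta(\ell_2) \;\le\; i'_2 \;<\; \cdots \;<\; i'_{t-1} \;<\; \delta(\ell_t),
\]
so $\delta(\ell_1)<\cdots<\delta(\ell_t)$ are $t$ distinct elements of $\{1,\dots,d_{\#}\}$, giving $t\le d_{\#}$.

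Finally, I round $x$ down: set $\bar x_\ell=1$ if $x_\ell=1$ and $\bar x_\ell=0$ otherwise. Since $\bar x\le x$ coordinatewise, $\bar x$ still satisfies every due-date constraint, so (being $0/1$) it is the indicator of a set $E$ of jobs that can all be completed early, scheduled in EDD order, with $\sum_\ell p_\ell\bar x_\ell\le\sum_\ell p_\ell x_\ell\le k\le d_{\max}$. Its weight is $\sum_\ell w_\ell\bar x_\ell = A'[k]-\sum_{\ell\in F} w_\ell x_\ell \ge A'[k]-|F|\,w_{\max}\ge A'[k]-d_{\#}w_{\max}$, and since $A[k]$ is at least the weight of any such set $E$, we conclude $A[k]\ge A'[k]-d_{\#}w_{\max}$, as required.
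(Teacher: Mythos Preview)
Your argument is correct and coincides with the paper's own proof: both show that the greedy fractional solution produced by Algorithm~\ref{alg:fracsolvec} has at most $d_{\#}$ non-integer coordinates (because each abandoned fractional job witnesses the saturation of some due-date index that thereafter blocks all jobs of smaller or equal deadline), and then round these down to obtain an integral early set whose weight drops by at most $d_{\#}w_{\max}$. Your chain $\delta(\ell_1)\le i'_1<\delta(\ell_2)\le\cdots<\delta(\ell_t)$ is just a slightly more explicit way of phrasing the paper's one-line observation that ``for every deadline $d^{(i)}$, at most one non-integer fraction of a job with that deadline is in the optimal fractional solution.''
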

\begin{proof}
We claim there exists an optimal fractional solution $(x_{j})_{{j}=1}^n$ for $J$ such that at most~$d_{\#}$ values of the fractional solution are non-integer. More specifically, we show that the optimal fractional solution $(x_{j})_{{j}=1}^n$ implicitly computed by Algorithm~\ref{alg:fracsolvec} has this property. Note that whenever $\min\{p_j-p^\star_j,d^{(i)}-p^{(i)}, d^{(i+1)}-p^{(i+1)}, \ldots, d^{(d_{\#})}-p^{(d_{\#})}\} = 0$ for some job $j$ with deadline $d_j=d^{(i)}$, we move to the next job. If $p_j-p^\star_j=0$, then we implicitly have $x_j=1$ since all of job $j$ is scheduled. If this is not the case, we have that $d^{(i')}-p^{(i')}=0$ for some $i'\ge i$. Assume that $j$ is the first job with deadline $d^{(i)}$ where this happens. It follows that no unit of processing time of any other job with deadline $d^{(i)}$ is scheduled. Consequently we have that for every deadline $d^{(i)}$, at most one non-integer fraction of a job with that deadline is in the optimal fractional solution. Hence, the total number of fractional jobs in the optimal fractional solution is at most $d_{\#}$. Removing these jobs decreases the total weight of early jobs by at most $d_{\#}\cdot w_{\max}$ and yields an integer solution for~$J$. It follows that $A'[k]-A[k]\le d_{\#} w_{\max}$ for all $k$.
\end{proof}

\subsection{Constructing the range intervals}

Consider some iteration $i \in \{2,\ldots,d_{\#}\}$ of Algorithm~\ref{alg:Main}. Let $A$ be a solution vector for the \Prob instance formed by all jobs in $J_A=J_1 \cup \ldots \cup J_{i-1}$ computed at iteration $i-1$, and let~$B$ denote the solution vector for the \SDProb instance $J_B=J_i$. Having Lemma~\ref{lem:goodapprox} in place, we can describe how to construct the range intervals of $A$ in $B$. We begin by first using Algorithm~\ref{alg:fracsolvec} to compute the fractional solution vectors $A'$, $B'$, and $C'$ corresponding to the instances $J_A$, $J_B$, and $J_C=J_A \cup J_B$, respectively. Next we define a set of intervals $[x_k,y_k]_{k=0}^{|A|}$ as follows.
\begin{align*}
x_k&=\min\{\ell: C'[k+\ell]-(A'[k]+B'[\ell])\le 2i \cdot w_{\max}\}, \text{ and}\\
y_k&=\max\{\ell: C'[k+\ell]-(A'[k]+B'[\ell])\le 2i \cdot w_{\max}\}.
\end{align*}
Using analogous arguments as is done by Bateni et al.~\cite{BateniHSS18}, we prove the following:
\begin{lemma}
\label{lem:rangeintervals}%
The set of intervals $\{[x_k,y_k] : 0 \leq k \leq |A|\}$ defined above are range intervals of $A$ in~$B$ with an error of $e = O(i\cdot w_{\max})$.
\end{lemma}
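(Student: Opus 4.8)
The plan is to verify the three bullet points of \cref{thm:Bateni} directly from the definitions of $x_k$ and $y_k$, using \cref{lem:goodapprox} as the bridge between the fractional vectors $A',B',C'$ and the integral vectors $A,B,C=A\oplus B$. Throughout I write $C=A\oplus B$ for the true convolution, and I will freely use the componentwise bounds $A[k]\le A'[k]\le A[k]+(i-1)w_{\max}$, $B[\ell]\le B'[\ell]\le B[\ell]+w_{\max}$, and $C[m]\le C'[m]\le C[m]+i\,w_{\max}$, which follow from \cref{lem:goodapprox} applied to the instances $J_A$ (with at most $i-1$ due dates), $J_B$ (one due date), and $J_C$ (at most $i$ due dates). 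The key elementary observation, which I would isolate first, is that $C'$ is itself (essentially) an upper bound on $A'\oplus B'$: any fractional solution of length $k$ for $J_A$ combined with any fractional solution of length $\ell$ for $J_B$ is a feasible fractional solution of length $k+\ell$ for $J_C$, hence $C'[k+\ell]\ge A'[k]+B'[\ell]$, so the quantity $C'[k+\ell]-(A'[k]+B'[\ell])$ appearing in the definitions of $x_k,y_k$ is always nonnegative and the sets defining $x_k,y_k$ are nonempty intervals of $\ell$-values (an interval because, fixing $k$, the function $\ell\mapsto C'[k+\ell]-(A'[k]+B'[\ell])$ behaves concavely enough that the sublevel set is an interval — or more robustly, because one can argue monotonicity of the ``gap'' on either side of its minimum; this is exactly the structural fact Bateni et al.\ exploit).

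For the first bullet ($A[k]+B[\ell]\ge C[k+\ell]-e$ for $\ell\in[x_k,y_k]$): fix such an $\ell$. By definition of $[x_k,y_k]$ we have $C'[k+\ell]-(A'[k]+B'[\ell])\le 2i\,w_{\max}$. Now chain the approximation bounds: $C[k+\ell]\le C'[k+\ell]\le A'[k]+B'[\ell]+2i\,w_{\max}\le \big(A[k]+(i-1)w_{\max}\big)+\big(B[\ell]+w_{\max}\big)+2i\,w_{\max}=A[k]+B[\ell]+(3i)w_{\max}$, so the first condition holds with $e=3i\,w_{\max}=O(i\,w_{\max})$. For the second bullet (for every $m$ there is a witness $k$ with $A[k]+B[m-k]=C[m]$ and $m-k\in[x_k,y_k]$): take $k$ to be any index attaining the maximum in $C[m]=\max_{k}(A[k]+B[m-k])$, and set $\ell=m-k$. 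We must show $\ell\in[x_k,y_k]$, i.e.\ $C'[m]-(A'[k]+B'[\ell])\le 2i\,w_{\max}$. Here we go the other direction: $C'[m]\le C[m]+i\,w_{\max}=A[k]+B[\ell]+i\,w_{\max}\le A'[k]+B'[\ell]+i\,w_{\max}\le A'[k]+B'[\ell]+2i\,w_{\max}$, using $A[k]\le A'[k]$ and $B[\ell]\le B'[\ell]$. Since $\ell$ lies in the (nonempty) sublevel set and $[x_k,y_k]$ is precisely that set's min/max envelope, $\ell\in[x_k,y_k]$, as required.

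For the third bullet (monotonicity: $x_k\le x_{k+1}$ and $y_k\le y_{k+1}$): this is where I expect the real work, and it is the place where one must mirror Bateni et al.'s argument rather than just push inequalities. The idea is that the fractional vectors $A',B',C'$ are concave (piecewise-linear with non-increasing slopes — this is immediate from the WSPT greedy of \cref{alg:fracsolvec}, since each successive unit slice added has weight-per-unit no larger than the previous), and for concave $A',B',C'$ the ``gap function'' $g_k(\ell):=C'[k+\ell]-A'[k]-B'[\ell]$ has the property that its minimizer in $\ell$ is monotone non-decreasing in $k$, and moreover the whole sublevel set shifts right as $k$ increases. Concretely: increasing $k$ by one replaces $A'[k]$ by $A'[k+1]$ (a change by the slope of $A'$ at $k$) and shifts the $C'$ term; because $C'$ has non-increasing slope, the value $\ell$ at which the $B'$-slope matches the residual $C'$-slope only moves rightward. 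I would prove this by an exchange argument on the slopes: if $\ell\in[x_k,y_k]$ but $\ell<x_{k+1}$ we derive a contradiction by comparing $g_k(\ell)$, $g_k(\ell+1)$, $g_{k+1}(\ell)$, $g_{k+1}(\ell+1)$ and using that the discrete second differences of $A',B',C'$ are all $\le 0$. The main obstacle is handling the boundary/edge effects (when $k+\ell$ exceeds $d_{\max}$ or hits the flat tail of $C'$ where the vector has become constant) and being careful that ``concave'' holds in the exact discrete sense needed; I would either restrict attention to the nontrivial range and note the tails are constant (so the gap is eventually monotone in a trivial way), or pad the vectors so concavity is literally global. Once monotonicity is established, the three bullets together give the lemma with $e=O(i\,w_{\max})$, completing the proof of \cref{lem:rangeintervals}.
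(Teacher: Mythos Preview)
Your treatment of the first two bullets via \cref{lem:goodapprox} is fine and matches the paper almost verbatim (the paper gets $e=4i\cdot w_{\max}$, you get $3i\cdot w_{\max}$; both are $O(i\cdot w_{\max})$). But note that your argument for the first bullet silently assumes that every $\ell\in[x_k,y_k]$ actually lies in the sublevel set $\{\ell:\Delta_k(\ell)\le 2i\,w_{\max}\}$, where $\Delta_k(\ell)=C'[k+\ell]-A'[k]-B'[\ell]$. Since $x_k,y_k$ are defined as the $\min$ and $\max$ of that set, this is equivalent to the sublevel set being an interval, i.e.\ to $\Delta_k$ being unimodal (non-increasing then non-decreasing). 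You flag this in your opening paragraph but then treat it as a side remark; in fact it is the heart of the lemma.

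The genuine gap is in how you propose to establish this unimodality and the monotonicity of bullet three. You plan to deduce both from concavity of $A',B',C'$ (``discrete second differences $\le 0$''). Concavity does hold, but it is not enough: the forward difference $\Delta_k(\ell)-\Delta_k(\ell-1)=(C'[k+\ell]-C'[k+\ell-1])-(B'[\ell]-B'[\ell-1])$ is a difference of two non-increasing sequences, which in general can change sign arbitrarily many times, so concavity alone gives neither unimodality of $\Delta_k$ nor the rightward shift of its sublevel sets as $k$ grows. The paper's proof uses something strictly stronger than concavity: the fact that $J_C=J_A\cup J_B$, so the WSPT greedy that produces $C'$ is essentially a \emph{merge} of the greedies producing $A'$ and $B'$. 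Concretely, the paper defines $\delta(k)$ as the point where the $C'$-solution of length $k+\delta(k)$ contains exactly $k$ units from $J_A$ and $\delta(k)$ from $J_B$, and then compares the specific job slice added at step $k+\ell$ of $C'$ with the slice added at step $\ell$ of $B'$ (resp.\ step $k+1$ of $A'$) to show $\Delta_k(\ell-1)\ge\Delta_k(\ell)$ for $\ell\le\delta(k)$, the reverse for $\ell\ge\delta(k)$, and $\Delta_{k+1}(x_k)\ge\Delta_k(x_k)$. This job-by-job comparison is where the multiple-due-date subtleties live, and it cannot be replaced by an abstract second-difference argument.
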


\begin{proof}
We begin by first showing that set of intervals $\{[x_k,y_k] : 0 \leq k \leq |A|\}$ satisfy the first two conditions of Theorem~\ref{thm:Bateni}. Consider some $k \in \{0,\ldots,|A|\}$, and let $\Delta_k(\ell)=C'[k+\ell]-(A'[k]+B'[\ell])$. 
By Lemma~\ref{lem:goodapprox}, we have that 
\[
\Delta_k(\ell)\le 2i\cdot w_{\max} \Longrightarrow C[k+\ell]-(A[k]+B[\ell]) \le 4i\cdot w_{\max}.
\]

Define $\delta(k)$ as the smallest integer such that $C'[k+\delta(k)]=A'[k]+B'[\delta(k)]$. 
Now consider $1\le \ell\le \delta(k)$. We argue that $\Delta_k(\ell-1)\ge\Delta_k(\ell)$. Using the definition of $\Delta_k$ we can rewrite this as $C'[k+\ell]-C'[k+(\ell-1)]\leq B'[\ell]-B'[\ell-1]$. Recall that due to the description of Algorithm~\ref{alg:fracsolvec}, we have $C'[k+\ell]-C'[k+(\ell-1)]=w_{j_c}/p_{j_c}$ for some $j_c \in J_A \cup J_B$, and $B'[\ell]-B'[\ell-1]=w_{j_b}/p_{j_b}$ for some $j_b \in J_B$. Now, observe that for $\delta(k)$, the total processing time of jobs from $J_A$ (resp. $J_B$) scheduled in the fractional solution for $C'[k+\ell]$ is exactly $k$ (resp.~$\delta(k)$). Thus, since Algorithm~\ref{alg:fracsolvec} never removes jobs in its computation, and since $\ell \leq \delta(k)$, we know that the total processing time of jobs from~$J_A$ scheduled in the fractional solution for $C'[k+(\ell-1)]$ is at most $k$, which means that the total processing time of jobs from $J_B$ in this solution is at least $\ell-1$. Thus, if $j_c \in J_B$ we have $w_{j_c}/p_{j_c} \leq w_{j_b}/p_{j_b}$. This is because in the fractional solution corresponding to $B'[\ell-1]$, the total processing time of jobs from $J_B$ is exactly $\ell-1$, and the jobs are sorted according to the WSPT rule. If, on the other hand, $j_c \in J_A$, then the total processing time of jobs from~$J_B$ scheduled in the fractional solution for $C'[k+(\ell-1)]$ is at least $\ell$, which means that job $j_b$ is already scheduled in the fractional solution for $C'[k+(\ell-1)]$. So again we get $w_{j_c}/p_{j_c} \leq w_{j_b}/p_{j_b}$, implying that $C'[k+\ell]-C'[k+(\ell-1)]\leq B'[\ell]-B'[\ell-1]$.

Thus, we have that $\Delta_k(\ell-1)\ge\Delta_k(\ell)$ for all $1\le \ell\le \delta(k)$.
By an analogous argument we can show that $\Delta_k(\ell-1)\le\Delta_k(\ell)$ holds for all $\delta(k)\le \ell\le |B'|$. It follows that $\Delta_k$ is non-increasing in the interval $[0,\delta(k)]$, and non-decreasing in the interval $[\delta(k), |B'|]$. From this, we immediately get that the first two requirements from Theorem~\ref{thm:Bateni} are fulfilled for the interval $[x_k,y_k]$ with an error parameter of~$e=4i\cdot w_{\max}$.

To show that the third requirement from Theorem~\ref{thm:Bateni} is fulfilled, we argue that $\Delta_{k+1}(x_k)\ge \Delta_{k}(x_k)$. Using the definition of $\Delta_k$ we can rewrite this as $C'[k+1+x_k]-C'[k+x_k]\ge A'[k+1]-A'[k]$. Note that $x_k\le\delta(k)$. Hence, we know that $C'[k+1+x_k]$ and $C'[k+x_k]$ contain jobs from $J_A$ of total processing time at most $k+1$ and $k$, respectively. Let job $j$ be added by Algorithm~\ref{alg:fracsolvec} when computing $C'[k+1+x_k]$ from $C'[k+x_k]$, and let job $j'$ be added Algorithm~\ref{alg:fracsolvec} when computing $A'[k+1]$ from $A'[k]$. It follows that $w_{j}/p_{j}\ge w_{j'}/p_{j'}$. By an analogous argument we get that $\Delta_{k+1}(y_k)\ge \Delta_{k}(y_k)$. It follows that the third requirement from Theorem~\ref{thm:Bateni} is fulfilled.
\end{proof}


\begin{algorithm}
\caption{}
\label{alg:ranges}
\begin{algorithmic}[1]
\Input{Vectors $A',B',C'$, number of deadlines $i^\star$, and the maximum weight $w_{\max}$.}
\Output{Range intervals $\{[x_k,y_k] : 0 \leq k \leq |A| \}$ of $A$ in $B$.}
\State Set $x=y=0$.
\For{$k=1$ to $|A'|$}
\State \textbf{while} $C'[k+x]-A'[k]-B'[x]>2i\cdot w_{\max}$ \textbf{do} $x= x+1$.
\State \textbf{if} $k=1$ \textbf{then} $y=x$.
\State \textbf{while} $C'[k+y]-A'[k]-B'[y]\le2i\cdot w_{\max}$ \textbf{do} $y= y+1$.
\State Set $[x_k,y_k]=[x,y-1]$.
\EndFor
\end{algorithmic}
\end{algorithm}

\begin{lemma}
\label{lem:ranges}%
Algorithm~\ref{alg:ranges} correctly computes the range intervals $[x_i,y_i]_{i=0}^{|A|}$ of $A$ in $B$ in $\widetilde{O}(d_{\max})$ time.
\end{lemma}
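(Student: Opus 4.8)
The plan is to show that Algorithm~\ref{alg:ranges} is a direct implementation of the definition of $[x_k,y_k]$ from \cref{lem:rangeintervals}, and that the two nested \textbf{while}-loops together advance the pointers $x$ and $y$ only $O(d_{\max})$ times in total over the whole execution. First I would establish correctness: fix an iteration $k$ and write $\Delta_k(\ell) = C'[k+\ell] - A'[k] - B'[\ell]$, exactly as in the proof of \cref{lem:rangeintervals}. There we showed that $\Delta_k$ is non-increasing on $[0,\delta(k)]$ and non-decreasing on $[\delta(k),|B'|]$, hence the set $\{\ell : \Delta_k(\ell) \le 2i\cdot w_{\max}\}$ is a contiguous interval; its left endpoint is precisely the smallest $\ell$ with $\Delta_k(\ell)\le 2i\cdot w_{\max}$, which is what the first \textbf{while}-loop computes by incrementing $x$ while the threshold is still exceeded, and its right endpoint is one less than the smallest $\ell \ge x$ with $\Delta_k(\ell) > 2i\cdot w_{\max}$, which is what the second \textbf{while}-loop plus the assignment $[x_k,y_k]=[x,y-1]$ computes. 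So at the end of iteration $k$ we have exactly the interval of \cref{lem:rangeintervals}.

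The second ingredient is monotonicity of the endpoints across iterations, so that the pointers never need to be reset: in \cref{lem:rangeintervals} (verification of the third condition of \cref{thm:Bateni}) it is shown that $x_k \le x_{k+1}$ and $y_k \le y_{k+1}$. This is exactly why the algorithm keeps $x$ and $y$ as running variables that are only ever incremented, rather than recomputing them from scratch for each $k$: when moving from $k$ to $k+1$, the old value of $x$ is a valid lower bound for $x_{k+1}$ (the first \textbf{while}-loop only needs to push it further right), and likewise for $y$. The only subtlety is the initialisation of $y$ for $k=1$, handled by the line \textbf{if} $k=1$ \textbf{then} $y=x$, which simply ensures $y$ starts the second \textbf{while}-loop from the already-computed left endpoint rather than from~$0$; since $y_1 \ge x_1$ this is harmless, and for $k>1$ the invariant $y \ge x$ is maintained automatically.

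For the running time, the key point is an amortised (aggregate) analysis: $x$ is non-decreasing throughout the entire run and never exceeds $|B'| \le d_{\max}$, so the total number of increments of $x$ over all $k$ is $O(d_{\max})$; the same holds for $y$. Each iteration of each \textbf{while}-loop does $O(1)$ work (array lookups and a comparison), so the total work in the \textbf{while}-loops is $O(d_{\max})$. The outer \textbf{for}-loop runs $|A'| \le d_{\max}$ times and does $O(1)$ work per iteration outside the loops, contributing another $O(d_{\max})$. Hence the total running time is $O(d_{\max})$, which is $\widetilde{O}(d_{\max})$ as claimed.

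I do not expect a genuine obstacle here: the statement is essentially a bookkeeping lemma, and all the structural facts it relies on — contiguity of the sub-threshold set via the shape of $\Delta_k$, and monotonicity of $x_k, y_k$ in $k$ — have already been proved in \cref{lem:rangeintervals}. The one place requiring a little care is making the amortised argument precise, i.e.\ explicitly noting that $x$ and $y$ are monotone over the whole execution (not merely within one iteration) and bounded by $d_{\max}$, so that the cost of the inner \textbf{while}-loops telescopes rather than multiplying by $|A'|$; I would state this as an explicit invariant before concluding.
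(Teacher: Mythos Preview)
Your proposal is correct and follows essentially the same approach as the paper: both use the shape of $\Delta_k$ established in \cref{lem:rangeintervals} to justify that each \textbf{while}-loop finds the correct endpoint, and both invoke the monotonicity $x_k\le x_{k+1}$, $y_k\le y_{k+1}$ to justify carrying the pointers forward rather than resetting them. Your amortised running-time argument is spelled out more explicitly than the paper's (which simply asserts the $O(|A|+|B|)$ bound as straightforward), but it is the same idea.
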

\begin{proof}
It is straightforward to verify that Algorithm~\ref{alg:ranges} runs in $O(|A|+|B|)=O(d^{(i)})=\widetilde{O}(d_{\max})$ time. For its correctness, recall the function $\Delta_k(\ell)=C'[k+\ell]-A'[k]-B'[\ell]$ used in the proof of Lemma~\ref{lem:rangeintervals}. As is shown in this proof, for every $k$ there exists an $\delta(k)$ such that $\Delta_k(\ell)$ is non-increasing in the interval $[0,\delta(k)]$, and $\Delta_k(\ell)$ is non-decreasing in the interval $[\delta(k),|B'|]$. Hence, to find $[x_1,y_1]$, Algorithm~\ref{alg:ranges} first finds~$x_1$ by iterating over values for $\ell$ starting from $0$, thereby finding the smallest integer $\ell$ such that $\Delta_1(\ell)\le 2i\cdot w_{\max}$. Afterwards, Algorithm~\ref{alg:ranges} finds $y_1$ by iterating over values for $\ell$ and finding the largest integer~$\ell$ such that $\Delta_1(\ell)\le 2i\cdot w_{\max}$. We know that in the interval $[x_1,\delta(1)]$ the function $\Delta_1(\ell)$ is non-increasing, hence the largest integer $\ell$ such that $\Delta_1(\ell)\le 2i\cdot w_{\max}$ must lie in the interval $[\delta(1),|B'|]$ and, more specifically, once $\Delta_1(\ell')>2i\cdot w_{\max}$ for some $\ell' \in [\delta(1),|B'|]$, we know that $\Delta_1(\ell'')>2i\cdot w_{\max}$ for all $\ell''>\ell'$. Hence, Algorithm~\ref{alg:ranges} can stop the search for $y_1$ after encountering the first value $\ell$ such that $\Delta_1(\ell)>2i\cdot w_{\max}$. Furthermore, we know that for all $k$ we have that $x_k\le x_{k+1}$ and $y_{k}\le y_{k+1}$,  since the range intervals fulfill the third requirement of Theorem~\ref{thm:Bateni}. It follows that for any $k > 1$, Algorithm~\ref{alg:ranges} can start the search for $x_k$ at $x_{k-1}$ and the search for $y_i$ at $y_{k-1}$. Thus, Algorithm~\ref{alg:ranges} indeed correctly computes the range intervals of $A$ in~$B$, and the lemma follows.
\end{proof}

\begin{proof}[Proof of Lemma~\ref{lem:rangeinterval}]
Computing all three fractional solution vectors $A'$, $B'$, and $C'$ can be done in $\widetilde{O}(n+d_{\#}d_{\max})$ time according to Lemma~\ref{lem:fracsolvec}. From these we can compute the range intervals of $A$ in $B$ in $\widetilde{O}(d_{\max})$ time due to Lemma~\ref{lem:ranges}. These intervals satisfy the conditions of Theorem~\ref{thm:Bateni} with an error parameter $e = O(i \cdot w_{\max})=O(d_{\#} w_{\max})$ as is proven in Lemma~\ref{lem:rangeintervals} (which in turn relies on Lemma~\ref{lem:goodapprox}). All together this gives us an algorithm for computing the range intervals of $A$ in $B$ in $\widetilde{O}(n+d_{\#}d_{\max})$ time.
\end{proof}

\section{Further Speedups}

In this section we show how to obtain further speedups to Algorithm~\ref{alg:Main}, by using other techniques for fast $(\max,+)$-convolutions. In particular, we provide a complete proof for Theorem~\ref{thm:SecondSpeedup} via Lemma~\ref{lem:Speedup1}, \ref{lem:Speedup2}, and~\ref{lem:Speedup3} proven below.

\subsection{Using \boldmath{$s$}-step concave solutions vectors}

The main result of this subsection is given in the following lemma:
\begin{lemma}
\label{lem:ConcaveMain}%
Let $i \in \{2,\ldots,d_{\#}\}$. One can compute the $(\max,+)$-convolution in iteration $i$ of Algorithm~\ref{alg:Main} in $\widetilde{O}(n_i+d_{\max}p_{\max})$ time, where $n_i = |J_i|$.
\end{lemma}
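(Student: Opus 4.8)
To establish \cref{lem:ConcaveMain}, the plan is to avoid convolving with the monolithic vector $A_i$ and instead incorporate the jobs of $J_i$ into the running vector $A$ one \emph{processing-time class} at a time: each such class yields an $s$-step concave solution vector, to which the $(\max,+)$-convolution algorithm of Axiotis and Tzamos~\cite{AxiotisTzamos19} applies directly.

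First I would group $J_i$ by processing time. After discarding any job with $p_j > d^{(i)}$ (which is necessarily tardy), let $P$ be the set of distinct processing times occurring among the remaining jobs of $J_i$, so that $|P| \le \min(n_i, p_{\max}) \le p_{\max}$, and for $p \in P$ set $J_i^{(p)} = \{ j \in J_i : p_j = p \}$. Sorting $J_i^{(p)}$ by non-increasing weight, say $w_{(1)} \ge \ldots \ge w_{(q)}$ with $q = |J_i^{(p)}|$, let $B^{(p)}$ be its solution vector, that is $B^{(p)}[k] = \sum_{m=1}^{\min(q, \lfloor k/p \rfloor)} w_{(m)}$. Two facts drive the argument. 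First, $B^{(p)}$ is $p$-step concave in the sense needed by~\cite{AxiotisTzamos19}: it is constant on each block $\{ mp, \ldots, (m+1)p - 1 \}$, and the block values $B^{(p)}[mp]$ form a concave sequence, because the successive increments $w_{(m)}$ (and $0$ for $m > q$) are non-increasing. Second, $A_i = \bigoplus_{p \in P} B^{(p)}$, since an early-job subset of $J_i$ is exactly a disjoint union of early-job subsets of the classes $J_i^{(p)}$, with processing times and weights adding up, and maximization distributes over this decomposition. By associativity of $\oplus$, computing the convolution $A \oplus A_i$ of iteration~$i$ therefore reduces to performing the updates $A \leftarrow A \oplus B^{(p)}$ over the at most $p_{\max}$ values $p \in P$, truncating $A$ to indices $0, \ldots, d^{(i)}$ after each update --- this is lossless on that range and keeps every intermediate vector of length $O(d_{\max})$.

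Each update is a $(\max,+)$-convolution of an arbitrary vector of length $O(d_{\max})$ with a $p$-step concave vector of length $O(d_{\max})$, which by~\cite{AxiotisTzamos19} runs in $\widetilde{O}(d_{\max})$ time; over the at most $p_{\max}$ classes this totals $\widetilde{O}(d_{\max} p_{\max})$, while grouping and sorting $J_i$ costs $\widetilde{O}(n_i)$. Altogether this yields the claimed $\widetilde{O}(n_i + d_{\max} p_{\max})$ bound for iteration~$i$ (and, as a byproduct, makes precomputing $A_i$ in line~2 of \cref{alg:Main} unnecessary).

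The step needing the most care is pinning down the precise $s$-step concavity hypothesis of the Axiotis--Tzamos primitive and checking that $B^{(p)}$ satisfies it verbatim, together with the bookkeeping that keeps all intermediate vectors of length $O(d_{\max})$; the decomposition $A_i = \bigoplus_p B^{(p)}$ and the associativity of $(\max,+)$-convolution are routine but should be spelled out. Everything else is a black-box invocation of~\cite{AxiotisTzamos19}.
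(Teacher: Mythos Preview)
Your proposal is correct and follows essentially the same approach as the paper: partition $J_i$ by processing time, observe that each class yields a $p$-step concave solution vector, and fold these one at a time into $A$ using the Axiotis--Tzamos linear-time convolution for $s$-step concave vectors. The paper's argument is slightly terser (it omits the explicit justification of $A_i = \bigoplus_p B^{(p)}$ and the truncation bookkeeping you mention), but the decomposition, the key lemma invoked, and the resulting running-time analysis are identical.
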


\begin{lemma}
\label{lem:Speedup1}%
Algorithm~\ref{alg:Main} can be implemented to run in $\widetilde{O}(n+d_{\#}d_{\max}p_{\max})$ time. 
\end{lemma}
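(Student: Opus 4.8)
The plan is to combine Lemma~\ref{lem:ConcaveMain} with the cost of producing the input vectors, exactly in the style of the proof of Theorem~\ref{thm:MainSpeedup}. First I would recall that lines~1 and~2 of Algorithm~\ref{alg:Main} need to produce the solution vectors $A_1,\dots,A_{d_\#}$; using the Axiotis--Tzamos~\cite{AxiotisTzamos19} Knapsack algorithm for $s$-step concave vectors, the $i$-th solution vector $A_i$ for the single-due-date instance $J_i$ can be computed in $\widetilde{O}(n_i + d^{(i)}p_{\max}) = \widetilde{O}(n_i + d_{\max}p_{\max})$ time, where $n_i = |J_i|$. Summing over $i \in \{1,\dots,d_\#\}$ gives $\sum_i \widetilde{O}(n_i + d_{\max}p_{\max}) = \widetilde{O}(n + d_\# d_{\max}p_{\max})$ for lines~1 and~2.

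Next I would handle the $(\max,+)$-convolutions in line~4. By Lemma~\ref{lem:ConcaveMain}, the convolution performed in iteration $i$ costs $\widetilde{O}(n_i + d_{\max}p_{\max})$ time. Here the slightly delicate point worth spelling out is why we may invoke Lemma~\ref{lem:ConcaveMain}: that lemma presumably relies on the fact that $A_i = B$ is the solution vector of a Knapsack-type instance and hence admits a decomposition into $O(n_i)$ concave (or $p_j$-step concave, one per job of $J_i$) pieces, which is precisely the structure that the Axiotis--Tzamos technique exploits to convolve $A$ with $A_i$ in near-linear time in $d_{\max}p_{\max}$ rather than the naive $d_{\max}^2$. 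The running vector $A$ itself need not be concave, which is fine because the technique only needs one of the two operands to have the step-concave structure. Summing the per-iteration cost over $i \in \{2,\dots,d_\#\}$ yields $\sum_{i=2}^{d_\#} \widetilde{O}(n_i + d_{\max}p_{\max}) = \widetilde{O}(n + d_\# d_{\max}p_{\max})$.

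Finally I would observe that line~5 takes $O(1)$ time, and that the total over all lines is $\widetilde{O}(n + d_\# d_{\max}p_{\max})$, which is the claimed bound; correctness is inherited verbatim from Lemma~\ref{lem:Correctness}, since we have only changed how the solution vectors and the convolutions are computed, not what they compute.

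I do not expect any serious obstacle here: the real content sits in Lemma~\ref{lem:ConcaveMain} (and the ability to build each $A_i$ as an $s$-step concave object), and once that is granted this proof is just the same bookkeeping as in the proof of Theorem~\ref{thm:MainSpeedup}. The one thing I would be careful about is making sure the $\widetilde{O}$ hides the right logarithmic factors coming from the Axiotis--Tzamos routine and from the sorting needed to expose the concave structure, so that the additive $n$ term (rather than $d_\# n$) is genuinely achievable in lines~1--2 — unlike the prediction-technique speedup, the concave approach convolves each $A_i$ against the running vector without rescanning all jobs, so no extra factor of $d_\#$ appears on the $n$ term.
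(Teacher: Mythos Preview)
Your proof is correct and matches the paper's argument essentially verbatim: compute each $A_i$ via the Axiotis--Tzamos Knapsack algorithm in $\widetilde{O}(n_i + d_{\max}p_{\max})$ time, invoke Lemma~\ref{lem:ConcaveMain} for each convolution in line~4, and sum over $i$. Your parenthetical speculation about the internals of Lemma~\ref{lem:ConcaveMain} is slightly off---the decomposition there is by processing time into $p_{\max}$ many $p$-step concave pieces, not one piece per job---but this does not affect your proof since you use the lemma as a black box.
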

\begin{proof}
We can compute the solution vector of each \SDProb instance $J_i$, $1 \leq i \leq d_{\#}$, using the $\widetilde{O}(n_i+d_{\max}p_{\max})$ algorithm of Axiotis and Tzamos~\cite{AxiotisTzamos19}. By using Lemma~\ref{lem:ConcaveMain}, we can perform each iteration $i$ of the algorithm in $\widetilde{O}(n_i+d_{\max}p_{\max})$ time. Thus, the running time of the entire algorithm can be bounded by $\widetilde{O}(\sum_i (n_i+d_{\max}p_{\max}))=\widetilde{O}(n+d_{\#}d_{\max}p_{\max})$.
\end{proof}

We proceed to describe how to perform the $i$th iteration of Algorithm~\ref{alg:Main} in $\widetilde{O}(n_i+d_{\max}p_{\max})$ time (Lemma~\ref{lem:ConcaveMain}), using the techniques of Axiotis and Tzamos.
\begin{definition}
An integer vector $B = (B[\ell])^n_{\ell=0}$ is \emph{concave} if for all $1 \leq \ell \leq n$ we have $B[\ell] - B[\ell-1] \geq B[\ell+1] - B[\ell]$. The vector $B$ is called \emph{$s$-step concave} if the vector $B[0]B[s]B[2s] \cdots$ is concave, and for all $\ell$ such that $\ell \mod s \neq 0$ we have $B[\ell]=B[\ell-1]$.
\end{definition}

\begin{theorem}[\cite{AxiotisTzamos19}]
\label{thm:Concave}
Let $A = (A[k])^m_{k=0}$ and $B = (B[\ell])^n_{\ell=0}$ be two integer vectors with $m \leq n$. If $B$ is $s$-step concave for any $s \in \{1,\ldots,n\}$, then $A \oplus B$ can be computed in $O(n)$ time.
\end{theorem}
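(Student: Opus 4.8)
The plan is to derive Theorem~\ref{thm:Concave} from the classical fact that the $(\max,+)$-convolution of an \emph{arbitrary} vector with a \emph{concave} vector can be computed in (near-)linear time, the extra ingredient being to ``factor out'' the step size $s$. Throughout I write $\tilde B=(\tilde B[q])_{q=0}^{\lfloor n/s\rfloor}$ for the concave vector with $\tilde B[q]=B[qs]$, so that $B[\ell]=\tilde B[\lfloor\ell/s\rfloor]$ for all $\ell$ by the definition of $s$-step concavity, and I adopt the convention $A[k]=-\infty$ for $m<k\le n$, which does not change $A\oplus B$. Substituting $j=\ell-k$ and grouping the index $j$ by its block $q=\lfloor j/s\rfloor$ gives
\[
(A\oplus B)[\ell]=\max_{0\le j\le\ell}\big(A[\ell-j]+B[j]\big)=\max_{0\le q\le\lfloor\ell/s\rfloor}\Big(\tilde B[q]+\max_{qs\le j<(q+1)s,\ j\le\ell}A[\ell-j]\Big),
\]
where the inner maximum runs over a window of $A$ of width at most $s$ ending at position $\ell-qs$. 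Let $W_s[i]=\max_{i-s<i'\le i}A[i']$ denote the width-$s$ sliding-window maximum of $A$ (out-of-range entries counted as $-\infty$); the whole array $(W_s[i])_{i=0}^{n}$ is computable in $O(n)$ time with a monotonic deque, and the inner maximum above is exactly $W_s[\ell-qs]$. Writing $\ell=c+ts$ with $c=\ell\bmod s$ and $r=t-q$ we obtain
\[
(A\oplus B)[c+ts]=\max_{0\le r\le t}\big(\tilde B[t-r]+W_s[c+rs]\big)=\big(V^{c}\oplus\tilde B\big)[t],
\]
where $V^{c}=(W_s[c+rs])_{r\ge0}$ is the stride-$s$ subsample of $W_s$ at offset $c$. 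Hence it suffices, for each residue $c\in\{0,\dots,s-1\}$, to $(\max,+)$-convolve the arbitrary vector $V^{c}$ with the concave vector $\tilde B$; both have length $O(n/s)$, there are $s$ residues, and the preprocessing (building $W_s$, extracting the $V^{c}$, and scattering the answers back into $A\oplus B$) costs $O(n)$. If the concave base case runs in linear time, the total is $O(n)+s\cdot O(n/s)=O(n)$.

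For the concave base case, given an arbitrary $A'$ and a concave $B'$ of lengths $m'\le n'$, I would view $A'\oplus B'$ as the vector of row maxima of the matrix $M[\ell,k]=A'[k]+B'[\ell-k]$ on the staircase domain $\{(\ell,k):0\le k\le\min(\ell,m')\}$. Concavity of $B'$ makes $M$ totally monotone: for $k<k'$ and $\ell<\ell'$ with all four entries admissible, the first-difference sum $B'[\ell-k]-B'[\ell-k']$ is over an interval of the same length as, and lying to the left of, the interval giving $B'[\ell'-k]-B'[\ell'-k']$, and the first differences of $B'$ are non-increasing, so $B'[\ell-k]-B'[\ell-k']\ge B'[\ell'-k]-B'[\ell'-k']$; this rearranges to $M[\ell,k']-M[\ell,k]\le M[\ell',k']-M[\ell',k]$, i.e.\ a column that wins in row $\ell$ still wins in row $\ell'$, so the row-maximizers are non-decreasing. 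The SMAWK algorithm---in its variant for totally monotone matrices with a staircase domain---then returns all row maxima in $O(m'+n')=O(n')$ time. (If an extra logarithmic factor is acceptable, the elementary divide-and-conquer on the output index suffices: brute-force the maximizer of the middle row over the currently admissible column range, then recurse on the two halves with the range split at that maximizer.)

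The routine parts are the arithmetic of the offsets, the $-\infty$ convention, and the ``block of $A$ $\leftrightarrow$ width-$s$ window $\leftrightarrow$ sliding-window maximum'' correspondence; note that the definition of $s$-step concavity is used precisely here, since the plateaus $B[\ell]=B[\ell-1]$ for $s\nmid\ell$ are what collapse the contribution of an entire width-$s$ window of $A$ to the single term $\tilde B[q]+W_s[\cdot]$. I expect the only genuinely delicate point to be the base case: one must check that total monotonicity---and hence the monotonicity of the witnesses exploited by SMAWK (or by the divide-and-conquer)---survives the fact that the admissible column range $[0,\min(\ell,m')]$ grows with the row $\ell$, i.e.\ that restricting the algorithm to the staircase domain (equivalently, $-\infty$-padding the forbidden upper-right part of $M$) does not break its invariants.
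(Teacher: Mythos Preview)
The paper does not prove Theorem~\ref{thm:Concave}; it is quoted as a black box from Axiotis and Tzamos~\cite{AxiotisTzamos19} and used without argument. So there is no ``paper's own proof'' to compare against here.

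That said, your proposal is correct and is essentially the argument of the cited reference. The reduction from the $s$-step concave case to the ordinary concave case via the sliding-window maximum $W_s$ and the residue decomposition $\ell=c+ts$ is exactly the right way to factor out the step size, and your verification that the truncated window at $q=\lfloor\ell/s\rfloor$ is handled by the $-\infty$ padding on negative indices is the one place that needs care. For the base case, total monotonicity of $M[\ell,k]=A'[k]+B'[\ell-k]$ from the concavity of $B'$ is standard, and SMAWK (in its staircase variant, or equivalently after padding the forbidden triangle with $-\infty$, which preserves total monotonicity for row \emph{maxima} since the padded columns can never overtake a real column as $\ell$ grows) gives the claimed $O(n')$ bound; your flagged ``delicate point'' is genuine but benign. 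The divide-and-conquer alternative you mention would only give $O(n\log n)$, so SMAWK is indeed needed to match the stated $O(n)$.
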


Let $A$ be the solution vector for the jobs sets $J_1 \cup \cdots \cup J_{i-1}$ computed at iteration $i-1$, and let $B$ denote the solution vector of $J_i$. To compute $A \oplus B$, we first partition $J_i$ into subsets $J_{i,1},\ldots,J_{i,p_{\max}}$, where all jobs in $J_{i,p}$ have processing time $p$ for each $p \in \{1,\ldots,p_{\max}\}$. We next compute the solution vector $B_p$ corresponding to each \SDProb instance $J_{i,p}$. Finally, we compute $A \oplus B = ( \cdots ((A \oplus B_1) \oplus B_2) \cdots ) \oplus B_{p_{\max}})$. 

\begin{algorithm}
\caption{}
\label{alg:Concave}
\begin{algorithmic}[1]
\Input{Solution vectors $A$ and $B$ corresponding to $J_1 \cup \cdots \cup J_{i-1}$ and $J_i$, respectively.}
\Output{$C= A \oplus B$.}
\State Compute $J_{i,p} = \{j \in J_i : p_j = p\}$ for each $p \in \{1,\ldots,p_{\max}\}$.
\State Compute solution vectors $B_1,\ldots,B_{p_{\max}}$ corresponding to $J_{i,1},\ldots,J_{i,p_{\max}}$. 
\State \textbf{for} {$p=1,\ldots,p_{\max}$} \textbf{do} $A=A \oplus B_p$.
\State \textbf{return} $A$.
\end{algorithmic}
\end{algorithm}

As observed by Axiotis and Tzamos, computing the solution vector $B_p$ correspond to $J_{i,p}$, $1 \leq p \leq p_{\max}$ can be easily done since all jobs in $J_{i,p}$ have the same processing time. In particular, if $w^1 \geq w^2 \geq \cdots \geq w^{n_{i,p}}$ are the weights of $J_{i,p}$, where $n_{i,p} = |J_{i,p}|$, then we have 
$$
B_p[0]=0, \, B_p[p]=w^1, \, B_p[2p]=w^1 + w^2, \, \cdots
$$
Moreover, $B[\ell] = B[\ell-1]$ for any $\ell$ not divisible by $p$. Therefore $B_p$ is a $p$-step concave vector, and so computing $A \oplus B_p$ can be done in $\widetilde{O}(d^{(i)})=\widetilde{O}(d_{\max})$ time. Thus, in total we can compute $A \oplus B$ in $\widetilde{O}(n_i+d_{\max}p_{\max})$ time, and so Lemma~\ref{lem:ConcaveMain} holds.

\subsection{Using \boldmath{$s$}-step concave inverse solutions vectors}

We next prove the following: 
\begin{lemma}
\label{lem:ConcaveSecond}%
Let $i \in \{2,\ldots,d_{\#}\}$. One can compute the $(\max,+)$-convolution in iteration $i$ of Algorithm~\ref{alg:Main} in $\widetilde{O}(n_iw^2_{\max})$ time, where $n_i = |J_i|$.
\end{lemma}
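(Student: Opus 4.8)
The plan is to mirror the $s$-step concave convolution technique of the previous subsection, but carried out in the \emph{inverse} (weight) space rather than in processing-time space, so that the structured operands become ``$s$-step convex'' instead of $s$-step concave.

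First I would set up the dual objects. For a non-decreasing integer vector $V$ of length $N$ (a solution vector over processing-time budgets), define its \emph{inverse} $\hat V$ by $\hat V[v]=\min\{k : V[k]\ge v\}$; this is a non-decreasing vector indexed by the weight values $v\in\{0,\ldots,V[N]\}$, and one can convert between $V$ and $\hat V$ in time linear in $|V|+|\hat V|$. The key algebraic fact I would record is the standard epigraph duality: for non-decreasing $V_1,V_2$ one has $\widehat{V_1\oplus V_2}=\hat V_1\oplus_{\min}\hat V_2$, where $\oplus_{\min}$ denotes $(\min,+)$-convolution. This follows straight from the definitions, since a subset of $J$ of weight $\ge u+v$ with minimal total processing time is obtained by concatenating a minimal-processing-time subset of weight $\ge u$ with one of weight $\ge v$. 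I would also dualize \cref{thm:Concave}: calling a vector $B$ \emph{$s$-step convex} if $-B$ is $s$-step concave (equivalently, $B[0],B[s],B[2s],\ldots$ is convex and $B$ is constant between consecutive multiples of $s$), the $(\min,+)$-convolution of an arbitrary vector with an $s$-step convex vector can be computed in time linear in the output length — immediate from \cref{thm:Concave} by negation, or directly via \textsc{SMAWK} on the resulting totally monotone matrix.

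Next I would exploit the structure of $B=A_i$. Partition $J_i$ into $J_{i,1},\ldots,J_{i,w_{\max}}$ where $J_{i,w}$ holds the jobs of weight exactly $w$, and let $B_w$ be the solution vector of the \SDProb instance $J_{i,w}$, so $B=B_1\oplus\cdots\oplus B_{w_{\max}}$. The point is that each inverse $\hat B_w$ is, up to a trivial reindexing, $w$-step convex and is written down in linear time: if $p^1\le p^2\le\cdots$ are the processing times in $J_{i,w}$ sorted non-decreasingly, then $\hat B_w[v]=p^1+\cdots+p^{\lceil v/w\rceil}$, since the cheapest way to reach weight $\ge v$ is to take the $\lceil v/w\rceil$ shortest jobs, and convexity is exactly $p^1\le p^2\le\cdots$. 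Dualizing the decomposition, $\hat B=\hat B_1\oplus_{\min}\cdots\oplus_{\min}\hat B_{w_{\max}}$; each of these $w_{\max}-1$ $(\min,+)$-convolutions has an $s$-step convex operand and is therefore linear in the output length, and every vector here is indexed up to $\sum_{j\in J_i}w_j\le n_iw_{\max}$, which recovers the $\widetilde{O}(n_iw_{\max}^2)$ bound of Axiotis and Tzamos for building $B=A_i$.

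Finally, for the convolution $C=A\oplus A_i$ of iteration $i$ itself, I would pass to inverses once more: $\hat C=\hat A\oplus_{\min}\hat A_i=\hat A\oplus_{\min}\hat B_1\oplus_{\min}\cdots\oplus_{\min}\hat B_{w_{\max}}$, folding the $w_{\max}$ step-convex pieces $\hat B_w$ into the running inverse vector one at a time, each step a linear-time $s$-step-convex $(\min,+)$-convolution, and then invert back to recover $C$. The main obstacle here is the bookkeeping on vector lengths: the inverse $\hat A$ of the accumulated solution vector is indexed by weight values up to the total weight processed so far, which can exceed $n_iw_{\max}$. I would handle this exactly as in the single-deadline setting, by truncating every inverse vector to the indices $v\le d_{\max}w_{\max}$ — legitimate because at most $d_{\max}$ jobs can be early (each $p_j\ge 1$), so no relevant entry of $C$ exceeds $d_{\max}w_{\max}$ — and then charging the work so that the $w_{\max}$ convex convolutions of iteration $i$ cost $\widetilde{O}(n_iw_{\max}^2)$ up to lower-order terms that are absorbed by the other steps of \cref{alg:Main}. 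Checking that the conversions to and from inverse form, together with the truncation, remain consistent with \cref{lem:EDD} and the correctness argument of \cref{lem:Correctness} is routine but needs to be verified.
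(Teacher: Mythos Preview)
Your approach is essentially the paper's: pass to inverse (weight-indexed) solution vectors, partition $J_i$ by weight so that each piece $\hat B_w$ is $w$-step convex, and fold these pieces into the running inverse vector via linear-time $(\min,+)$-convolutions (the $(\min,+)$ analogue of \cref{thm:Concave}). You are in fact more explicit than the paper about the epigraph duality and about the convex/concave distinction (the paper calls the $B_w$ ``$w$-step concave'', implicitly relying on the negation).

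The genuine difficulty is exactly the one you flag but do not resolve. The running inverse vector $\hat A$ at iteration~$i$ is indexed up to $\sum_{j<i}\sum_{\ell\in J_j}w_\ell$, which can far exceed $n_iw_{\max}$; the bound of \cref{thm:Concave} is linear in the \emph{longer} operand, so each of the (up to) $w_{\max}$ folds costs that much, not $\widetilde{O}(n_iw_{\max})$. Your truncation to length $d_{\max}w_{\max}$ only yields $\widetilde{O}(d_{\max}w_{\max}^2)$ per iteration, and the ``charging'' you gesture at is not made precise---there is no step elsewhere in \cref{alg:Main} with a per-iteration budget that would absorb this term down to $\widetilde{O}(n_iw_{\max}^2)$. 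The paper's own write-up simply asserts the $\widetilde{O}(n_iw_{\max})$ per-fold bound without ever addressing the length of $\hat A$, so the same gap is present there; but as written, neither argument actually establishes the per-iteration $\widetilde{O}(n_iw_{\max}^2)$ bound stated in the lemma.
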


The main idea here is to replace the solution vectors used by our main algorithm with \emph{inverse solution vectors}: 
\begin{definition}
A solution vector for a \Prob instance $J$ with $n$ jobs of maximum weight $w_{\max}$ is an integer vector $A = (A[k])^{nw_{\max}}_{k=0}$ where the $k$'th entry $A[k]$ is equal to the minimum processing time of any subset of early jobs in $J$ with total processing weight at least~$k$.
\end{definition}

By replacing $(\max,+)$-convolutions with $(\min,+)$-convolutions, which are equivalent by taking the negation of the vectors, we can modify Algorithm~\ref{alg:Main} to work also with inverse solution vectors. Let $A$ be the inverse solution vector for the jobs sets $J_1 \cup \cdots \cup J_{i-1}$ computed at iteration $i-1$, and let $B$ denote the inverse solution vector of $J_i$. To compute the $(\min,+)$-convolution between $A$ an $B$, we again  iteratively compute the convolution between $A$ and each inverse solution vector $B_w$ corresponding to the job set $J_{i,w}= \{j \in J_i : w_j =w\}$. As each $B_w$ is $w$-step concave, each convolution can be done in $\widetilde{O}(n_iw_{\max})$ time by using a $(\min,+)$-variant of Theorem~\ref{thm:Concave}. Altogether, this gives us $\widetilde{O}(n_iw^2_{\max})$ time for the entire iteration, so Lemma~\ref{lem:ConcaveSecond} holds. 

\begin{lemma}
\label{lem:Speedup2}%
Algorithm~\ref{alg:Main} can be implemented to run in $\widetilde{O}(n^2+d_{\max}w^2_{\max})$ time. 
\end{lemma}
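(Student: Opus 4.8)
The plan is to prove \cref{lem:Speedup2} by combining the inverse solution vectors with the per-iteration convolution bound from \cref{lem:ConcaveSecond}, analogously to how \cref{lem:Speedup1} was derived from \cref{lem:ConcaveMain}. First I would note that for each \SDProb instance $J_i$, an inverse solution vector can be computed directly: since $|J_i| = n_i$ and the maximum weight is at most $w_{\max}$, the vector $B_i$ has length at most $n_iw_{\max}$, and it can be obtained (for example) by the same $s$-step concave subroutine applied to the subsets $J_{i,w}=\{j\in J_i: w_j=w\}$, each of which yields a $w$-step concave inverse vector, so computing $B_i$ itself costs $\widetilde O(n_iw_{\max}^2)$ time (dominated by the subsequent convolution cost anyway).

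Next I would invoke \cref{lem:ConcaveSecond}: running the modified version of \cref{alg:Main} that maintains an inverse solution vector $A$ and updates $A \leftarrow A \oplus B_i$ via $(\min,+)$-convolution, iteration $i$ takes $\widetilde O(n_iw_{\max}^2)$ time. Summing over all $d_{\#}$ iterations gives $\widetilde O\!\left(\sum_i n_iw_{\max}^2\right) = \widetilde O(nw_{\max}^2)$ for all the convolutions together, plus the $O(n)$ cost of partitioning $J$ into $J_1,\ldots,J_{d_{\#}}$. Finally, to recover the answer, the last entry of the final inverse solution vector $A$ (of length $nw_{\max}$) tells us, for each target weight $k$, the minimum processing time needed to schedule early a subset of jobs of total weight at least $k$ in an EDD schedule; we scan $A$ for the largest $k$ with $A[k] \le d_{\max}$ (and more generally must respect that the prefix up to each due date $d^{(i)}$ fits, which is exactly what the iterative convolution with the per-due-date vectors enforces), and return $\sum_j w_j - k$. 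This scan is $\widetilde O(nw_{\max})$ time.

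Putting the pieces together, the total running time is $\widetilde O(n + nw_{\max}^2) = \widetilde O(nw_{\max}^2)$. To match the stated bound $\widetilde O(n^2 + d_{\max}w_{\max}^2)$, I would observe that we may assume $w_{\max} \le d_{\max}$ is not required, but we can always cap the lengths of all inverse solution vectors at $d_{\max}$ entries past the point where the minimum processing time exceeds $d_{\max}$ (since any subset with processing time exceeding $d_{\max}$ can never be fully early), so each vector effectively has length $\min\{n_iw_{\max}, d_{\max}\cdot(\text{something})\}$; more simply, a length-$d_{\max}w_{\max}$ truncation suffices and, using that $n_i \le n$, the convolution in iteration $i$ via the $(\min,+)$-variant of \cref{thm:Concave} on vectors truncated to relevant length costs $\widetilde O(n_iw_{\max} + d_{\max}w_{\max}^2/?)$; I would reconcile this with the $\widetilde O(n^2 + d_{\max}w_{\max}^2)$ target by splitting the cost as $\widetilde O(\sum_i n_iw_{\max}) \le \widetilde O(n^2)$ when $w_{\max}\le n$ does not hold — here the $n^2$ term absorbs the $d_{\#}$-fold repetition of the $n_i w_{\max}$ factors — together with the $\widetilde O(d_{\max}w_{\max}^2)$ needed when the truncated vectors dominate.

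The main obstacle I expect is precisely this last reconciliation: showing that summing the per-iteration bounds of \cref{lem:ConcaveSecond} over $d_{\#}$ iterations, combined with the truncation of inverse vectors to length $O(d_{\max})$ (valid because jobs with processing time exceeding $d_{\max}$ are useless), collapses to $\widetilde O(n^2 + d_{\max}w_{\max}^2)$ rather than the naive $\widetilde O(d_{\#} n w_{\max}^2)$. The key insight is that $\sum_i n_i = n$ bounds the total work proportional to the $n_i$'s by $\widetilde O(n w_{\max}) = \widetilde O(n^2)$ (as $w_{\max} \le n$ may be assumed after a trivial preprocessing, or the bound holds since we need only $w_{\max}\le n$ for the term $nw_{\max}\le n^2$), while the work proportional to the full vector length is bounded once globally by $\widetilde O(d_{\max}w_{\max}^2)$ since we only ever maintain one running vector $A$ of truncated length $O(d_{\max})$ and each $(\min,+)$-convolution with a $w$-step concave piece over this vector costs $\widetilde O(d_{\max})$, repeated $O(w_{\max})$ times per due date and $O(d_{\#}) \le O(w_{\max})$-bounded... — the clean way to phrase it is that the convolutions touching the long running vector contribute $\widetilde O(d_{\#} \cdot w_{\max} \cdot d_{\max})$ which we bound by $\widetilde O(d_{\max} w_{\max}^2)$ using $d_{\#} \le w_{\max}$ after merging due dates whose job sets are empty of distinct weights, and the remaining setup work is $\widetilde O(n^2)$.
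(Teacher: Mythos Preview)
Your first half is right and matches the paper: summing the per-iteration cost $\widetilde O(n_i w_{\max}^2)$ from \cref{lem:ConcaveSecond} over all $i$ gives $\widetilde O(n w_{\max}^2)$ for the whole algorithm.

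The gap is in your second half, where you try to turn $\widetilde O(n w_{\max}^2)$ into $\widetilde O(n^2 + d_{\max} w_{\max}^2)$. Your attempted arguments do not work. Truncating inverse solution vectors ``at $d_{\max}$ entries'' is not meaningful: inverse vectors are indexed by target \emph{weight}, not processing time, so a cutoff at $d_{\max}$ has no natural interpretation. The assumption $w_{\max}\le n$ cannot be achieved by ``trivial preprocessing'' in this problem. And the claim $d_{\#}\le w_{\max}$ after ``merging due dates whose job sets are empty of distinct weights'' is simply false: the number of distinct due dates is unrelated to the maximum weight.

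The paper's fix is much simpler and does not touch the algorithm at all. Having already established $\widetilde O(n w_{\max}^2)$, it does a case split on the input: if $n\ge d_{\max}$, run Lawler--Moore, which takes $O(d_{\max} n)=O(n^2)$ time; if $n<d_{\max}$, run the inverse-vector algorithm, which then takes $\widetilde O(n w_{\max}^2)\le \widetilde O(d_{\max} w_{\max}^2)$ time. Either way the running time is bounded by $\widetilde O(n^2 + d_{\max} w_{\max}^2)$. That one-line observation is the missing idea.
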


\begin{proof}
Consider the variant of Algorithm~\ref{alg:Main} that uses inverse solution vectors. We can compute the inverse solution vector of each \SDProb instance $J_i$ in $\widetilde{O}(n_iw^2_{\max})$ time using the Knapsack algorithm of Axiotis and Tzamos~\cite{AxiotisTzamos19}. Then we can compute the convolution of iteration $i$ in $\widetilde{O}(n_iw^2_{\max})$ time according to Lemma~\ref{lem:ConcaveSecond}. Thus, altogether this gives us an algorithm running in time $\widetilde{O}(\sum_i n_iw^2_{\max}) = \widetilde{O}(nw^2_{\max})$.

To obtain the running time in the lemma statement, we distinguish between two cases: If~$n \geq d_{\max}$ then Lawler and Moore's algorithm requires $O(n^2)$ time. Otherwise, if $n < d_{\max}$, the algorithm discussed above requires $\widetilde{O}(d_{\max}w^2_{\max})$ time. Using one of the two algorithms accordingly to whether $n \geq d_{\max}$ or not, gives us the desired running time. 
\end{proof}

\subsection{Using bounded monotone solutions vectors}

As a final speedup to the $(\max,+)$ convolutions preformed in each iteration of Algorithm~\ref{alg:Main}, we prove the following:
\begin{lemma}
\label{lem:BoundedMonotone}%
One can compute the $(\max,+)$-convolution at any iteration of Algorithm~\ref{alg:Main} in $\widetilde{O}((d_{\max}+nw_{\max})^{1.5})$ time.
\end{lemma}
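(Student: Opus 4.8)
Recall from Chi~\emph{et al.}~\cite{DChiDX022} that the $(\max,+)$-convolution of two monotone non-decreasing integer vectors of length $O(N)$ with entries in $\{0, \ldots, O(N)\}$ can be computed in $\widetilde{O}(N^{1.5})$ time. The plan is to cast the two vectors convolved in iteration~$i$ of Algorithm~\ref{alg:Main} into this form. First I would observe that they already almost are: by Lemma~\ref{lem:Correctness} the input vector $A$ is a solution vector for $J_1 \cup \cdots \cup J_{i-1}$ and $B = A_i$ is a solution vector for $J_i$, so both are monotone non-decreasing, start at $0$, have length $O(d_{\max})$ (their top indices being $d^{(i-1)}$ and $d^{(i)}$), and have every entry equal to the weight of a subset of at most $n$ jobs, hence at most $\sum_j w_j \le n w_{\max}$.

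The only mismatch is that the length bound $O(d_{\max})$ and the entry bound $O(n w_{\max})$ need not be of the same order, whereas the result above wants a single parameter $N$. I would fix this by padding: set $N := d_{\max} + n w_{\max}$ and extend both vectors to length $N + 1$ by repeating their last entry, that is, $\widehat{A}[k] = A[\min\{k, d^{(i-1)}\}]$ and $\widehat{B}[k] = B[\min\{k, d^{(i)}\}]$ for $0 \le k \le N$. Padding this way preserves monotonicity and keeps all entries bounded by $n w_{\max} \le N$, so $\widehat{A}$ and $\widehat{B}$ meet the hypotheses of \cite{DChiDX022} with parameter $N$; hence $\widehat{A} \oplus \widehat{B}$ can be computed in $\widetilde{O}(N^{1.5}) = \widetilde{O}((d_{\max} + n w_{\max})^{1.5})$ time, and the padding itself costs $O(N)$.

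It then remains to verify that the padding leaves untouched the part of the convolution that Algorithm~\ref{alg:Main} actually uses. The output of iteration~$i$ is read only in its first $d^{(i)} + 1$ entries (it is meant to be the solution vector for $J_1 \cup \cdots \cup J_i$), so it suffices to show $(\widehat{A} \oplus \widehat{B})[\ell] = (A \oplus B)[\ell]$ for all $\ell \le d^{(i)}$. Fix such an $\ell$. In the maximum defining $(\widehat{A} \oplus \widehat{B})[\ell]$, which ranges over $0 \le k \le \ell$, every summand with $k \le d^{(i-1)}$ equals $A[k] + B[\ell - k]$ (here $\ell - k \le \ell \le d^{(i)}$, so $\widehat{B}[\ell - k] = B[\ell - k]$), and is thus a summand of $(A \oplus B)[\ell]$; every summand with $d^{(i-1)} < k \le \ell$ equals $A[d^{(i-1)}] + B[\ell - k]$, which by monotonicity of $B$ (using $\ell - k < \ell - d^{(i-1)}$) is at most $A[d^{(i-1)}] + B[\ell - d^{(i-1)}]$, a summand already accounted for at $k = d^{(i-1)}$. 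Hence $(\widehat{A} \oplus \widehat{B})[\ell] = (A \oplus B)[\ell]$, and restricting $\widehat{A} \oplus \widehat{B}$ to its first $d^{(i)} + 1$ coordinates returns the desired vector.

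I expect no genuine obstacle beyond this bookkeeping: the substance is only the observation that solution vectors are monotone with polynomially bounded entries, together with the padding trick needed to meet the single-parameter hypothesis of \cite{DChiDX022}; the one point that actually requires checking is the correctness of the padding on the window $\{0, \ldots, d^{(i)}\}$, done above. As with the algorithm of Theorem~\ref{thm:CyganUpperBound}, the convolution subroutine obtained this way is randomized, so Algorithm~\ref{alg:Main} instantiated with it is randomized as well.
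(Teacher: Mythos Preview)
Your proposal is correct and follows essentially the same approach as the paper: observe that both vectors in iteration~$i$ are monotone non-decreasing, have length $O(d_{\max})$, and entries bounded by $n w_{\max}$, then invoke Chi \emph{et al.}~\cite{DChiDX022}. The only cosmetic difference is that the paper quotes the Chi \emph{et al.} result in its two-parameter form (length~$n$, value bound~$b$, running time $\widetilde{O}((n+b)^{1.5})$), which lets it apply directly without padding; your single-parameter formulation forces the padding step, whose correctness you verify carefully, but this is just unpacking the same reduction.
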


\begin{lemma}
\label{lem:Speedup3}
Algorithm~\ref{alg:Main} can be implemented to run in $\widetilde{O}(n^2+d_{\#}(d_{\max}w_{\max})^{1.5})$ time. 
\end{lemma}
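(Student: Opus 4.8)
The plan is to mirror the structure of the proofs of Lemmas~\ref{lem:Speedup1} and~\ref{lem:Speedup2}: bound the cost of computing the initial solution vectors $A_1,\dots,A_{d_\#}$ via an existing Knapsack algorithm, bound the cost of the $d_\#$ convolution steps via Lemma~\ref{lem:BoundedMonotone}, sum these up, and then patch the $n$-dependence by dispatching to Lawler and Moore when $n$ is large. Concretely, each solution vector $A_i$ for the \SDProb instance $J_i$ can be computed by the Bringmann--Cassis algorithm~\cite{BringmannC22} in $\widetilde{O}((d^{(i)}+n_iw_{\max})^{1.5}) = \widetilde{O}((d_{\max}+n_iw_{\max})^{1.5})$ time; summing over $i$ and using $\sum_i n_i = n$ together with subadditivity of $x\mapsto x^{1.5}$ (applied only to the $d_{\max}$ term across $d_\#$ summands) gives $\widetilde{O}(\sum_i (d_{\max}+n_iw_{\max})^{1.5}) = \widetilde{O}(d_\#(d_{\max}+nw_{\max})^{1.5})$, which is at most $\widetilde{O}(d_\#(d_{\max}w_{\max})^{1.5} + d_\#(nw_{\max})^{1.5})$. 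Each of the $d_\#$ iterations of line~4 costs $\widetilde{O}((d_{\max}+nw_{\max})^{1.5})$ by Lemma~\ref{lem:BoundedMonotone}, for the same total. So the unconditional running time is $\widetilde{O}(d_\#(d_{\max}w_{\max})^{1.5} + d_\#(nw_{\max})^{1.5})$.

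Next I would remove the unwanted $\widetilde{O}(d_\#(nw_{\max})^{1.5})$ term, exactly as in Lemma~\ref{lem:Speedup2}: if $n \geq d_{\max}$, simply run Lawler and Moore's algorithm (Theorem~\ref{thm:LawlerMoore}), which takes $O(d_{\max}n) \le O(n^2)$ time; if $n < d_{\max}$, then $n^{1.5} < d_{\max}^{1.5} \le (d_{\max}w_{\max})^{1.5}$ and also $d_\# \le n$, so $d_\#(nw_{\max})^{1.5} \le d_\#(d_{\max}w_{\max})^{1.5}$ and the bound collapses to $\widetilde{O}(d_\#(d_{\max}w_{\max})^{1.5})$. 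Choosing between the two algorithms according to whether $n \geq d_{\max}$ yields the claimed bound $\widetilde{O}(n^2 + d_\#(d_{\max}w_{\max})^{1.5})$. One must check that $\widetilde{O}(n^2)$ genuinely subsumes the remaining $n$-dependence in the second branch, which it does since there $n < d_{\max}$ makes $n^2$ negligible; the $n^2$ term in the statement is there only to cover the first branch.

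The main obstacle is verifying the hypothesis of Lemma~\ref{lem:BoundedMonotone} (and of the Bringmann--Cassis algorithm) at every iteration of Algorithm~\ref{alg:Main}, namely that the partial solution vector $A$ (the convolution $A_1 \oplus \cdots \oplus A_{i-1}$) is monotone non-decreasing with values bounded by $O(d_{\max} + nw_{\max})$, or more precisely that it fits the bounded-monotone regime after the obvious normalization. Monotonicity is immediate from the definition of a solution vector (larger processing-time budget cannot decrease the achievable weight), and it is preserved under $(\max,+)$-convolution. For the value bound, every entry of every $A_i$, and hence of $A$, is at most $\sum_j w_j \le nw_{\max}$, while the length of the vectors is at most $d_{\max}+1$; so after reindexing the problem is a $(\max,+)$-convolution of vectors of length $O(d_{\max})$ with entries in $[0, nw_{\max}]$, which is precisely the bounded-monotone setting handled in $\widetilde{O}((d_{\max}+nw_{\max})^{1.5})$ time — this is the content of Lemma~\ref{lem:BoundedMonotone}, so here I may simply invoke it. The only remaining care is bookkeeping: confirming that ``$d_\#$ convolutions, each costing $\widetilde{O}((d_{\max}+nw_{\max})^{1.5})$'' really does sum to $\widetilde{O}(d_\#(d_{\max}+nw_{\max})^{1.5})$ and that this is dominated, in the relevant branch, by $\widetilde{O}(d_\#(d_{\max}w_{\max})^{1.5})$, which follows from the case analysis above.

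\begin{proof}[Proof of Lemma~\ref{lem:Speedup3}]
Consider first running Algorithm~\ref{alg:Main}. In line~2, each solution vector $A_i$ for the \SDProb instance $J_i$ can be computed by the Bringmann--Cassis algorithm~\cite{BringmannC22} in $\widetilde{O}((d^{(i)}+n_iw_{\max})^{1.5}) = \widetilde{O}((d_{\max}+n_iw_{\max})^{1.5})$ time, where $n_i = |J_i|$. By Lemma~\ref{lem:BoundedMonotone}, each of the $d_\#$ iterations of line~4 can be performed in $\widetilde{O}((d_{\max}+nw_{\max})^{1.5})$ time. Summing, and using $\sum_i n_i = n$, the total running time is
$$
\widetilde{O}\Big(\sum_i (d_{\max}+n_iw_{\max})^{1.5} + d_\#(d_{\max}+nw_{\max})^{1.5}\Big) = \widetilde{O}\big(d_\#(d_{\max}w_{\max})^{1.5} + d_\#(nw_{\max})^{1.5}\big).
$$

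It remains to absorb the term $\widetilde{O}(d_\#(nw_{\max})^{1.5})$. We distinguish two cases. If $n \geq d_{\max}$, then we instead run Lawler and Moore's algorithm (Theorem~\ref{thm:LawlerMoore}), which takes $O(d_{\max}n) \leq O(n^2)$ time. If $n < d_{\max}$, then $d_\# \leq n < d_{\max}$, so $d_\#(nw_{\max})^{1.5} \leq d_\#(d_{\max}w_{\max})^{1.5}$, and the bound from the first paragraph becomes $\widetilde{O}(d_\#(d_{\max}w_{\max})^{1.5})$. Selecting the algorithm according to whether $n \geq d_{\max}$ or not, the running time is bounded by $\widetilde{O}(n^2 + d_\#(d_{\max}w_{\max})^{1.5})$ in both cases.
\end{proof}
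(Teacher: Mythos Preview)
Your proof is correct and follows essentially the same approach as the paper: invoke Lemma~\ref{lem:BoundedMonotone} for the $d_\#$ convolution steps, and handle the unwanted $n$-dependence by dispatching to Lawler and Moore when $n \geq d_{\max}$, so that in the remaining case $nw_{\max} \leq d_{\max}w_{\max}$ and the bound collapses. The only minor difference is that the paper computes the initial solution vectors $A_i$ using the Bateni \emph{et al.} algorithm (giving $\widetilde{O}(n_i + d_{\max}w_{\max})$ per instance) rather than Bringmann--Cassis; either choice is dominated by the convolution cost and the argument goes through unchanged.
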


\begin{proof}
As in the proof of Lemma~\ref{lem:Speedup2}, we can assume that $n \leq d_{\max}$ by adding an additional $O(n^2)$ factor to the running time analysis of our algorithm. Thus, as $n \leq d_{\max}$, each iteration of Algorithm~\ref{alg:Main} can be performed in $\widetilde{O}((d_{\max}w_{\max})^{1.5})$ time according to Lemma~\ref{lem:Speedup3}. Moreover, the solution vector $A_i$ of each \SDProb instance $J_i$ can be compute in $\widetilde{O}(d_{\max}w_{\max})$ time by using the Bateni \emph{et al.} algorithm~\cite{BateniHSS18}. Altogether, this gives us $\widetilde{O}(n^2+d_{\#}(d_{\max}w_{\max})^{1.5})$ time for the entire algorithm. 
\end{proof}

The proof of Lemma~\ref{lem:BoundedMonotone} follows directly from the recent breakthrough result by Chi \emph{et al.}~\cite{DChiDX022} for computing the $(\max,+)$-convolution between bounded monotone vectors.
\begin{definition}
An integer vector $A = (A[k])^n_{k=0}$ is \emph{$b$-bounded monotone} if for $1 \leq k < m$ we have $A[k]\le A[k+1]\le b$.
\end{definition}
\begin{theorem}[\cite{DChiDX022}]
\label{thm:BoundedMonotone}
The $(\max,+)$-convolution between two $b$-bounded monotone vectors of length at most $n$ can be computed in $\widetilde{O}((n+b)^{1.5})$ time.
\end{theorem}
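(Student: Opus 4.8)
My plan is to prove Theorem~\ref{thm:BoundedMonotone} by reducing the bounded monotone $(\max,+)$-convolution to a small number of $(\max,+)$-products of bounded monotone square matrices, and then to compute each such product in subcubic time by exploiting both the monotonicity and the value bound $b$; throughout write $N=n+b$. (By negating the inputs, a $(\max,+)$-product is interchangeable with a $(\min,+)$-product, so I will freely switch to whichever is more convenient, cf.\ Definition~\ref{def:MaxPlusConv}.) First I would apply the standard block decomposition: tile the index domain $\{0,\dots,n\}$ into blocks of a length $s$ to be fixed later. Writing $k=is+r$ and $\ell=Qs+R$, each output entry $C[\ell]$ is a maximum over block pairs whose block-indices sum to $Q$ or to $Q-1$ (the carry case), so the whole vector $C$ is recovered from a family of $(\max,+)$-products of matrices of side length $s$ whose entries are the block entries of $A$ and $B$. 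The key point is that, because $A$ and $B$ are non-decreasing with values in $[0,b]$, every block, and hence every one of these matrices, is again monotone and $b$-bounded. Choosing $s$ to balance the number of products against the per-product cost will, once the product is fast enough, yield the $(n+b)^{1.5}$ bound.

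A clean structural ingredient I would use to organize the argument is that a monotone $b$-bounded vector $A$ is completely described by its $b{+}1$ breakpoints $a_t=\min\{k:A[k]\ge t\}$, which themselves form a monotone sequence of length $b{+}1$ with values in $[0,n]$. A short calculation shows that $C[\ell]=\max\{\,v_A+v_B : a_{v_A}+\beta_{v_B}\le \ell\,\}$, where $\beta$ is the breakpoint sequence of $B$; equivalently, $C$ is the monotone inverse of the $(\min,+)$-convolution of the two breakpoint sequences $a$ and $\beta$. This exchanges the roles of the length $n$ and the bound $b$, so the problem is self-dual under $n\leftrightarrow b$, and I may assume $b\le n$ and then recover $C$ from this $(\min,+)$-convolution in linear time. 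The same level-set viewpoint drives the speedup inside each matrix product: monotonicity forces the optimal contraction index (the witness) for output position $\ell$ to move monotonically as $\ell$ grows, while the bound $b$ guarantees that the entries fall into few value bands.

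The technical core, which I expect to be the main obstacle, is computing the $(\min,+)$-product of two monotone $b$-bounded $s\times s$ matrices exactly and fast enough. The strategy is to quantize the entries into value bands of width about $\sqrt{b}$: collapsing each band to its floor turns the band-restricted product into a Boolean (or bounded-degree polynomial) product that fast rectangular matrix multiplication computes in truly subcubic time, at the cost of an additive error bounded by the band width. Monotonicity of the witness structure is then used to correct these errors \emph{exactly} and to stitch the band-level answers into the true min-plus product, rather than leaving an approximation. Balancing the band width against the block length $s$, so that the total work of the $\widetilde O(N/s)$ matrix products matches the combination overhead, is what produces the exponent $1.5$. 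The delicate part throughout is making the quantization exact and keeping every intermediate matrix bounded and monotone so that the fast product applies at each step; this exact subcubic monotone min-plus product is precisely the breakthrough of Chi \emph{et al.}~\cite{DChiDX022} on which Theorem~\ref{thm:BoundedMonotone} rests.
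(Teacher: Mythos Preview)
The paper does not prove Theorem~\ref{thm:BoundedMonotone} at all. It is stated as a cited result from Chi \emph{et al.}~\cite{DChiDX022} and used purely as a black box: immediately after the statement, the paper simply observes that the two solution vectors $A$ and $B$ arising at iteration~$i$ of Algorithm~\ref{alg:Main} are monotone non-decreasing with entries bounded by $nw_{\max}$, and invokes the theorem to conclude Lemma~\ref{lem:BoundedMonotone}. There is no proof to compare your proposal against.

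Your proposal, by contrast, is a high-level sketch of the internals of the Chi \emph{et al.}\ algorithm itself: the block decomposition to matrix products, the $n\leftrightarrow b$ duality via breakpoint sequences, quantization into $\sqrt{b}$-width value bands handled by fast (polynomial) matrix multiplication, and the monotone witness structure used to recover exact answers. That is a fair-spirited outline of the ideas in~\cite{DChiDX022}, but it goes well beyond what this paper does or needs. For the purposes of this paper, the correct ``proof'' is simply the citation; if you are writing up this section, a one-line appeal to~\cite{DChiDX022} suffices, and any attempt to re-derive the $\widetilde{O}((n+b)^{1.5})$ bound here would be out of scope.
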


Consider some iteration $i \in \{2,\ldots,d_{\#}\}$ of Algorithm~\ref{alg:Main}, and let $A$ be the solution vector for the jobs sets $J_1 \cup \cdots \cup J_{i-1}$ computed at iteration $i-1$, and $B$ be the solution vector of $J_i$. It is easy to see that the values of both $A$ and $B$ are monotonically non-decreasing. Moreover the value of each entry cannot exceed $nw_{\max}$, as this is an upper bound on the total weight of all jobs. Thus, both $A$ and $B$ are $nw_{\max}$-bounded monotone vectors, and so Lemma~\ref{lem:BoundedMonotone} follows directly from Theorem~\ref{thm:BoundedMonotone}.

\section{Lower Bounds}
\label{sec:LowerBound}%

In this section we present our lower bounds for \Prob, namely we prove Theorem~\ref{thm:LowerBound}. We begin by stating the $\forall \exists$ Strong Exponential Time Hypothesis ($\forall \exists$-SETH) on which our lower bounds are based upon. 
\begin{hypothesis}[\cite{AbboudBHS20}]
\label{hyp:FESETH}%
There is no $0 < \alpha  < 1$ and $\varepsilon > 0$ such that for all $k \geq 3$ there is an $O(2^{(1 - \varepsilon)n})$ time algorithm for the following problem: Given a $k$-CNF formula $\phi$ on $n$ variables $x_1,\ldots,x_n$, decide whether for all assignments to $x_1,\ldots,x_{\lceil \alpha\cdot n \rceil}$ there exists an assignment to the rest of the variables that satisfies $\phi$, that is, whether: 
\[
\forall x_1,\ldots,x_{\lceil \alpha\cdot n \rceil} \exists x_{\lceil \alpha\cdot n \rceil+1},\ldots, x_n: \phi(x_1,\ldots,x_n) = \text{true}. 
\]
\end{hypothesis}

Our lower bounds are provided via a reduction from the AND Subset Sum problem~\cite{AbboudBHS20}. Recall that in Subset Sum, we are given a set of integers $X$ and a target $t$, where each integer $x\in X$ is in the range $0 < x \leq t$. The goal is to determine whether there is a subset $Y \subseteq X$ whose elements sum up to exactly~$t$; \emph{i.e.} $\sum_{x \in Y} x = t$. If this is in fact the case, $(X,t)$ is a \emph{yes-instance}. In the corresponding AND Subset Sum problem, we are given $N$ many Subset Sum instances $(X_1,t_1),\ldots,(X_N,t_N)$, and the goal is to determine whether all instances are yes-instances. We have the following relationship between $\forall \exists$-SETH (Hypothesis~\ref{hyp:FESETH}) and the AND Subset Sum problem. 
\begin{theorem}[\cite{AbboudBHS20}]
\label{thm:ANDSubsetSum}%
Assuming $\forall \exists$-SETH, there are no $\delta,\varepsilon > 0$ such that the following problem can be solved in $\widetilde{O}(N^{1+\delta-\varepsilon})$ time: Given $N$ Subset Sum instances, each with $O(N^\varepsilon)$ integers and target $O(N^\delta)$, determine whether all of these instances are yes-instances.
\end{theorem}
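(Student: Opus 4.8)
The plan is to prove the contrapositive: a $\widetilde{O}(N^{1+\delta-\varepsilon})$-time algorithm for AND Subset Sum, with $O(N^\varepsilon)$ integers and target $O(N^\delta)$ per instance, would give \emph{for a single pair} $(\alpha,\varepsilon')$ \emph{and every} $k\ge 3$ an $O(2^{(1-\varepsilon')n})$-time algorithm for the $\forall\exists$-$k$-SAT problem of \cref{hyp:FESETH}, thereby refuting $\forall\exists$-SETH. The reduction exploits the structural match between the two problems. Given a $k$-CNF $\phi$ on $n$ variables split into a universal block $x_1,\ldots,x_{\lceil\alpha n\rceil}$ and an existential block, I would enumerate all $N=2^{\lceil\alpha n\rceil}$ assignments $u$ to the universal block; the quantifier ``$\forall$'' is thus realized by the ``all instances are yes'' semantics of AND Subset Sum. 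For each $u$ I form the residual formula $\phi_u$ on the existential variables and, via a standard SAT-to-Subset-Sum encoding, build a single Subset Sum instance $(X_u,t_u)$ that is a yes-instance \emph{iff} $\phi_u$ is satisfiable; this realizes the inner ``$\exists$''. Correctness is then immediate: the AND Subset Sum instance is a yes-instance iff every residual formula is satisfiable, i.e.\ iff $\phi$ is $\forall\exists$-true.

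For the quantitative part I would set $N=2^{\lceil\alpha n\rceil}$, so $\log_2 N\approx\alpha n$. The number of integers produced by the SAT-to-Subset-Sum encoding is $\mathrm{poly}(n)=\mathrm{polylog}(N)$, which fits inside the generous budget $O(N^\varepsilon)$ for any fixed $\varepsilon>0$, so the item-count constraint is never binding. Building the $N$ instances costs $\widetilde{O}(N^{1+o(1)})$, and the hypothetical algorithm then solves them in $\widetilde{O}(N^{1+\delta-\varepsilon})=\widetilde{O}(2^{\alpha(1+\delta-\varepsilon)n})$ time; the overall running time is $O(2^{(1-\varepsilon')n})$ with $\varepsilon'=1-\max\{\alpha,\alpha(1+\delta-\varepsilon)\}$, which is positive provided $\alpha<\tfrac{1}{1+\delta-\varepsilon}$ (and $\alpha<1$). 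The only remaining requirement is that each target fits the bound $t_u=O(N^\delta)$, i.e.\ $\log_2 t_u\le\delta\alpha n$. If the encoding can be made to satisfy $\log_2 t_u = (1+o(1))\,r$, where $r=n-\lceil\alpha n\rceil$ is the number of existential variables, then this forces $1-\alpha\lesssim\delta\alpha$, i.e.\ $\alpha\ge\tfrac{1}{1+\delta}$. Since $\tfrac{1}{1+\delta}<\tfrac{1}{1+\delta-\varepsilon}$, the admissible window for $\alpha$ is nonempty, and I would fix any $\alpha$ inside it; crucially, this $\alpha$ and the resulting $\varepsilon'$ are independent of $k$, which is exactly what \cref{hyp:FESETH} requires one to contradict.

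The step I expect to be the genuine obstacle is precisely the last one: ensuring $\log_2 t_u=O(r)$ with a constant that is \emph{independent of $k$ and of the clause count}, uniformly over all $k\ge 3$. The textbook encoding places one digit per clause in a base exceeding the maximal column sum, giving $\log_2 t_u=\Theta\!\big((r+s)\log k\big)$ with $s$ the number of residual clauses; both the base factor $\log k$ and the clause count $s$ (which near the satisfiability threshold scales like $2^k\cdot r$) blow the target exponent up with $k$, so no single $\alpha<1$ could keep $t_u\le N^\delta$ for all $k$. Overcoming this is the heart of the argument of \cite{AbboudBHS20}: one must sparsify the residual formulas to linear clause density and use a bounded-slack, bounded-base clause gadget so that the target exponent per existential variable is an absolute constant, pushing any unavoidable $k$-dependent overhead into the item count (where the $O(N^\varepsilon)$ budget is exponentially large) rather than into the target. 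Once the encoding achieves a $k$-uniform bound $\log_2 t_u\le\delta\alpha n$, the parameter window above closes the argument and contradicts $\forall\exists$-SETH.
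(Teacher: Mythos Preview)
The paper does not prove this theorem; it is quoted verbatim from \cite{AbboudBHS20} and used as a black box (in the proof of Corollary~\ref{cor:LowerBounds} and, through that, Theorem~\ref{thm:LowerBound}). There is therefore no in-paper proof to compare your proposal against.

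That said, your reconstruction is a faithful outline of the argument in the cited source. The skeleton---enumerate all $N=2^{\lceil\alpha n\rceil}$ assignments to the universal block, reduce each residual $k$-CNF on the remaining $(1-\alpha)n$ variables to a Subset Sum instance, and feed the resulting $N$ instances to the assumed AND Subset Sum algorithm---is exactly how the $\forall\exists$ quantifier alternation is matched to the ``all instances yes'' semantics. Your parameter arithmetic (the admissible window $\tfrac{1}{1+\delta}\le\alpha<\tfrac{1}{1+\delta-\varepsilon}$ once the target exponent per existential variable is essentially~$1$) is correct, and you have put your finger on the genuine technical obstacle: the naive Karp-style encoding has target exponent $\Theta((r+s)\log k)$, which blows up with $k$ and would prevent a single $k$-independent choice of $\alpha$. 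The fix you sketch---apply the Sparsification Lemma so the clause count is linear in $r$, then use the refined SAT-to-Subset-Sum gadget that pushes all $k$-dependence into the item count (where the $O(N^\varepsilon)$ budget is slack) rather than into the target---is precisely the content of the earlier SETH-based Subset Sum lower bound that \cite{AbboudBHS20} invokes instance by instance. So your proposal is sound, but it is a proof that belongs in \cite{AbboudBHS20}, not in the present paper, which simply cites the result.
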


Theorem~\ref{thm:ANDSubsetSum} above was used in~\cite{AbboudBHS20} to exclude a running time of $\widetilde{O}(Ns +t(Ns)^{1- \varepsilon})$ for AND Subset Sum and the related AND Partition problem. Here $N$ denotes the number of instances, $s$ is the maximal number of integers in each instance, and $t$ is the maximum target of all instances. For our purposes, we will need a slightly different bound that is given below:
\begin{corollary}
\label{cor:LowerBounds}%
Assuming $\forall \exists$-SETH, there are no $\varepsilon > 0$ and $c > 0$ such that the following problem can be solved in $\widetilde{O}((Ns)^c+(tN)^{1-\varepsilon})$ time: Given $N$ Subset Sum instances, each with at most $s$ integers and target at most $t$, determine whether all of these instances are yes-instances.
\end{corollary}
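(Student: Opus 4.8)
The plan is to derive this corollary from Theorem~\ref{thm:ANDSubsetSum} by a simple padding/rebalancing argument, so that the two parameters $s$ (number of integers per instance) and $t$ (target) appearing here are reconciled with the specific choices $O(N^\varepsilon)$ and $O(N^\delta)$ forced by the theorem. Concretely, suppose toward a contradiction that for some $c>0$ and some $\varepsilon'>0$ there were an algorithm solving the problem of the corollary in time $\widetilde{O}((Ns)^c + (tN)^{1-\varepsilon'})$. I would take an arbitrary instance of the problem in Theorem~\ref{thm:ANDSubsetSum}, with $N$ Subset Sum instances, $s = O(N^\varepsilon)$ integers each, and target $t = O(N^\delta)$, and feed it directly into the assumed algorithm. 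Its running time on this input is $\widetilde{O}((N\cdot N^\varepsilon)^c + (N^\delta\cdot N)^{1-\varepsilon'}) = \widetilde{O}(N^{(1+\varepsilon)c} + N^{(1+\delta)(1-\varepsilon')})$.

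The key step is then to choose $\varepsilon$ and $\delta$ appropriately so that both terms are $\widetilde{O}(N^{1+\delta-\eta})$ for some $\eta>0$, contradicting Theorem~\ref{thm:ANDSubsetSum}. For the second term this requires $(1+\delta)(1-\varepsilon') < 1+\delta - \eta$, i.e.\ $(1+\delta)\varepsilon' > \eta$, which holds for any $\eta < (1+\delta)\varepsilon'$; in particular it suffices to ensure $\delta \leq \varepsilon'$ say and pick $\eta = \varepsilon'/2$. For the first term we need $(1+\varepsilon)c < 1 + \delta - \eta$; since $c$ is a fixed constant and we are free to make $\delta$ as small as we like and $\varepsilon$ as small as we like (Theorem~\ref{thm:ANDSubsetSum} only forbids the existence of \emph{some} pair $\delta,\varepsilon$, so we may instantiate it with values of our choosing), this inequality can fail when $c \geq 1$. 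This is the point that needs care: the corollary's hypothesised running time allows an \emph{arbitrary} polynomial $(Ns)^c$, so we cannot beat it merely by shrinking $\varepsilon$. The fix is to first reduce to the regime where $Ns$ is itself polynomially bounded in the quantity $tN$ we care about — equivalently, to apply Theorem~\ref{thm:ANDSubsetSum} with $\delta$ chosen \emph{large} relative to $\varepsilon$ and $c$, specifically with $\delta$ large enough that $1+\delta > (1+\varepsilon)c$ while keeping $\varepsilon$ small enough that $(1+\delta)(1-\varepsilon') < 1 + \delta - \eta$ still holds. Since Theorem~\ref{thm:ANDSubsetSum} asserts there is \emph{no} pair $(\delta,\varepsilon)$ with a subquadratic-in-the-right-sense algorithm, and in particular no such pair with $\delta$ an arbitrarily large constant and $\varepsilon$ an arbitrarily small constant, this instantiation is legitimate, and we obtain the desired contradiction.

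I expect the main obstacle to be precisely this interplay between the free constant $c$ and the two regularity parameters: one has to be careful that the quantifier structure of Theorem~\ref{thm:ANDSubsetSum} (``there are no $\delta,\varepsilon$ such that\ldots'') is used in the right direction, namely that we get to \emph{pick} a convenient $(\delta,\varepsilon)$ pair for the reduction rather than having one handed to us. Once the parameters are pinned down — $\delta$ a sufficiently large constant depending on $c$, and $\varepsilon, \eta$ sufficiently small constants depending on $\varepsilon'$ and $\delta$ — the padding is immediate (no modification of the instances is even required, since an instance with $s=O(N^\varepsilon)$ integers and target $O(N^\delta)$ already has ``at most $s$ integers and target at most $t$'' in the sense of the corollary), and the running-time bookkeeping is the routine calculation sketched above. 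The remaining details are purely arithmetic and I would relegate them to a short computation verifying that the chosen constants satisfy both inequalities simultaneously.
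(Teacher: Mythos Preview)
Your proposal is correct and follows essentially the same route as the paper's proof: assume an $\widetilde{O}((Ns)^c+(tN)^{1-\varepsilon_0})$ algorithm, feed in the instances from Theorem~\ref{thm:ANDSubsetSum} with $s=O(N^\varepsilon)$ and $t=O(N^\delta)$, and then choose $\delta$ large relative to $c$ and $\varepsilon$ small enough (the paper makes the concrete choice $\delta>c$ and $\varepsilon=\min\{(\delta-c)/c,\varepsilon_0\}$) so that both exponents fall below $1+\delta-\varepsilon$. Your exposition takes a small detour through the unworkable ``$\delta$ small'' regime before arriving at the right fix, but the final argument and the quantifier handling are exactly what the paper does.
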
 

\begin{proof}
Assume that we can solve AND Subset Sum in $\widetilde{O}((Ns)^c+(tN)^{1-\varepsilon_0})$ time, for some $c > 0$ and $0 < \varepsilon_0 < 1$. Let $(X,t_1),\ldots,(X_n,t_n)$ be $N$ Subset Sum instances with $|X_i|=O(N^\varepsilon)$ and $t_i=O(N^\delta)$ for each $1 \leq i \leq N$, for some $\delta,\varepsilon > 0$ to be specified later. On these instances, this algorithm would run in time
$$
\widetilde{O}((Ns)^c+(tN)^{1-\varepsilon}) = \widetilde{O}(N^{c(1+\varepsilon)}+N^{(1-\varepsilon_0)(1+\delta)}) = \widetilde{O}(N^{c(1+\varepsilon)}+N^{1+\delta-\varepsilon_0}).
$$
We choose $\delta$ so that $\delta > c$. This allows as to choose an $\varepsilon > 0$ so that $\varepsilon = \min\{(\delta-c)/c,\varepsilon_0\}$. Thus, the running time above can be bounded by
$$
\widetilde{O}(N^{c(1+\varepsilon)}+N^{(1-\varepsilon_0)(1+\delta)}) = \widetilde{O}(N^{\delta}+N^{1+\delta-\varepsilon}) = \widetilde{O}(N^{1+\delta-\varepsilon}),
$$
violating $\forall \exists$-SETH according to Theorem~\ref{thm:ANDSubsetSum}.
\end{proof}

We next briefly describe the reduction in~\cite{AbboudBHS20} from AND Subset Sum to \Prob. The first thing to note is that while the reduction there is from AND Partition, the same construction also holds for AND Subset Sum. Now, the main idea in the construction is that we can cram $N$ Subset Sum instances into a single \Prob instance by using $d_{\#}=N$ many due dates, and by not increasing $d_{\max}$ too much. More precisely, we have the following:
\begin{lemma}[\cite{AbboudBHS20}]
\label{lem:LowerBound}%
There is an algorithm that takes as input $N$ Subset Sum instances $(X_1,t_1),\ldots,$ $(X_N,t_N)$, with $s=\max_i |X_i|$ and $t=\max_i t_i$, and outputs in $O(Nn)$ time a \Prob instance $J$ with $n$ jobs such that 
\begin{itemize}
\item[$(i)$] $J$ is a yes-instance of \Prob iff each $(X_i,t_i)$ is a yes-instances of Subset Sum.
\item[$(ii)$] $n = O(Ns)$ and $d_{\max} = O(Nt)$.
\end{itemize}
\end{lemma}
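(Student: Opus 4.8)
The plan is to lay the $N$ Subset Sum instances side by side along the time axis, give the $i$-th instance its own due date, and exploit the fact that the lemma places no bound on the weights in order to ``separate'' the instances by a geometric weight scaling, so that every optimal schedule is forced to resolve the instances one at a time, in order.

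First I would dispose of trivial cases: if $\sum_{x \in X_i} x < t_i$ for some $i$ then the $i$-th instance (hence the whole AND) is a no-instance, and I output a fixed no-instance of \Prob. Otherwise, writing $\Sigma_i = \sum_{x \in X_i} x$, $\Sigma = \sum_i \Sigma_i$, $T_i = \sum_{\ell \le i} t_\ell$, and fixing an integer $K > 2\Sigma$, the instance $J$ consists of, for every $i$ and every $x \in X_i$, one \emph{item job} with processing time $p = x$, weight $w = x\cdot K^{\,N-i}$, and due date $d^{(i)} = T_i$; the threshold on the weighted number of tardy jobs is $\tau = \sum_i K^{\,N-i}(\Sigma_i - t_i)$. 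Then $n = \sum_i |X_i| \le Ns$, the due dates are distinct with $d_{\max} = T_N = \sum_i t_i \le Nt$, and $J$ is clearly computable in $O(Nn)$ time.

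For correctness I would reason only about EDD schedules, via Lemma~\ref{lem:EDD}. In such a schedule the early item jobs of instance $\ell$ realize some subset sum $\sigma_\ell$ of $X_\ell$, subject to the cumulative feasibility constraints $\sum_{\ell \le i}\sigma_\ell \le T_i$ for all $i$, and the weighted number of tardy jobs equals $\Sigma' - \sum_\ell K^{\,N-\ell}\sigma_\ell$, where $\Sigma' = \sum_i K^{\,N-i}\Sigma_i$ is the total item weight. For completeness: if every $(X_i,t_i)$ is a yes-instance, scheduling early exactly one exact subset $Y_i \subseteq X_i$ of each instance is feasible (the last job of block $i$ completes precisely at $T_i = d^{(i)}$) and has weighted tardiness exactly $\tau$; together with the lexicographic argument below this is optimal, so the minimum is exactly $\tau$ and $J$ is a yes-instance. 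For soundness: since $K > 2\Sigma$, maximizing $\sum_\ell K^{\,N-\ell}\sigma_\ell$ is lexicographic in $(\sigma_1,\sigma_2,\ldots)$, so an optimal schedule takes $\sigma_1$ to be the largest subset sum of $X_1$ that is $\le t_1$, then $\sigma_2$ the largest subset sum of $X_2$ that is $\le T_2 - \sigma_1$, and so on. If $i_0$ is the first no-instance, this forces $\sigma_\ell = t_\ell$ for $\ell < i_0$, hence $\sigma_{i_0} \le t_{i_0} - 1$; bounding the tail by $\sum_{\ell > i_0} K^{\,N-\ell}\Sigma_\ell < K^{\,N-i_0}$ (again using $K > 2\Sigma$) gives $\sum_\ell K^{\,N-\ell}\sigma_\ell < \sum_\ell K^{\,N-\ell}t_\ell$, so the minimum weighted tardiness strictly exceeds $\tau$ and $J$ is a no-instance.

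The hard part is the soundness direction: one must make precise that the geometric weights turn the objective into a lexicographic optimization across instances, so that no optimal schedule ``borrows'' unused capacity from an early instance in order to over-pack a later one (which is exactly what breaks a naive side-by-side reduction), and then choose $K$ large enough to swallow the tail contribution. Everything else — feasibility of the EDD schedule in the completeness direction, and the counting $n = O(Ns)$, $d_{\max} = O(Nt)$ with $O(Nn)$ construction time — is routine bookkeeping. This is essentially the construction of Abboud et al.~\cite{AbboudBHS20} (stated there for AND Partition, but going through verbatim for AND Subset Sum).
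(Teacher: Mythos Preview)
The paper does not prove Lemma~\ref{lem:LowerBound} at all: it is quoted from~\cite{AbboudBHS20}, and the only content the paper adds is the one-sentence remark that the reduction there, stated for AND Partition, ``goes through'' for AND Subset Sum, together with the high-level summary that one crams the $N$ instances into a single \Prob instance using $d_\# = N$ due dates while keeping $d_{\max}$ small. So there is no detailed argument in the paper to compare yours against.

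Your construction is a faithful realization of exactly that idea and is correct. A couple of remarks worth making explicit when you write it up:
\begin{itemize}
\item The lemma, as stated, speaks of ``a yes-instance of \Prob'' even though \Prob is an optimization problem; you resolve this by fixing the decision threshold $\tau$, which is the right thing to do, but you should say so explicitly.
\item The lexicographic step (``since $K>2\Sigma$, maximizing $\sum_\ell K^{N-\ell}\sigma_\ell$ is lexicographic'') is the crux and deserves one more line: the reason it survives the coupling constraints $\sum_{\ell\le i}\sigma_\ell\le T_i$ is that if $\sigma_i$ is not at its maximum, one can raise it and zero out all later $\sigma_\ell$ while staying feasible, and the tail bound $\sum_{\ell>i}K^{N-\ell}\Sigma_\ell<K^{N-i}$ (which indeed follows from $K>2\Sigma$) makes this a strict improvement. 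You essentially say this, but it is worth isolating.
\item Your weights $x\cdot K^{N-i}$ are exponential in $N$, so they are not polynomially bounded; this is harmless because the lemma only constrains $n$ and $d_{\max}$, and the downstream lower bound (Theorem~\ref{thm:LowerBound}) only uses those two parameters. It does mean, however, that the ``$O(Nn)$ time'' claim for writing down $J$ must be read in a unit-cost RAM model, which is the convention throughout.
\end{itemize}
With those clarifications your argument is complete; there is no gap.
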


\begin{proof}[Proof of Theorem~\ref{thm:LowerBound}]
Suppose \Prob has an $\widetilde{O}(n^{O(1)} + d_{\max}^{1-\varepsilon})$ time algorithm for some $\varepsilon > 0$. We show that this violates $\forall \exists$-SETH, by proving a fast algorithm for AND Subset Sum. Let $(X_1,t_1),\ldots,$ $(X_N,t_N)$ be an input to AND Subset Sum, and let $s=\max_i |X_i|$ and $t=\max_i t_i$. Apply Lemma~\ref{lem:LowerBound} to obtain an equivalent \Prob instance $J$ with $n = O(Ns)$ jobs and maximum due date $d_{\max} = O(Nt)$. Applying the assumed \Prob algorithm on this instance solves the AND Subset Sum instance in $\widetilde{O}(n^{O(1)} + d_{\max}^{1-\varepsilon_0}) = \widetilde{O}((Ns)^{O(1)} + (Nt)^{1-\varepsilon})$ time, which is impossible assuming $\forall \exists$ SETH according to Corollary~\ref{cor:LowerBounds}.
\end{proof}

\section{Conclusion}

We identified new scenarios of \Prob where it is possible to improve upon Lawler and Moore's classic algorithm~\cite{LawlerMoore}. Our algorithm is based on $(\max,+)$-convolutions, and can be sped up using the recent improvements on special cases of $(\max,+)$-computations. Using only parameters $n$ and $d_{\max}$, our algorithm can be bounded by $\widetilde{O}(n+d^3_{\max})$. Thus, recalling the $\widetilde{O}((n+d^2_{\max})^{2-\varepsilon})$ of Cygan~\emph{et al.}, the obvious most important open question is the following: 
\begin{quote}
``Can \Prob be solved in $\widetilde{O}(n+d^2_{\max})$ time, or is does it require $\widetilde{O}(n+d^3_{\max})$ time? Or perhaps something in between?"
\end{quote}
Note that even an $\widetilde{O}(n^{O(1)}+d^2_{\max})$ time algorithm would be desirable. As a step towards solving this question, one can also attempt at improving the dependency on $d_{\#}$ in Theorem~\ref{thm:MainAlgorithm}, perhaps to something like $\widetilde{O}(n^{O(1)}+\sqrt{d_{\#}}d^2_{\max})$ time. 

Regarding parameters $p_{\max}$ and $w_{\max}$, currently the fastest known Knapsack algorithm with respect to these parameters alone is the $\widetilde{O}(n+\min\{p^3_{\max},w^3_{\max}\})$ time algorithm by Polak \emph{et al.}~\cite{PolakEtAl21}. Note that while their result is obtained also via $(\max,+)$-convolutions (amongst other things), it only computes a single optimal value, and not an entire solution array as all other Knapsack algorithms used in this paper. Thus, it seems challenging to adapt their techniques to \Prob. Nevertheless, any algorithm with running time of the form $\widetilde{O}(n+p^{O(1)}_{\max})$ or $\widetilde{O}(n+p^{O(1)}_{\max})$ would be interesting. Furthermore, an $\widetilde{O}(d_{\#}n+d_{\#}d_{\max}w_{\max})$ time algorithm to match the $\widetilde{O}(d_{\#}n+d_{\#}d_{\max}p_{\max})$ time algorithm of Theorem~\ref{thm:SecondSpeedup} is also desirable.

Finally, it would be very interesting to see what other scheduling problems can benefit from fast $(\max,+)$-convolutions. We believe this tool should prove useful in other unrelated scheduling settings.

\bibliographystyle{plain}
\bibliography{biblo}

\end{document}